%% file: main.tex
\documentclass[11pt,a4paper]{article}
\usepackage[a4paper,margin=30mm]{geometry}

\usepackage[utf8]{inputenc}
\usepackage{lmodern}
\usepackage{microtype}

\usepackage{bbm}

\usepackage[english]{babel}

\usepackage[table]{xcolor}
\usepackage{enumitem}
\usepackage[most]{tcolorbox}
\usepackage[normalem]{ulem} 

\usepackage{amsmath,amssymb,amsthm}
\usepackage{mathrsfs}
\usepackage{thmtools}
\usepackage{bm}

\usepackage{graphicx}
\graphicspath{{figures/}{manuscript/figures/}}
\usepackage{pdfpages}
\usepackage{siunitx}
\usepackage{booktabs}
\usepackage{longtable}

\usepackage[hidelinks]{hyperref}
\usepackage[capitalize,nameinlink,noabbrev]{cleveref}

\usepackage{setspace}

\usepackage{algorithm}
\usepackage{algpseudocode}

\definecolor{TartareInk}{HTML}{20242A}
\definecolor{TartareFrame}{HTML}{6F7D8C}
\definecolor{TartareBack}{HTML}{F7F8F6}
\definecolor{TartareSoft}{HTML}{EEF2F1}
\definecolor{TartareAccent}{HTML}{4F6F64}

\newtcolorbox{tartarebox}[2][]{
  enhanced,
  breakable,
  colback=TartareBack,
  colframe=TartareFrame,
  coltitle=white,
  colbacktitle=TartareAccent,
  fonttitle=\bfseries,
  title={#2},
  sharp corners=south,
  arc=1.2mm,
  boxrule=0.45pt,
  left=7pt,
  right=7pt,
  top=7pt,
  bottom=7pt,
  before skip=9pt,
  after skip=9pt,
  #1
}

\newtcolorbox{tartarenote}[2][]{
  enhanced,
  breakable,
  colback=TartareSoft,
  colframe=TartareFrame,
  boxrule=0.35pt,
  borderline west={2pt}{0pt}{TartareAccent},
  title={#2},
  coltitle=TartareInk,
  fonttitle=\bfseries,
  attach boxed title to top left={xshift=5pt,yshift=-2pt},
  boxed title style={colback=TartareSoft,colframe=TartareSoft,boxrule=0pt},
  left=8pt,
  right=8pt,
  top=8pt,
  bottom=7pt,
  before skip=8pt,
  after skip=8pt,
  #1
}

\newtheorem{theorem}{Theorem}
\newtheorem{lemma}[theorem]{Lemma}
\newtheorem{proposition}{Proposition}
\newtheorem{corollary}{Corollary}
\newtheorem{fact}{Fact}
\theoremstyle{definition}
\newtheorem{definition}{Definition}
\newtheorem{assumption}{Assumption}
\theoremstyle{remark}

\crefname{assumption}{Assumption}{Assumptions}
\Crefname{assumption}{Assumption}{Assumptions}

\newenvironment{maintheorem}[1][]{\begin{theorem}[#1]}{\par\noindent\emph{Proof in Supplementary Material.}\end{theorem}}

\newenvironment{mainproposition}[1][]{\begin{proposition}[#1]}{\par\noindent\emph{Proof in Supplementary Material.}\end{proposition}}

\DeclareMathOperator{\E}{\mathbb{E}}

\DeclareMathOperator{\Cov}{Cov}

\DeclareMathOperator{\diag}{diag}

\newcommand{\norm}[1]{\left\lVert #1 \right\rVert}
\newcommand{\diff}{\,\mathrm d}
\newcommand{\indep}{\perp\!\!\!\perp}

\usepackage{comment}
\excludecomment{mainproof}

\usepackage{natbib}

\newif\ifshowchanges
\showchangesfalse

\title{Gaussian Process Differential Ensembles for Joint Inference on Curves, Derivatives, and Integrals}
\author{
  Andreas Kryger Jensen\thanks{Corresponding author: Andreas Kryger Jensen, Section of Biostatistics, Department of Public Health, University of Copenhagen, {\O}ster Farimagsgade 5, DK-1353 Copenhagen K, Denmark. Email: \href{mailto:aeje@sund.ku.dk}{aeje@sund.ku.dk}. Telephone: +45 35 33 35 28.}\, and Adam Gorm Hoffmann\\
  Section of Biostatistics, Department of Public Health\\
  University of Copenhagen}
\date{}

\begin{document}

\onehalfspacing

\maketitle



\begin{abstract}
Functional data are often modeled through one likelihood-linked curve, while the scientific target is a larger state containing rates, accumulated quantities, boundary values, or nonlinear functionals of several linked levels. These targets require more than smoothing the observed curve: derivative uncertainty, cross-level covariance, and integration constants must be handled jointly. We introduce anchored Gaussian process differential ensembles, embedding an anchor \(f_0\) in a joint Gaussian state with its mean-square derivatives and repeated integrals. Integral levels add explicit Gaussian integration constants. This separates the anchor-induced covariance from finite-dimensional boundary uncertainty and clarifies why anchor-only observations do not identify independent integration constants. For stationary one-dimensional kernels, we compute the ensemble with a transformed Hilbert space Gaussian process approximation that applies derivative and integral operators to Laplacian--Dirichlet basis functions while retaining the integration-constant covariance exactly. We establish operator-level approximation bounds and conditional finite-grid posterior convergence. We introduce TARTARE, a target-aware calibration procedure for finite-rank differential ensemble approximations, to address derivative under-resolution by anchor-calibrated bases. In second-order simulations, derivative-aware calibration improves derivative posterior recovery relative to anchor-only calibration while preserving anchor and integral summaries. A motorcycle crash analysis illustrates coherent posterior inference on a coupled kinematic state and short-horizon turning-point functionals.
\end{abstract}

\noindent%
{\it Keywords:} Gaussian process; derivative estimation; functional data analysis; Bayesian computation

\section{Introduction}
Functional data are often modeled through one latent curve that is linked directly to the observations. In many applications, however, that curve is only one coordinate of the scientific state. A curve may be observed because it is measured most directly, while the quantities of interest are rates, accelerations, accumulated effects, boundary states, or event probabilities defined by several different derivative and integral levels. In biomechanics, for example, the observed or modeled quantity may be acceleration while questions concern velocity, position, and turning points. More generally, the target may be a nonlinear functional of a state vector containing an anchor curve, its derivatives, and its repeated integrals. A Bayesian analysis of such targets should propagate posterior uncertainty through the differential relationships among the state components.

The statistical difficulty is not merely that derivatives and integrals can be computed after fitting a smooth curve. Post-processing a fitted curve can obscure the covariance structure needed for inference.  Differentiating a posterior mean ignores posterior covariance and therefore cannot provide calibrated uncertainty for rates or accelerations.  Finite differences depend on the chosen grid and become unstable when derivative order increases.  Antiderivatives are not uniquely determined without integration constants, so cumulative or boundary-state inference requires boundary information in addition to the anchor-level likelihood.  Standard Gaussian process derivative and linear operator constructions provide joint covariance identities for transformed processes, but positive-index boundary uncertainty is often left implicit.  Finite-rank approximations add a further difficulty because a basis calibrated for the anchor curve can under-resolve derivative targets because differentiation emphasizes high-frequency components.

This paper develops anchored Gaussian process differential ensembles for this setting. The starting point is a Gaussian process \(f_0\), called the anchor, because it is the component linked to the observation model. Negative-index components \(f_{-1},\ldots,f_{-r}\) are mean-square derivatives of the anchor. Positive-index components \(f_1,\ldots,f_r\) are repeated definite integrals of the anchor from a reference point \(t_0\), plus Gaussian integration constants \(\kappa_1,\ldots,\kappa_r\).  The resulting object is one coherent joint Gaussian law for the derivative--anchor--integral state.  The weak construction determines all finite-dimensional distributions and covariance blocks, and under additional regularity the same law can be represented by sample paths satisfying the corresponding differential chain.

The integration constants are part of the statistical model.  The canonical repeated integral from \(t_0\) is the antiderivative whose lower-order initial values vanish at \(t_0\).  A general antiderivative differs from this canonical representative by a polynomial, and the integration constants are precisely the finite-dimensional boundary values specifying that polynomial, echoing the polynomial null spaces that arise in spline and intrinsic-Gaussian-prior formulations \citep{kimeldorf1970correspondence,wahba1978improper,matheron1973intrinsic}.  Anchor-only observations do not identify independent integration constants, and inference above the anchor therefore depends on boundary observations, virtual constraints, or prior information. Boundary-constrained Gaussian process constructions impose such information as hard linear restrictions \citep{langehegermann2021linearly,gulian2022gaussian}, whereas our main construction treats the constants as explicit Gaussian variables.  In this paper we will focus on the independent-constant construction for clarity, but the online supplement contains a more detailed development of the dependent extension.

For scalable computation, we approximate the kernel-driven operator blocks for stationary one-dimensional anchor kernels using a transformed Hilbert space Gaussian process representation based on Laplacian--Dirichlet eigenfunctions \citep{solin2020hilbert}.  Derivatives and definite integrals are applied to the basis functions rather than to dense covariance matrices, while the covariance contribution from integration constants is retained exactly.  Thus, for any prescribed finite ensemble order allowed by the regularity assumptions, the required derivative and integral blocks are computed by transforming basis functions, without deriving covariance-family-specific derivative or integral formulas.  This computational representation must be calibrated for the targets actually reported. Differentiation emphasizes high frequencies, so a finite-rank basis that is accurate for the anchor covariance can still be inadequate for derivative blocks. We therefore introduce TARTARE, Target-Aware Range and Truncation for Approximate Representations of Ensembles, as a practical calibration procedure for finite-rank implementations. TARTARE extends single-process HSGP calibration rules by choosing the computational range and truncation level against a user-specified monitoring set \(\mathcal M\subseteq\{-r,\ldots,r\}\).

The paper makes three main contributions.
\begin{enumerate}[label=(\arabic*)]
\item We define the anchored Gaussian differential ensemble construction for joint posterior inference on a likelihood-linked anchor curve, its derivatives, and repeated integrals, clarifying the joint Gaussian law of the differential ensemble and its weak mean-square differential interpretation.
\item We make positive-index integration constants explicit as finite-dimensional boundary information, separate their covariance from the anchor-induced kernel part, and show why anchor-only observations do not identify independent boundary constants.
\item It gives a transformed Laplacian eigenspectrum implementation for stationary one-dimensional anchor kernels, establishes operator-level approximation bounds and finite-grid posterior convergence conditional on fixed hyperparameters, and introduces TARTARE calibration for derivative- and ensemble-aware finite-rank inference.
\end{enumerate}

\subsection{Related work}\label{subsec:related-work}

Gaussian process derivatives, gradient kriging, and linear operator Gaussian processes provide the closest mathematical antecedents.  The covariance identities for a process and its derivatives are standard in derivative Gaussian process modeling \citep{solak2002derivative}, and more general linear transformations of Gaussian processes have been used for operator-constrained prediction and inter-domain models \citep{graepel2003solving,murraysmith2005transformations,lazaro2009interdomain,sarkka2011linear}. Related latent force models use Gaussian process priors together with differential equation operators to construct physically motivated covariance functions \citep{alvarez2013linear}.  These constructions show that derivative and operator transformed quantities can be modeled jointly with an underlying Gaussian process.  The distinction here is the anchor organization where the likelihood is attached to a specified state component, while derivative levels, integral levels, and boundary constants are represented in a single posterior state.

Integrated processes and latent rate models emphasize the complementary direction in which a latent rate or derivative process is integrated to obtain the observed curve.  Examples include Gaussian process priors on derivatives that are integrated to recover a latent function \citep{holsclaw2013gaussian}, integrated latent rate models in environmental time series \citep{cahill2015modeling}, and integrated Wiener or stochastic differential equation smoothing priors used for joint inference on functions and derivatives \citep{yue2014bayesian,zhang2024model}.  Such models are natural when the rate process is the primitive object.  The present construction instead allows the observation linked curve to be chosen as the anchor and then represents both derivatives below the anchor and repeated integrals above it with integration constants kept explicit.

Derivative estimation also has a long functional data and smoothing spline tradition. A common approach is to fit a smooth basis expansion and differentiate the fitted representation \citep{silverman1985spline,ramsay2005functional}, and related generalized smoothing approaches encode differential equation structure through derivative penalties or profiling \citep{ramsay2007parameter}. These approaches are useful for exploratory derivative summaries, but the resulting uncertainty depends on how smoothing, derivative penalties, and boundary behavior are encoded in the fitted representation. Applied and methodological Gaussian process work likewise shows that rates, stationary points, and extrema can be primary inferential targets \citep{swain2016inferring,yu2023bayesian,li2024semiparametric}, and recent theory gives positive results for plug-in Gaussian process derivative inference \citep{liu2026optimal}. The comparison made here is not with a single estimator, but with the broader post-processing workflow. When uncertainty in derivatives, integrals, boundary states, or nonlinear functionals is central, those quantities should be part of the modeled posterior state.

Hilbert space Gaussian process (HSGP) approximations provide the computational background for the finite-rank representation used in this manuscript.  The HSGP construction of \citet{solin2020hilbert} approximates stationary kernels on compact domains through Laplacian eigenfunctions and spectral densities, with related finite-domain spectral-feature approximations developed by \citet{hensman2018variational}.  \Citet{riutort2023practical} developed practical calibration rules for probabilistic programming implementations.  Our use of this approximation is target-aware in which the basis is transformed to approximate derivative and integral covariance blocks, and calibration is tied to monitored differential targets rather than only to the anchor covariance.

The rest of the paper is organized as follows.  \Cref{sec:anchored-ensembles} develops the exact anchored Gaussian differential ensemble construction.  \Cref{sec:finite-rank} develops the transformed HSGP approximation, approximation bounds, and finite-grid posterior convergence result. \Cref{sec:tartare-calibration} introduces TARTARE calibration.  \Cref{sec:simulation} presents the simulation evidence.  \Cref{sec:data} gives the motorcycle crash application, and \cref{sec:discussion} concludes.  Proofs, further details on dependent integration constants, calibration details, simulation diagnostics, and application-specific details are collected in the online supplement.

\section{Methods}\label{sec:methods}

\paragraph{Notation.}
All processes are indexed by a fixed interval $[t_0,t_1]\subset\mathbb R$. When an ensemble order is fixed, \(r\in\mathbb N\) denotes that order and signed level indices \(p,q\) range over \(\{-r,\ldots,r\}\), with \(f_0\) the anchor, negative levels denoting derivatives, and positive levels denoting integrals. We write \(\mathbf 1_{\{\cdot\}}\) for indicators, \(\mathbb N_0=\{0,1,\ldots\}\), and \(\|\cdot\|_\infty\) for the supremum norm. For an integer $q\ge0$, write $D_t^q$ for the $q$th derivative, with $D_t^0=\mathrm{Id}$, and define the repeated integral operator by
\[
(\mathcal I_{t_0}^q f)(t)=
\begin{cases}
 f(t), & q=0,\\[3pt]
 \displaystyle \frac{1}{(q-1)!}\int_{t_0}^t (t-u)^{q-1}f(u)\,du, & q>0.
\end{cases}
\]
For $q>0$, $\mathcal I_{t_0}^q f$ is the $q$-fold antiderivative whose derivatives of orders $0,\ldots,q-1$ vanish at $t_0$. General antiderivatives are obtained by adding a polynomial of degree at most $q-1$. Those finite-dimensional degrees of freedom are represented below by explicit integration constants, and not by changing the operator $\mathcal I_{t_0}^q$.

For bivariate kernels, $D_1^q$ and $D_2^q$ differentiate the first and second arguments, while $\mathcal I_{1,t_0}^qH$ and $\mathcal I_{2,t_0}^qH$ apply $\mathcal I_{t_0}^q$ to $H(\cdot,t)$ and $H(s,\cdot)$, respectively. Define the signed-order operator
\[
\mathcal A_{t_0}^{[k]}=
\begin{cases}
D_t^{-k}, & k<0,\\
\mathrm{Id}, & k=0,\\
\mathcal I_{t_0}^{k}, & k>0,
\end{cases}
\]
with analogous definitions for $\mathcal A_{1,t_0}^{[k]}$ and $\mathcal A_{2,t_0}^{[k]}$.

A Gaussian process with mean $\mu$ and covariance $C_\theta$ is denoted $\mathcal{GP}(\mu,C_\theta)$ and is characterized by Gaussian finite-dimensional distributions. For a compact interval $I$ and $a,b\in\{0,\ldots,r\}$, $C^{a,b}(I^2)$ denotes functions with continuous mixed derivatives $D_1^uD_2^v$ for $0\le u\le a$ and $0\le v\le b$, endowed with the corresponding supremum norm. All stochastic integrals and derivative limits below are interpreted in $L^2$ unless a separate pathwise statement is explicitly invoked.

\subsection{Anchored Differential Ensembles}\label{sec:anchored-ensembles}

We begin with recalling the standard Gaussian process regression model for observed data $(Y_i, t_i)_{i=1}^n$,
\begin{align*}
Y_i \mid f, \sigma^2 \overset{\indep}{\sim} N(f(t_i), \sigma^2), \qquad f \mid \theta \sim \mathcal{GP}(0, C_\theta),
\end{align*}
where $f$ is the latent Gaussian process governing the expected value of the observations. When derivatives and repeated integrals of the latent curve are also of simultaneous scientific interest, we embed $f$ in a centered collection $(f_{-r}, \ldots, f_r)$ and refer to $f_0 := f$ as the anchor, because it is the component linked directly to the observations, and the component from which all other members in the collection are derived. Negative-index components are obtained by differentiating $f_0$, and positive-index components are obtained by taking canonical repeated definite integrals of $f_0$ from the reference point $t_0$ and then adding a random polynomial that restores the desired initial values. This decomposition is central throughout the paper, and it separates the kernel-driven functional part of the model from the statistically distinct uncertainty carried by the integration constants.

To fix ideas in a visual manner, \cref{fig:single-curve-ensemble} shows ten realizations of a second-order differential ensemble, $\mathfrak{f} := (f_{-2}, f_{-1},f_0,f_1, f_{2})$, for three covariance families. Within each row, the five panels correspond to successive derivative and integral levels built from the same underlying draw, so the figure makes the chain structure of the ensemble explicit: choosing a color and moving left within the same row corresponds to differentiation and moving right corresponds to repeated integration from \(t_0\) with stochastic initial values added above the anchor. A differential ensemble is formally defined in the following definition.

\begin{figure}[htbp]
\centering
\includegraphics[width=\textwidth]{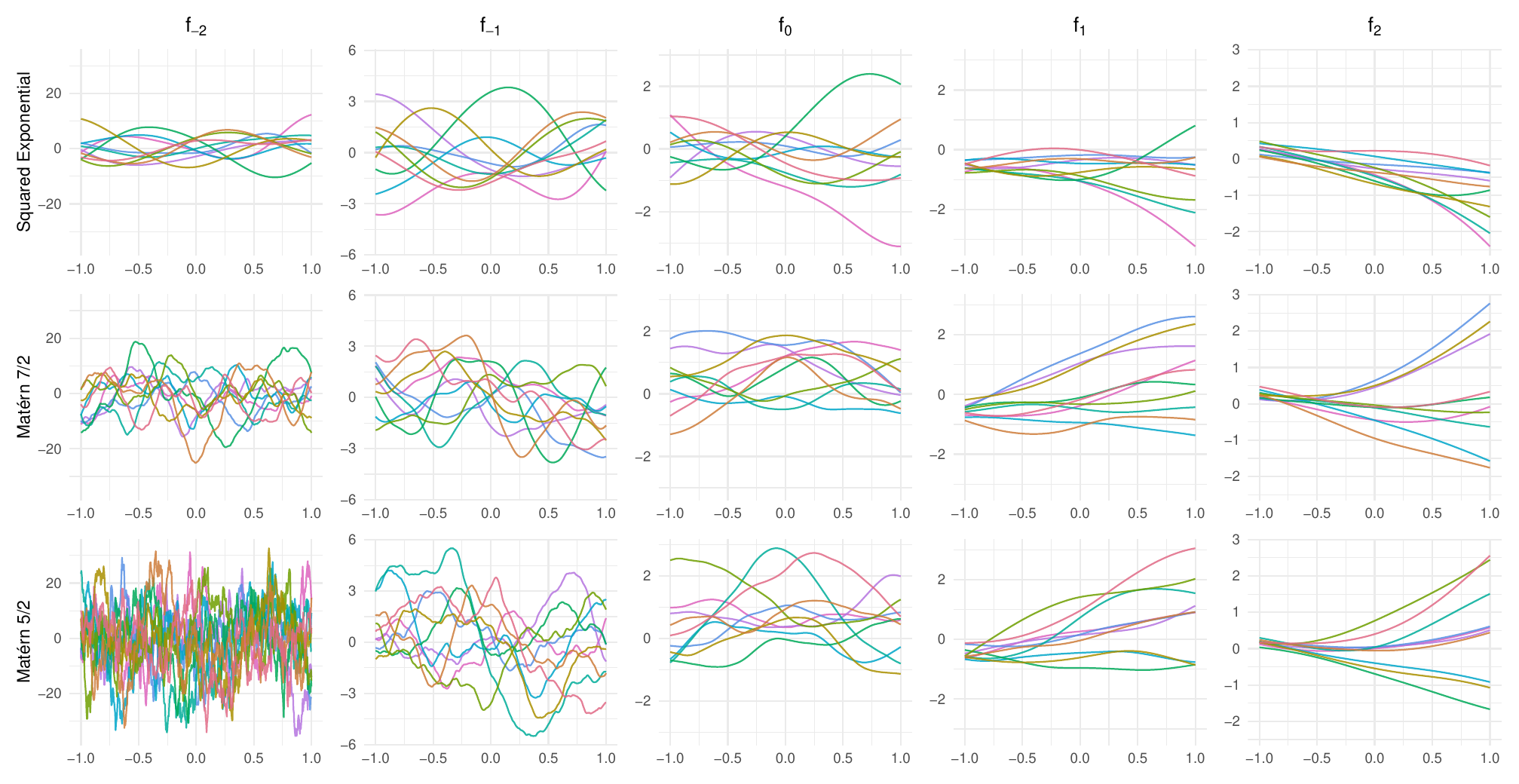}
\caption{Ten simulated realizations from a second-order differential ensemble under squared exponential, Mat\'ern $7/2$, and Mat\'ern $5/2$ covariance kernels. Each row shows the coupled levels $(f_{-2}, f_{-1}, f_0, f_1, f_2)$ from the same set of draws.}
\label{fig:single-curve-ensemble}
\end{figure}

\begin{definition}[Centered $r$-order differential ensemble]\label{def:ensemble}
Fix $r \in \mathbb{N}$. A centered $r$-order differential ensemble on $[t_0, t_1] \subset \mathbb{R}$ is a $(2r+1)$-tuple of real-valued processes
\begin{align*}
\mathfrak{f} := (f_{-r}, \ldots, f_{-1}, f_0, f_1, \ldots, f_r)
\end{align*}
indexed by $[t_0,t_1]$ such that $\E[f_p(t)] = 0$ and $\E[f_p(t)^2] < \infty$ for all $p \in \{-r,\ldots,r\}$ and $t \in [t_0,t_1]$, and satisfying:
\begin{enumerate}
\item[(i)] for each $p=-r+1,\ldots,r$ and $t\in(t_0,t_1)$,
\begin{align*}
\lim_{h \to 0} \E\left[\left(\frac{f_p(t+h) - f_p(t)}{h} - f_{p-1}(t)\right)^2\right] = 0;
\end{align*}
\item[(ii)] for \(p=1,\ldots,r\), write \(\kappa_p := f_p(t_0)\). These boundary values are the integration constants of the positive-index levels.
\end{enumerate}
\end{definition}

We first record the regularity assumptions needed for the weak Gaussian process construction.

\begin{assumption}[Mean-square regularity]\label{ass:ms-regularity}
Let $f_0 \sim \mathcal{GP}(\mu, C_\theta)$ on $[t_0, t_1]$. Assume that $\mu \in C^r([t_0, t_1])$ and $C_\theta \in C^{r,r}([t_0, t_1]^2)$.
\end{assumption}
These assumptions ensure that the mean-square derivatives of $f_0$ up to order $r$ exist and that the differentiated covariance function is well defined; see, for example, \citet{cramer1967stationary} and \citet[Sec.~9.4]{gpml}.

Throughout, we assume that the integration constants are independent of the anchor process. This keeps the covariance formula transparent and is natural when the constants represent initial-value uncertainty not determined by the anchor itself, but it is not a structural restriction. If \((f_0,\boldsymbol{\kappa})\) is instead jointly Gaussian, the defining representation and boundary identities are unchanged, while the covariance blocks acquire cross-covariance terms determined by \(\boldsymbol{\gamma}(s)=\Cov(f_0(s),\boldsymbol{\kappa})\). \Cref{appendix:dependent-integration-constants} in the Supplementary Material gives a more general construction in which the integration constants need not be independent of the anchor process. The idea is to introduce a second Gaussian process \(u_0\), coupled jointly with \(f_0\), and use the boundary derivative values of \(u_0\) as the integration constants for the ensemble built from \(f_0\), $\kappa_m = D_t^{m-1}u_0(t_0)$, $m=1,\ldots,r$. Equivalently, \(u_0\) may itself be expanded into its own differential ensemble, and the derivative side of that second ensemble supplies the boundary values needed to define the integral side of the first ensemble. This produces dependence between \(f_0\) and \(\boldsymbol{\kappa}\) through their joint Gaussian law, summarized by \(\boldsymbol{\gamma}(s)=\Cov(f_0(s),\boldsymbol{\kappa})\). The independent model used below is recovered by setting this cross-covariance to zero, \(\boldsymbol{\gamma}\equiv\mathbf 0\).

\begin{assumption}[Independent boundary constants]\label{ass:independent-boundary}
Conditional on fixed model hyperparameters, let $\boldsymbol\kappa=(\kappa_1,\dots,\kappa_r)^{\top}
\sim N(\boldsymbol\mu_\kappa,\boldsymbol\Sigma_\kappa)$ be independent of $f_0$.\end{assumption}

The next theorem constructs the Gaussian differential ensemble in the weak mean-square sense. It establishes the derivative chain, the boundary identities, and the joint Gaussian law.

\begin{maintheorem}[Weak Gaussian construction]\label{prop:gaussian-ensemble}
Assume \cref{ass:ms-regularity,ass:independent-boundary}. For $p=-r,\ldots,-1$, let $f_p$ be the $(-p)$th mean-square derivative process of $f_0$, and keep $f_0$ as the anchor process. For $p=1,\ldots,r$, define, for $t\in[t_0,t_1]$
\begin{align*}
  f_p(t)
:=
\frac{1}{(p-1)!}\int_{t_0}^t (t-u)^{p-1}f_0(u)\,du
+
\sum_{j=1}^{p}\frac{(t-t_0)^{p-j}}{(p-j)!}\kappa_j,
\end{align*}
where the integral is understood as an $L^2$-random variable.
The definite integral form separates the anchor-induced cumulative component from the finite-dimensional boundary information carried by \(\boldsymbol{\kappa}\), and the \(t_0\)-based definition yields the equivalent recursive initial-value formulation \(D_t f_p(t)=f_{p-1}(t)\), \(f_p(t_0)=\kappa_p\). Equivalent positive-index formulations are collected in \cref{appendix:alternative-positive-constructions} in the Supplementary Material.

Then \(\mathfrak f := (f_{-r},\ldots,f_r)\) satisfies the mean-square derivative chain and boundary identities in \cref{def:ensemble}. If, in addition, \(\mu = 0\) and \(\boldsymbol\mu_\kappa=\mathbf 0\), then \(\mathfrak f\) is a centered \(r\)th-order differential ensemble in the sense of \cref{def:ensemble}. Moreover, for every finite collection \((p_i,\tau_i)_{i=1}^m\) with \(p_i\in\{-r,\ldots,r\}\) and \(\tau_i\in[t_0,t_1]\), $\bigl(f_{p_1}(\tau_1),\ldots,f_{p_m}(\tau_m)\bigr)^\top$
is multivariate Gaussian. Equivalently, \(\mathbf f(t):=(f_{-r}(t), \ldots, f_r(t))^\top\) is an \(\mathbb{R}^{2r+1}\)-valued Gaussian process whose finite-dimensional distributions are determined by the mean functions
\begin{align*}
\mu_p(t)
&= (\mathcal A_{t_0}^{[p]}\mu)(t)
+ \mathbf{1}_{\{p>0\}}
\sum_{j=1}^{p}\frac{(t-t_0)^{p-j}}{(p-j)!}(\boldsymbol{\mu}_\kappa)_j,
\end{align*}
and by covariance entries $C_{pq}(s,t) = C^{\mathrm{anc}}_{pq}(s,t)+C^{\kappa}_{pq}(s,t)$ given by
\begin{align*}
C^{\mathrm{anc}}_{pq}(s,t)
&= (\mathcal A_{1,t_0}^{[p]}\mathcal A_{2,t_0}^{[q]}C_\theta)(s,t),\\
C^{\kappa}_{pq}(s,t)
&= \mathbf{1}_{\{p>0\}}\mathbf{1}_{\{q>0\}}
\sum_{j=1}^{p}\sum_{m=1}^{q}
\frac{(s-t_0)^{p-j}}{(p-j)!}
\frac{(t-t_0)^{q-m}}{(q-m)!}
(\boldsymbol{\Sigma}_\kappa)_{jm},
\end{align*}
for \(p,q\in\{-r,\ldots,r\}\). Thus \(C^{\mathrm{anc}}_{pq}\) is the anchor-induced kernel part, while \(C^{\kappa}_{pq}\) is the finite-dimensional polynomial covariance induced by the integration constants. Under \cref{ass:independent-boundary} there is no cross-covariance between these two parts.
\end{maintheorem}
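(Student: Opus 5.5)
The plan is to realize every component as an $L^2$-limit of linear functionals of the single Gaussian family $\{f_0(t)\}\cup\{\kappa_j\}$. Joint Gaussianity, the means and the covariances then all follow by passing limits through expectations. Throughout I work in the closed Gaussian Hilbert space $\mathcal H\subset L^2(\Omega)$ spanned by $\{f_0(t)-\mu(t):t\in[t_0,t_1]\}$ and $\{\kappa_j-(\boldsymbol\mu_\kappa)_j\}$. By \cref{ass:independent-boundary}, the two generating families are independent and jointly Gaussian, so every element of $\mathcal H$, plus a deterministic mean, is jointly Gaussian with every other. This is the standard fact that $L^2$-limits of Gaussian vectors are Gaussian, since characteristic functions converge.

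\emph{Negative levels.} Under \cref{ass:ms-regularity}, I will invoke the classical criterion (\citet{cramer1967stationary}). If $D_1^kD_2^kC_\theta$ exists and is continuous, then the $k$th mean-square derivative exists for $k\le r$. Moreover, $\Cov(D^jf_0(s),D^kf_0(t))=D_1^jD_2^kC_\theta(s,t)$ and $\E D^kf_0=D^k\mu$. This is proved inductively by showing that the difference quotients are Cauchy in $L^2$, using the Loève criterion on $C_\theta$. Since each $f_p$, $p<0$, lies in $\mu_p+\mathcal H$, this settles the derivative chain (i) for $p\le0$, together with the corresponding mean and covariance entries.

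\emph{Positive levels.} The map $u\mapsto f_0(u)$ is $L^2$-continuous, because $C_\theta$ is continuous. Hence $\int_{t_0}^t (t-u)^{p-1}f_0(u)\,du$ exists as an $L^2$-Riemann integral, i.e.\ as a limit of Riemann sums lying in $\mu$-shifted $\mathcal H$. Passing expectations and covariances through the limits gives $\E=\mathcal I^p_{t_0}\mu$. For the cross terms, $\Cov(\mathcal I^pf_0(s),\mathcal I^qf_0(t))$ gives $\mathcal I_{1}^p\mathcal I_2^qC_\theta$, and $\Cov(D^kf_0(s),\mathcal I^qf_0(t))$ gives $D_1^k\mathcal I_2^q C_\theta$ via Fubini for continuous integrands. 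Adding the polynomial in $\boldsymbol\kappa$ then yields $\mu_p$. By independence, the covariance splits exactly into $C^{\mathrm{anc}}_{pq}+C^\kappa_{pq}$ with no cross term, and $C^\kappa_{pq}$ follows from bilinearity. The boundary identity $f_p(t_0)=\kappa_p$ is immediate, because the integral vanishes at $t=t_0$ and only the $j=p$ term of the polynomial survives. Centering ($\mu=0$, $\boldsymbol\mu_\kappa=\mathbf 0$) then gives $\E f_p=0$, and finite second moments hold because everything lies in $L^2$.

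\emph{Chain at positive levels (main obstacle).} I must show that $h^{-1}(f_p(t+h)-f_p(t))\to f_{p-1}(t)$ in $L^2$ for $p\ge1$. The polynomial part is deterministic-in-$t$ times $\kappa$, and it differentiates exactly to the polynomial of $f_{p-1}$. For the integral part, with $p\ge2$, I will use the identity $\mathcal I^p f_0(t)=\int_{t_0}^t\mathcal I^{p-1}f_0(v)\,dv$, which holds in $L^2$ by a stochastic Fubini argument. Approximating by Riemann sums and using deterministic Fubini on the kernel reduces it to the identity $\frac{(t-u)^{p-1}}{(p-1)!}=\int_u^t\frac{(v-u)^{p-2}}{(p-2)!}dv$. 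Given that identity, it remains to bound
\[
\E\Bigl[\bigl(h^{-1}\textstyle\int_t^{t+h}g(v)\,dv-g(t)\bigr)^2\Bigr]\le \sup_{|v-t|\le|h|}\E\bigl[(g(v)-g(t))^2\bigr],
\]
which follows from the Minkowski inequality for $L^2$-integrals. The right-hand side tends to $0$ by $L^2$-continuity of $g=\mathcal I^{p-1}f_0$ (or of $g=f_0$ when $p=1$). That continuity is inherited from continuity of $C_\theta$ and of the integrals. The care needed here lies in justifying the interchange of $L^2$ integration and the Minkowski bound; everything else is bookkeeping. Finally, the finite-dimensional Gaussianity of $(f_{p_i}(\tau_i))$ follows from membership in $\mathcal H$ up to means, which gives the $\mathbb R^{2r+1}$-valued process statement.
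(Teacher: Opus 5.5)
Your proposal is correct and follows essentially the same route as the paper's proof. Both arguments split $f_p$ into the operator part $\mathcal A_{t_0}^{[p]}f_0$ plus an independent polynomial in $\boldsymbol\kappa$, obtain Gaussianity, means and covariances by passing $L^2$-limits of finite linear functionals of $f_0$ (finite differences, Riemann sums) through expectations, and derive the positive-level chain from $\mathcal I_{t_0}^{p}f_0(t)=\int_{t_0}^t\mathcal I_{t_0}^{p-1}f_0(v)\,dv$ together with mean-square continuity of the level below. The differences are cosmetic. You get joint Gaussianity from the closed Gaussian span of $\{f_0(t)\}\cup\{\kappa_j\}$, whereas the paper applies Cram\'er--Wold to the anchor-driven part and then adds the independent $\boldsymbol\kappa$-polynomial part. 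You use a Minkowski sup-bound where the paper uses Jensen's averaged bound, and you sketch a justification of the repeated-integral identity, which the paper simply asserts.
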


\begin{mainproof}
Write
\begin{align*}
A_p(t) := (\mathcal A_{t_0}^{[p]}f_0)(t),
\qquad
B_p(t) := \mathbf{1}_{\{p>0\}}\sum_{j=1}^{p}\frac{(t-t_0)^{p-j}}{(p-j)!}\kappa_j,
\end{align*}
so that \(f_p(t)=A_p(t)+B_p(t)\). Standard mean-square differentiation theory under \cref{ass:ms-regularity} gives the derivative processes \(D_t^k f_0\), \(k=1,\ldots,r\), with
\begin{align*}
\E[D_t^k f_0(t)] = D_t^k\mu(t),
\qquad
\Cov(D_s^a f_0(s),D_t^b f_0(t))=(D_1^aD_2^b C_\theta)(s,t),
\end{align*}
for the relevant orders \(a,b,k\le r\) \citep[Theorem~2.5.1]{cramer1967stationary}. In particular, the components \(f_{-r},\ldots,f_{-1}\) are square-integrable, and \(f_{-r},\ldots,f_0\) satisfy the chain relations for \(p=-r+1,\ldots,0\).

Since \(\mu\) and \(C_\theta\) are continuous on compact sets,
\begin{align*}
M:=\sup_{u\in[t_0,t_1]}\E[f_0(u)^2]
=\sup_{u\in[t_0,t_1]}\left(\mu(u)^2+C_\theta(u,u)\right)<\infty.
\end{align*}
This gives \(A_0=f_0\in L^2\). For \(p>0\), Cauchy--Schwarz gives, for every \(t\in[t_0,t_1]\),
\begin{align*}
\E[A_p(t)^2]
&\le
\frac{1}{((p-1)!)^2}
\left(\int_{t_0}^t (t-u)^{2p-2}\,du\right)
\left(\int_{t_0}^t \E[f_0(u)^2]du\right)<\infty.
\end{align*}
The term \(B_p(t)\) is a finite linear combination of the Gaussian variables \(\kappa_1,\ldots,\kappa_p\), and is therefore square-integrable. Hence every \(f_p(t)\) is square-integrable.

It remains to verify the positive-index part of the derivative chain. Let \(G_0=f_0\) and \(G_p=\mathcal I_{t_0}^p f_0\) for \(p\ge1\). The continuity of \(\mu\) and \(C_\theta\) implies that \(f_0\) is mean-square continuous. Moreover, the repeated integral identity
\begin{align*}
G_p(t)=\int_{t_0}^t G_{p-1}(u)\,du
\end{align*}
holds in \(L^2\). If \(G_{p-1}\) is mean-square continuous, then, for \(t\in(t_0,t_1)\) and \(h\) such that \(t+h\in[t_0,t_1]\),
\begin{align*}
\frac{G_p(t+h)-G_p(t)}{h}-G_{p-1}(t)
=\frac{1}{h}\int_t^{t+h}\left(G_{p-1}(u)-G_{p-1}(t)\right)\,du,
\end{align*}
and Jensen's inequality yields
\begin{align*}
\E\!\left[\left|\frac{G_p(t+h)-G_p(t)}{h}-G_{p-1}(t)\right|^2\right]
&\le
\frac{1}{|h|}\int_{\min(t,t+h)}^{\max(t,t+h)}
\E\!\left[|G_{p-1}(u)-G_{p-1}(t)|^2\right]du
\to0.
\end{align*}
Thus \(G_p\) is mean-square differentiable with derivative \(G_{p-1}\), and hence is mean-square continuous. Induction gives this for all \(p=1,\ldots,r\). Since \(B_p\) has ordinary derivative \(B_{p-1}\) for \(p\ge1\), with \(B_0\equiv0\), the full processes satisfy the mean-square derivative relation \(D_t f_p=f_{p-1}\) for \(p=1,\ldots,r\). Finally, \(G_p(t_0)=0\) and $f_p(t_0)=B_p(t_0)=\kappa_p$ for $p=1,\ldots,r$ almost surely, so the boundary identities hold.

We next prove joint Gaussianity. Fix \((p_1,\tau_1),\ldots,(p_n,\tau_n)\) with \(p_i\in\{-r,\ldots,r\}\) and \(\tau_i\in[t_0,t_1]\), and set \(\mathbf A=(A_{p_1}(\tau_1),\ldots,A_{p_n}(\tau_n))^\top\). Each coordinate of \(\mathbf A\) is an \(L^2\)-limit of finite linear functionals of finite evaluation vectors of \(f_0\): finite differences for negative indices, the process value itself for index zero, and Riemann sums for positive indices. Finite evaluation vectors of \(f_0\) are Gaussian, finite linear transformations preserve Gaussianity, and the \(L^2\)-limit is Gaussian by the Cram\'er--Wold argument. The same approximations show that \(\mathbf A\) may be chosen measurable with respect to the completed \(\sigma\)-field generated by \(f_0\). The corresponding vector \(\mathbf B=(B_{p_1}(\tau_1),\ldots,B_{p_n}(\tau_n))^\top\) is a linear transformation of \(\boldsymbol\kappa\), hence Gaussian, and it is independent of \(\mathbf A\) because \(\boldsymbol\kappa\) is independent of \(f_0\). Therefore \(\mathbf A+\mathbf B\) is Gaussian. Since the finite collection was arbitrary, \(\mathbf f\) is a Gaussian process.

The mean formula follows by applying expectation to the same \(L^2\) limits for negative indices and by Fubini's theorem for positive indices: $\E[A_p(t)] = (\mathcal A_{t_0}^{[p]}\mu)(t)$ for $p=-r,\ldots,r$. Adding \(\E[B_p(t)]\) gives
\begin{align*}
\E[f_p(t)]
&=
(\mathcal A_{t_0}^{[p]}\mu)(t)
+
\mathbf{1}_{\{p>0\}}
\sum_{j=1}^{p}\frac{(t-t_0)^{p-j}}{(p-j)!}(\boldsymbol\mu_\kappa)_j.
\end{align*}
If \(\mu\equiv0\) and \(\boldsymbol\mu_\kappa=\mathbf0\), this mean is zero for every component and time point; together with square-integrability, the derivative chain, and the boundary identities proved above, this gives a centered \(r\)th-order differential ensemble in the sense of \cref{def:ensemble}.

It remains to compute the covariance. Independence of \(f_0\) and \(\boldsymbol\kappa\) gives
\begin{align*}
\Cov(f_p(s),f_q(t))=\Cov(A_p(s),A_q(t))+\Cov(B_p(s),B_q(t)).
\end{align*}
The polynomial part is
\begin{align*}
\Cov(B_p(s),B_q(t))
&=
\mathbf{1}_{\{p>0\}}\mathbf{1}_{\{q>0\}}
\sum_{j=1}^{p}\sum_{m=1}^{q}
\frac{(s-t_0)^{p-j}}{(p-j)!}
\frac{(t-t_0)^{q-m}}{(q-m)!}
(\boldsymbol\Sigma_\kappa)_{jm}.
\end{align*}
For the anchor part, the relevant kernels are continuous on compact sets, so the covariance--Fubini interchanges below are justified by the \(L^2\) bounds above. The standard derivative covariance formula gives the claim when \(p,q\le0\). When \(p>0\) and \(q\le0\), continuity of \(D_2^{-q}C_\theta\) and the \(L^2\)-integral definition give
\begin{align*}
\Cov(A_p(s),A_q(t))
&=\frac{1}{(p-1)!}\int_{t_0}^s (s-u)^{p-1}
\Cov(f_0(u),D_t^{-q}f_0(t))\,du\\
&=\frac{1}{(p-1)!}\int_{t_0}^s (s-u)^{p-1}(D_2^{-q}C_\theta)(u,t)\,du\\
&=(\mathcal I_{1,t_0}^pD_2^{-q}C_\theta)(s,t).
\end{align*}
The case \(p\le0\) and \(q>0\) similarly gives
\begin{align*}
\Cov(A_p(s),A_q(t))
&=\frac{1}{(q-1)!}\int_{t_0}^t (t-v)^{q-1}(D_1^{-p}C_\theta)(s,v)\,dv\\
&=(D_1^{-p}\mathcal I_{2,t_0}^qC_\theta)(s,t).
\end{align*}
Finally, when \(p,q>0\), continuity of \(C_\theta\) and Fubini's theorem give
\begin{align*}
\Cov(A_p(s),A_q(t))
&=\frac{1}{(p-1)!(q-1)!}
\int_{t_0}^s\int_{t_0}^t
(s-u)^{p-1}(t-v)^{q-1}C_\theta(u,v)\,dv\,du\\
&=(\mathcal I_{1,t_0}^p\mathcal I_{2,t_0}^qC_\theta)(s,t).
\end{align*}
All four cases are exactly
\begin{align*}
\Cov(A_p(s),A_q(t))=(\mathcal A_{1,t_0}^{[p]}\mathcal A_{2,t_0}^{[q]}C_\theta)(s,t).
\end{align*}
Combining this identity with the polynomial covariance gives the stated covariance function.
\end{mainproof}

\cref{prop:gaussian-ensemble} gives the inferential construction used in the rest of the paper, as it determines all joint finite-dimensional Gaussian distributions of the derivative--anchor--integral state with differential relations understood in the weak mean-square sense. To realize the same family as a literal chain of derivatives and repeated integrals holding simultaneously for all \(t\), we next impose pathwise regularity.

\begin{assumption}[Pathwise regularity]\label{ass:pathwise-regularity}
In addition to \cref{ass:ms-regularity}, assume that the anchor process has a version \(f_0^\ast\) whose sample paths lie in \(C^r([t_0, t_1])\) almost surely.
\end{assumption}

\begin{mainproposition}[Pathwise differential chain]\label{prop:samplepaths}
Assume \cref{ass:ms-regularity,ass:pathwise-regularity}, and let all processes be defined on a probability space \((\Omega,\mathcal F,\mathbb P)\). Let \(f_0^\ast\) be the \(C^r\) version of \(f_0\) from \cref{ass:pathwise-regularity}, and let \(\mathfrak f=(f_{-r},\ldots,f_r)\) be defined by the displayed construction in \cref{prop:gaussian-ensemble} with square-integrable boundary constants \(\boldsymbol\kappa\). Under these assumptions, \(D_t^k f_0^\ast\) is a version of the mean-square derivative process \(f_{-k}\) for each \(k=1,\ldots,r\). Assume further that \(f_0\) and \(f_0^\ast\) are jointly measurable on \([t_0,t_1]\times\Omega\) with
\[
\int_{t_0}^{t_1}\E\bigl[|f_0(u)|+|f_0^\ast(u)|\bigr]du<\infty.
\]
Then there are versions of \(f_{-r},\ldots,f_r\), denoted by the same symbols, and an event \(\Omega_\ast\) with \(\mathbb P(\Omega_\ast)=1\) such that for every \(\omega\in\Omega_\ast\), every \(p\in\{-r,\ldots,r\}\), and every \(t\in[t_0,t_1]\),
\begin{align*}
f_p(t,\omega)=(\mathcal A_{t_0}^{[p]}f_0^\ast)(t,\omega)+\mathbf 1_{\{p>0\}}\sum_{j=1}^{p}\frac{(t-t_0)^{p-j}}{(p-j)!}\kappa_j(\omega).
\end{align*}
For each \(p=-r+1,\ldots,r\), the map \(t\mapsto f_p(t,\omega)\) is differentiable on \((t_0,t_1)\) with derivative \(f_{p-1}(t,\omega)\). Thus, when \(\boldsymbol\kappa\) satisfies \cref{ass:independent-boundary}, the weak Gaussian law from \cref{prop:gaussian-ensemble} can be represented by a literal sample-path differential chain under these additional hypotheses.
\end{mainproposition}

\begin{mainproof}
Let \(\Omega_C\) be a probability-one event on which \(f_0^\ast(\cdot,\omega)\in C^r([t_0,t_1])\). Define pathwise candidate versions, with arbitrary values assigned off \(\Omega_C\) for negative indices, by
\begin{align*}
\tilde f_p(t)=
\begin{cases}
(\mathcal A_{t_0}^{[p]}f_0^\ast)(t), & p\le 0,\\
(\mathcal A_{t_0}^{[p]}f_0^\ast)(t)+\displaystyle\sum_{j=1}^{p}\frac{(t-t_0)^{p-j}}{(p-j)!}\kappa_j, & p>0.
\end{cases}
\end{align*}
First identify the negative-index versions. Fix \(k=1,\ldots,r\) and \(t\in[t_0,t_1]\). Under \cref{ass:ms-regularity}, \(f_{-k}(t)\) is the \(L^2\) limit of \(k\)th difference quotients of \(f_0\), using one-sided quotients at endpoints. Replacing \(f_0\) by its version \(f_0^\ast\) does not change these quotients almost surely for each fixed step. On \(\Omega_C\), the same quotients converge to \(D_t^k f_0^\ast(t)\). Uniqueness of limits in probability gives \(D_t^k f_0^\ast(t)=f_{-k}(t)\) almost surely. Hence \(D_t^k f_0^\ast\) is a version of \(f_{-k}\). For \(p=0\), \(\tilde f_0=f_0^\ast\) is a version of \(f_0\). For \(p=-k<0\), the preceding argument shows that \(D_t^k f_0^\ast=(\mathcal A_{t_0}^{[-k]}f_0^\ast)\) is a version of \(f_{-k}\).

It remains to identify the positive-index candidates with the \(L^2\)-defined processes in \cref{prop:gaussian-ensemble}. Fix \(p>0\) and \(t\in[t_0,t_1]\), and set
\begin{align*}
K_{p,t}(u):=\frac{(t-u)^{p-1}}{(p-1)!}\mathbf 1_{[t_0,t]}(u).
\end{align*}
Let
\begin{align*}
J_t:=\int_{t_0}^{t_1}K_{p,t}(u)f_0(u)\,du,
\qquad
J_t^\ast:=\int_{t_0}^{t_1}K_{p,t}(u)f_0^\ast(u)\,du.
\end{align*}
The joint-measurability and integrability assumptions make these samplewise integrals well defined. Since \(K_{p,t}\) is bounded and \(\sup_{u\in[t_0,t_1]}\E[f_0(u)^2]<\infty\) under \cref{ass:ms-regularity}, \(J_t\in L^2(\Omega)\). For every \(Z\in L^2(\Omega)\), Fubini's theorem and Cauchy--Schwarz yield
\begin{align*}
\E[ZJ_t]
&=\int_{t_0}^{t_1}K_{p,t}(u)\E[Zf_0(u)]\,du.
\end{align*}
Thus the samplewise integral \(J_t\) represents the weak \(L^2\)-integral \(\bigl(\mathcal I_{t_0}^{p}f_0\bigr)(t)\) used in \cref{prop:gaussian-ensemble}.

Because \(f_0^\ast\) is a version of \(f_0\), Tonelli's theorem gives
\begin{align*}
\E|J_t^\ast-J_t|
&\le \E\int_{t_0}^{t_1}|K_{p,t}(u)|\,|f_0^\ast(u)-f_0(u)|\,du\\
&=\int_{t_0}^{t_1}|K_{p,t}(u)|\,\E|f_0^\ast(u)-f_0(u)|\,du
=0.
\end{align*}
Hence \(J_t^\ast=J_t\) almost surely. Adding the same polynomial in \(\kappa_1,\ldots,\kappa_p\) shows that, for this fixed \(t\), \(\tilde f_p(t)\) equals the \(f_p(t)\) constructed in \cref{prop:gaussian-ensemble} almost surely. Therefore each \(\tilde f_p\) is a version of \(f_p\).

Replace \(f_p\) by \(\tilde f_p\) for all \(p=-r,\ldots,r\) and drop the tildes. On the full-probability event \(\Omega_\ast:=\Omega_C\), the displayed pathwise representation in the proposition holds for every \(p\) and every \(t\in[t_0,t_1]\) by construction. It remains only to check the pathwise derivative chain on this event. If \(p\le0\) and \(p\ge-r+1\), ordinary differentiation of the \(C^r\) path \(f_0^\ast(\cdot,\omega)\) gives
\[
\frac{d}{dt}(\mathcal A_{t_0}^{[p]}f_0^\ast)(t,\omega)=(\mathcal A_{t_0}^{[p-1]}f_0^\ast)(t,\omega).
\]
If \(p>0\), the fundamental theorem of calculus gives
\[
\frac{d}{dt}(\mathcal A_{t_0}^{[p]}f_0^\ast)(t,\omega)=(\mathcal A_{t_0}^{[p-1]}f_0^\ast)(t,\omega),
\]
where \(p=1\) gives \(f_0^\ast(t,\omega)\). Differentiating the polynomial boundary term gives
\[
\frac{d}{dt}\sum_{j=1}^{p}\frac{(t-t_0)^{p-j}}{(p-j)!}\kappa_j(\omega)
=\mathbf 1_{\{p>1\}}\sum_{j=1}^{p-1}\frac{(t-t_0)^{p-1-j}}{(p-1-j)!}\kappa_j(\omega),
\]
which is the boundary term in \(f_{p-1}\). At \(t=t_0\), the integral part vanishes and the same polynomial term gives \(f_p(t_0,\omega)=\kappa_p(\omega)\) for every \(p=1,\ldots,r\). Combining the operator and polynomial identities proves \(\frac{d}{dt}f_p(t,\omega)=f_{p-1}(t,\omega)\) for every \(p=-r+1,\ldots,r\) and \(t\in(t_0,t_1)\).
\end{mainproof}

\subsection{Spectral Approximation via Laplacian Eigenfunctions}\label{sec:finite-rank}
The construction in \cref{prop:gaussian-ensemble,prop:samplepaths} expresses each kernel-driven covariance entry in terms of the operators \(\mathcal A_{1,t_0}^{[p]}\mathcal A_{2,t_0}^{[q]}\) applied to the anchor covariance function \(C_\theta\).  For standard covariance functions such as the exponential quadratic or Mat\'ern families, the transforms can in principle be written in closed form, but the resulting expressions become cumbersome as the ensemble order \(r\) increases and need not be available in a form useful for implementation. Our proposal for a general construction is to replace these covariance operators by the Hilbert space Gaussian process (HSGP) approximation of \citet{solin2020hilbert}, which depends on \(C_\theta\) only through its spectral density and represents all required transformations through precomputable Laplacian basis functions. For any fixed finite order supported by the regularity assumptions, increasing the derivative--integral ladder changes the transformed basis, but it does not require new covariance-specific derivative or integral calculus.

Assume from here on that $[t_0,t_1]\subseteq[-L,L]$ where we refer to $[-L,L]$ as the computational domain. Let $C_\theta(s,t)=C_\theta(\tau)$ with $\tau=s-t$ be a stationary covariance function with spectral density
\[
S_\theta(\omega)=\int_{-\infty}^{\infty} C_\theta(\tau)e^{-\mathrm{i}\omega\tau}\,\mathrm{d}\tau.
\]
Following \citet{solin2020hilbert} we define the Hilbert space approximation of the anchor covariance function on $[-L,L]$ by
\begin{align*}
\widetilde{C}_{\theta,L}(s,t) := \sum_{k=1}^\infty S_\theta\left(\sqrt{\lambda_{k,L}}\right)\phi_{k,L}(s)\phi_{k,L}(t),
\end{align*}
where $\{(\lambda_{k,L}, \phi_{k,L})\}_{k \geq 1}$ is the orthonormal eigensystem of $-\mathrm d^2/\mathrm dt^2$ on $[-L,L]$ with Dirichlet boundary conditions $\phi_{k,L}(-L) = \phi_{k,L}(L) = 0$ given by
\begin{align*}
\lambda_{k,L} = \left(\frac{k\pi}{2L}\right)^2, \qquad
\phi_{k,L}(t) = \sqrt{\frac{1}{L}} \sin\left(\frac{k\pi (t + L)}{2L}\right),
\qquad k=1,2,\ldots.
\end{align*}
Truncating the sum at $K < \infty$ leads to the rank-reduced covariance and series representation of a Gaussian process
\begin{align*}
\widetilde{C}_{\theta,L,K}(s,t)
:=
\sum_{k=1}^K
S_\theta\left(\sqrt{\lambda_{k,L}}\right)\phi_{k,L}(s)\phi_{k,L}(t), \qquad \widetilde{f}_{L,K}(t) := \sum_{k=1}^K \zeta_k S_\theta\left(\sqrt{\lambda_{k,L}}\right)^{1/2}\phi_{k,L}(t)
\end{align*}
with $\zeta_k \overset{\indep}{\sim} N(0,1)$. For posterior computation, this finite-basis representation can be implemented either by sampling the coefficients \(\zeta_1,\ldots,\zeta_K\) directly or by integrating them out analytically. The Laplacian--Dirichlet eigenbasis implementation in Stan \citep{stan2026}, the marginalized likelihood, and posterior reconstruction are described in \cref{appendix:marginal-finite-basis-likelihood} in the Supplementary Material. The interval \([-L,L]\) is a computational domain and an enlargement of the observation domain. Its role is to place the artificial Dirichlet boundary away from the observation window while providing a countable basis that does not depend on the functional form of the covariance kernel nor its parameters. Dependence on the covariance function enters only through the spectral weights \(S_\theta(\sqrt{\lambda_{k,L}})\), whereas the time dependence is carried entirely by the precomputable eigenfunctions \(\phi_{k,L}\). Similarly, the lower limit of integration remains the observation domain reference point \(t_0\), not the computational boundary \(-L\).

Fix $K \in \mathbb N$, $p,q \in \{-r,\ldots,r\}$, and define $\psi_k^{[a]}(t):=(\mathcal A_{t_0}^{[a]}\phi_{k,L})(t)$ for $a\in\{p,q\}$ and $k=1,\ldots,K$. Then
\begin{align*}
(\mathcal A_{t_0}^{[p]} \widetilde f_{L,K})(t)
&=
\sum_{k=1}^K
\zeta_k S_\theta\left(\sqrt{\lambda_{k,L}}\right)^{1/2}\psi_k^{[p]}(t), \quad \widetilde C_{\theta,L,K}^{[p,q]}(s,t) = \sum_{k=1}^K
S_\theta\left(\sqrt{\lambda_{k,L}}\right)\psi_k^{[p]}(s)\psi_k^{[q]}(t).
\end{align*}
The finite-rank construction now separates into two reusable ingredients. The covariance family enters only through the spectral weights \(S_\theta(\sqrt{\lambda_{k,L}})\), while the derivative or integral level enters through the transformed basis rows \(\psi_k^{[p]}\). Thus all covariance blocks can be assembled from the same spectral weights once the signed-order transforms of the Laplacian--Dirichlet eigenfunctions are available. The next proposition records these transforms explicitly.

\begin{mainproposition}[Transformed eigenfunctions]\label{prop:integrals}
For \(L>0\) and \(k\in\mathbb N\), write
\[
\omega_{k,L}:=\frac{k\pi}{2L},
\qquad
\phi_{k,L}(t)=L^{-1/2}\sin\bigl(\omega_{k,L}(t+L)\bigr).
\]
For every \(m\in\mathbb Z\), the transformed eigenfunction
\(\psi_k^{[m]}(t):=(\mathcal A_{t_0}^{[m]}\phi_{k,L})(t)\) is given by
\[
\psi_k^{[m]}(t)=
\begin{cases}
L^{-1/2}\omega_{k,L}^{-m}
\sin\bigl(\omega_{k,L}(t+L)-m\pi/2\bigr),
& m\le0,\\[5pt]
\displaystyle
\frac{L^{-1/2}}{\omega_{k,L}}
\left[
\cos\bigl(\omega_{k,L}(t_0+L)\bigr)-\cos\bigl(\omega_{k,L}(t+L)\bigr)
\right],
& m=1,\\[8pt]
\begin{aligned}
&\displaystyle
\frac{(t-t_0)^{m-1}L^{-1/2}}{(m-1)!\,\omega_{k,L}}
\cos\bigl(\omega_{k,L}(t_0+L)\bigr)\\
&\displaystyle\quad+
\frac{(t-t_0)^{m-2}L^{-1/2}}{(m-2)!\,\omega_{k,L}^{2}}
\sin\bigl(\omega_{k,L}(t_0+L)\bigr)
-\omega_{k,L}^{-2}\psi_k^{[m-2]}(t),
\end{aligned}
& m=2,3,\ldots.
\end{cases}
\]
\end{mainproposition}

\begin{mainproof}
The derivative identity follows by differentiating the sine function \(a\) times:
\[
\frac{d^a}{dt^a}\sin\{\omega_{k,L}(t+L)\}
=
\omega_{k,L}^{a}\sin\{\omega_{k,L}(t+L)+a\pi/2\}.
\]
Multiplying by \(L^{-1/2}\) gives the displayed formula for signed orders \(m=-a\le0\), including \(m=0\).

For positive signed orders, direct integration gives
\begin{align*}
(\mathcal I_{t_0}^{1}\phi_{k,L})(t)
&=
L^{-1/2}\int_{t_0}^t \sin\{\omega_{k,L}(s+L)\}\,\mathrm ds\\
&=
\frac{L^{-1/2}}{\omega_{k,L}}
\left[
\cos\{\omega_{k,L}(t_0+L)\}-\cos\{\omega_{k,L}(t+L)\}
\right],
\end{align*}
since \(L>0\) and \(k\in\mathbb N\) imply \(\omega_{k,L}\neq0\).

It remains to prove the recursion for \(q\ge2\). Write \(f=\phi_{k,L}\) and \(\omega=\omega_{k,L}\). Then \(f''=-\omega^2 f\), and Cauchy's formula for repeated integration gives
\[
(\mathcal I_{t_0}^q f'')(t)
=
\frac{1}{(q-1)!}\int_{t_0}^t (t-s)^{q-1}f''(s)\,\mathrm ds.
\]
Two integrations by parts yield
\begin{align*}
(\mathcal I_{t_0}^q f'')(t)
&=
-\frac{(t-t_0)^{q-1}}{(q-1)!}f'(t_0)
+
\frac{1}{(q-2)!}\int_{t_0}^t (t-s)^{q-2}f'(s)\,\mathrm ds\\
&=
(\mathcal I_{t_0}^{q-2}f)(t)
-
\frac{(t-t_0)^{q-2}}{(q-2)!}f(t_0)
-
\frac{(t-t_0)^{q-1}}{(q-1)!}f'(t_0),
\end{align*}
where the second equality also covers \(q=2\), with \(\mathcal I_{t_0}^0\) interpreted as the identity. Since \(f''=-\omega^2 f\),
\[
(\mathcal I_{t_0}^q f)(t)
=
-\omega^{-2}(\mathcal I_{t_0}^{q-2}f)(t)
+
\frac{(t-t_0)^{q-2}}{(q-2)!\,\omega^2}f(t_0)
+
\frac{(t-t_0)^{q-1}}{(q-1)!\,\omega^2}f'(t_0).
\]
Finally,
\[
f(t_0)=L^{-1/2}\sin\{\omega_{k,L}(t_0+L)\},
\qquad
f'(t_0)=L^{-1/2}\omega_{k,L}\cos\{\omega_{k,L}(t_0+L)\}.
\]
Substituting these two endpoint values and returning to \(f=\phi_{k,L}\), \(\omega=\omega_{k,L}\), and \(\psi_k^{[q]}=\mathcal A_{t_0}^{[q]}\phi_{k,L}\) gives the displayed recursion for every \(q\ge2\).
\end{mainproof}

It remains to quantify what is lost by replacing the exact stationary kernel with its truncated Laplacian--Dirichlet expansion on the enlarged computational interval. The most stringent requirements arise for derivative blocks, since differentiation weights the spectrum by powers of frequency. Once those blocks are controlled, the integral blocks follow by bounded repeated integration on \([t_0,t_1]\), and the integration constant covariance is carried exactly. The following condition and bounds make this operator-level error control explicit.

\begin{assumption}[Derivative-weighted spectral regularity]\label{ass:subsection-weighted-spectral}
For each pair \((a,b)\) to which the approximation bounds below are applied, the weighted spectral density \(S_\theta^{[a,b]}(\omega):=\omega^{a+b}S_\theta(\omega)\), \(\omega>0\), is uniformly bounded on \((0,\infty)\), is differentiable with bounded derivative on \((0,\infty)\), and both \(S_\theta^{[a,b]}\) and the derivative of \(S_\theta^{[a,b]}\) with respect to \(\omega\) are integrable on \((0,\infty)\).
\end{assumption}

\Cref{prop:spectral-approximation-bounds} presents two bounds. The first is the one-dimensional HSGP approximation bound of \citet{solin2020hilbert} where the error is controlled by a finite-domain term of order \(L^{-1}\) and the spectral tail beyond the largest retained frequency. The second is the extension needed here because the differential ensemble is built from derivative covariance blocks, not only from the anchor covariance. Differentiating \(a\) times in the first argument and \(b\) times in the second argument weights the spectral density by \(\omega^{a+b}\), so the same argument must be applied to \(S_\theta^{[a,b]}\) to obtain uniform control of the derivative blocks used by the ensemble construction.

\begin{maintheorem}[Spectral approximation of derivative covariance blocks]\label{prop:spectral-approximation-bounds}
Fix \(\Lambda>0\) such that $[t_0,t_1]\subset[-\Lambda,\Lambda]$, and let \(L>\Lambda\). Assume that \(C_\theta\) is the restriction to \([t_0,t_1]^2\) of a real stationary covariance function on \(\mathbb R\), with spectral density \(S_\theta\). Assume also that \(S_\theta\) is uniformly bounded on \((0,\infty)\), is differentiable with bounded derivative on \((0,\infty)\), and that both \(S_\theta\) and the derivative of \(S_\theta\) with respect to \(\omega\) are integrable on \((0,\infty)\). Then there exists a constant \(E<\infty\), depending at most on \(S_\theta\) and \(\Lambda\) but independent of \(K\), \(L\), \(s\), and \(t\), such that
\begin{align*}
\sup_{(s,t)\in [t_0,t_1]^2}
\left|
C_\theta(s,t)-\widetilde C_{\theta,L,K}(s,t)
\right|
\le
\frac{E}{L}
+
\frac{1}{\pi}
\int_{\pi K/(2L)}^\infty S_\theta(\omega)\,\mathrm d\omega.
\end{align*}
Moreover, if \cref{ass:subsection-weighted-spectral} holds for a fixed pair \((a,b)\in \mathbb N_0^2\), then there exists a constant \(E_{a,b}<\infty\), depending at most on \(S_\theta\), \((a,b)\), and \(\Lambda\) but independent of \(K\), \(L\), \(s\), and \(t\), such that
\begin{align*}
\sup_{(s,t)\in [t_0,t_1]^2}
\left|
D_1^a D_2^b C_\theta(s,t)
-
D_1^a D_2^b \widetilde C_{\theta,L,K}(s,t)
\right|
\le
\frac{E_{a,b}}{L}
+
\frac{1}{\pi}
\int_{\pi K/(2L)}^\infty \omega^{a+b}S_\theta(\omega)\,\mathrm d\omega.
\end{align*}
Consequently, $\widetilde C_{\theta,L,K}\to C_\theta$ uniformly on $[t_0,t_1]^2$ as \(K,L\to\infty\) with \(K/L\to\infty\), and, for the same fixed pair \((a,b)\) for which \cref{ass:subsection-weighted-spectral} holds, $D_1^a D_2^b \widetilde C_{\theta,L,K} \to D_1^a D_2^b C_\theta$ uniformly on $[t_0,t_1]^2$.
\end{maintheorem}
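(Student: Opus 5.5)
The plan is to reduce the derivative-block bound to the same structure as the Solin--Särkkä one-dimensional argument, applied to the weighted density \(S_\theta^{[a,b]}\). The first bound is the case \(a=b=0\) and is exactly the one-dimensional HSGP error bound of \citet{solin2020hilbert}, so we concentrate on general \((a,b)\). Throughout we work with the spectral representation
\[
C_\theta(s,t)=\frac1\pi\int_0^\infty S_\theta(\omega)\cos(\omega(s-t))\,\mathrm d\omega .
\]
This is valid because \(S_\theta\) is even and integrable. Since \(\omega^{a+b}S_\theta\) is integrable, we may differentiate under the integral, which gives
\[
D_1^aD_2^bC_\theta(s,t)=\frac{(-1)^b}{\pi}\int_0^\infty \omega^{a+b}S_\theta(\omega)\cos\bigl(\omega(s-t)+(a+b)\pi/2\bigr)\,\mathrm d\omega .
\]
This is a cosine integral of \(S^{[a,b]}_\theta\) with a phase shift.

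On the approximation side, we use the product-to-sum identity
\[
\phi_{k,L}(s)\phi_{k,L}(t)=\tfrac{1}{2L}\bigl[\cos(\omega_{k,L}(s-t))-\cos(\omega_{k,L}(s+t+2L))\bigr].
\]
Differentiating termwise via \cref{prop:integrals} (the \(m\le0\) case) gives
\[
D_1^aD_2^b\widetilde C_{\theta,L,K}(s,t)=\frac{1}{2L}\sum_{k=1}^K\omega_{k,L}^{a+b}S_\theta(\omega_{k,L})\bigl[c_1(\omega_{k,L}(s-t))-c_2(\omega_{k,L}(s+t+2L))\bigr],
\]
where \(c_1,c_2\) are phase-shifted cosines. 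This is a finite sum, so the termwise differentiation needs no justification. The error then splits into three pieces. The first is the tail \(\frac1\pi\int_{\pi K/(2L)}^\infty\omega^{a+b}S_\theta\) of the exact integral, bounded using \(|\cos|\le1\). The second is the Riemann-sum error between \(\frac{1}{2L}\sum_{k\le K}\) and \(\frac{1}{\pi}\int_0^{\pi K/(2L)}\), since the mesh is \(\Delta\omega=\pi/(2L)\). The third is the reflected sum involving \(s+t+2L\).

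For the Riemann term, we bound each cell's error by \(\Delta\omega\) times the total variation of \(g(\omega)=\omega^{a+b}S_\theta(\omega)\cos(\omega\tau+\cdot)\) over that cell. Here \(|\tau|\le2\Lambda\), and
\[
|g'|\le|(S^{[a,b]}_\theta)'|+2\Lambda\, S^{[a,b]}_\theta .
\]
Both terms on the right are integrable by \cref{ass:subsection-weighted-spectral}. Summing over cells gives an error of order \(L^{-1}\bigl(\|(S^{[a,b]})'\|_{L^1}+2\Lambda\|S^{[a,b]}\|_{L^1}+\|S^{[a,b]}\|_\infty\bigr)\). The sup-norm term covers the endpoint cell at \(\omega=0\).

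The reflected sum is the main obstacle. It is an oscillatory sum
\[
\frac{1}{2L}\sum_k g(\omega_{k,L})\cos(\omega_{k,L}(s+t+2L)+\cdot),
\]
and the phase advances by \(\pi(s+t+2L)/(2L)=\pi+\pi(s+t)/(2L)\) per step, that is, nearly \(\pi\). Because of this the terms nearly alternate. We would use Abel summation: the partial sums of \(\cos(k\theta+\alpha)\) with \(\theta=\pi+\pi(s+t)/(2L)\) are bounded by \(1/|\sin(\theta/2)|\). Since \(|s+t|\le2\Lambda<2L\), we have
\[
|\sin(\theta/2)|=\cos\bigl(\pi(s+t)/(4L)\bigr)\ge\cos\bigl(\pi\Lambda/(2L)\bigr)>0 .
\]
Summation by parts then bounds the reflected sum by
\[
\frac{1}{2L}\cdot\frac{1}{\cos(\pi\Lambda/(2L))}\Bigl(\|S^{[a,b]}\|_\infty+\mathrm{TV}(S^{[a,b]})\Bigr),
\]
where \(\mathrm{TV}\le\|(S^{[a,b]})'\|_{L^1}\). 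The obstacle is that \(1/\cos(\pi\Lambda/(2L))\) blows up as \(L\downarrow\Lambda\), so this does not give a constant depending only on \(\Lambda\). We would handle this by splitting into cases. For \(L\ge2\Lambda\) the factor is at most \(\sqrt2\). For \(\Lambda<L<2\Lambda\) we use the crude bound: the error is at most \(2\|S^{[a,b]}\|_{L^1}+\|S^{[a,b]}\|_\infty\cdot O(1)\), which is itself at most \(E'/L\) with \(E'=2\Lambda\cdot(\text{that constant})\). Taking the maximum yields \(E_{a,b}\), depending only on \(S_\theta,(a,b),\Lambda\).

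The convergence statements follow at once. If \(K/L\to\infty\), then the tail integral vanishes by integrability of \(S^{[a,b]}_\theta\), and the \(E_{a,b}/L\) term vanishes as \(L\to\infty\).
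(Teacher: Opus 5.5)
Your proposal is correct. It follows the paper's skeleton: the half-line cosine form of $D_1^aD_2^bC_\theta$, the product-to-sum split of the finite sum into a Riemann sum with mesh $h=\pi/(2L)$ plus a reflected term, the cell-wise total-variation bound using $|s-t|\le 2\Lambda$, and the $\frac1\pi$ spectral tail. You depart from the paper only in how you control the reflected term.

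\emph{How the paper bounds it.} The paper first uses $2L\omega_{k,L}=k\pi$ to write $\cos(\omega_{k,L}(s+t+2L)+\cdot)=(-1)^k\cos(\omega_{k,L}(s+t)+\cdot)$. This moves the $x=s+t$ dependence into $q_x(\omega)=g(\omega)\cos(\omega x+(a+b)\pi/2)$, whose variation is at most $\|g'\|_{L^1}+2\Lambda\|g\|_{L^1}$. Pairing consecutive terms of the now explicitly alternating sum gives $|B|\le h(M_{g,\Lambda}+\|g\|_\infty)/\pi$, uniformly for every $L>\Lambda$, with no case split.

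\emph{How yours differs.} Your Abel summation treats the whole phase $k\theta$ as the oscillating factor. This removes the $2\Lambda\|g\|_{L^1}$ contribution from the reflected-term constant. The cost is the Dirichlet-kernel factor $1/\cos(\pi\Lambda/(2L))$, which blows up as $L\downarrow\Lambda$ and forces your split at $L=2\Lambda$. The split is legitimate, since any constant $c$ satisfies $c\le 2\Lambda c/L$ when $L<2\Lambda$.

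\emph{One point to make explicit.} In the range $\Lambda<L<2\Lambda$ your crude bound needs $h\sum_{k\le K}g(kh)$ to be bounded uniformly in $K$. That does not follow from $\|g\|_\infty$ alone, which only gives $Kh\|g\|_\infty$. It does follow from the same right-endpoint Riemann estimate you already use: $h\sum_{k\le K}g(kh)\le\|g\|_{L^1}+h\|g'\|_{L^1}$ with $h<\pi/(2\Lambda)$. You only need to apply this to the reflected term, because the estimate on the main Riemann term already holds for all $L$.

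With that filled in, both routes give constants depending only on $g$, $(a,b)$ and $\Lambda$. The paper's pairing argument is shorter and uniform in $L$. Yours gives a reflected-term constant free of the $\Lambda\|g\|_{L^1}$ term, at the price of the case analysis near $L=\Lambda$.
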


\begin{mainproof}
We prove a common estimate for the anchor case and for the differentiated case. Fix nonnegative integers $a,b$ and put $h:=\frac{\pi}{2L}$ and $\omega_{k,L}=kh$ with $a=b=0$ and $g=S_\theta$ for the anchor-kernel bound, and with $g(\omega)=\omega^{a+b}S_\theta(\omega)$ and $\omega>0$ for the derivative-block bound. In the latter case, the required boundedness, differentiability, and integrability properties of $g$ are exactly those in \cref{ass:subsection-weighted-spectral}; in the anchor case they are the explicit assumptions on $S_\theta$. Since $S_\theta$ is the spectral density of a real stationary covariance function, it is even and nonnegative, and hence so is the weight $g$ on $(0,\infty)$. All constants below may depend on \(g\), \((a,b)\), and the fixed envelope \(\Lambda\), but not on \(K\), \(L\), \(s\), or \(t\).

The assumptions imply that $g$ has a continuous one-sided extension at zero and is absolutely continuous on compact subintervals of $[0,\infty)$. Moreover,
\[
    \|g\|_\infty<\infty,\qquad
    \int_0^\infty g(\omega)\,\mathrm d\omega<\infty,
    \qquad
    \int_0^\infty |g'(\omega)|\,\mathrm d\omega<\infty .
\]
Under the Fourier inversion representation of the derivative block, dominated convergence gives
\[
D_1^aD_2^b C_\theta(s,t)
=
\frac{1}{2\pi}\int_{\mathbb R}
(\mathrm i\omega)^a(-\mathrm i\omega)^b e^{\mathrm i\omega(s-t)}S_\theta(\omega)\,\mathrm d\omega.
\]
Using the evenness of $S_\theta$, this becomes
\[
F_{a,b}(s,t)
:=
\frac{1}{\pi}\int_0^\infty
    g(\omega)
    \cos\!\left(\omega(s-t)+\frac{(a-b)\pi}{2}\right)
\,\mathrm d\omega .
\]
For $a=b=0$ this is simply the Fourier representation of $C_\theta(s,t)$. Thus \(F_{a,b}\) is the exact derivative covariance block represented in half-line cosine form.

Since the series defining $\widetilde C_{\theta,L,K}$ is finite, termwise differentiation is immediate. By \cref{prop:integrals},
\[
D_t^a\phi_{k,L}(t)
=
L^{-1/2}\omega_{k,L}^a
\sin\!\left(\omega_{k,L}(t+L)+\frac{a\pi}{2}\right),
\]
and the product-to-sum identity gives
\[
D_1^aD_2^b\widetilde C_{\theta,L,K}(s,t)
=
A_{L,K}^{[a,b]}(s,t)-B_{L,K}^{[a,b]}(s,t),
\]
where
\[
A_{L,K}^{[a,b]}(s,t)
:=
\frac{h}{\pi}\sum_{k=1}^K
    g(kh)
    \cos\!\left(kh(s-t)+\frac{(a-b)\pi}{2}\right)
\]
and, since $2Lh=\pi$,
\[
B_{L,K}^{[a,b]}(s,t)
:=
\frac{h}{\pi}\sum_{k=1}^K
    (-1)^k g(kh)
    \cos\!\left(kh(s+t)+\frac{(a+b)\pi}{2}\right).
\]

It remains to bound the Riemann sum error in $A_{L,K}^{[a,b]}$ and the reflected term $B_{L,K}^{[a,b]}$. The fixed compact-domain envelope is used twice: the main Riemann term is controlled through \(|s-t|\), while the reflected alternating term is controlled through \(|s+t|\). Let $\tau=s-t$ and define
\[
    r_\tau(\omega)
    :=g(\omega)\cos\!\left(\omega\tau+\frac{(a-b)\pi}{2}\right).
\]
Because $(s,t)\in[t_0,t_1]^2\subset[-\Lambda,\Lambda]^2$, we have $|\tau|\le 2\Lambda$, and hence
\[
\int_0^\infty |r_\tau'(\omega)|\,\mathrm d\omega
\le
\int_0^\infty |g'(\omega)|\,\mathrm d\omega
+2\Lambda\int_0^\infty g(\omega)\,\mathrm d\omega
=:M_{g,\Lambda}<\infty .
\]
For each $K$,
\begin{align*}
\left|F_{a,b}(s,t)-A_{L,K}^{[a,b]}(s,t)\right|
&\le
\frac{1}{\pi}
\left|\int_0^{Kh} r_\tau(\omega)\,\mathrm d\omega
      -h\sum_{k=1}^K r_\tau(kh)\right|
+
\frac{1}{\pi}\int_{Kh}^\infty g(\omega)\,\mathrm d\omega  \\
&\le
\frac{hM_{g,\Lambda}}{\pi}
+
\frac{1}{\pi}\int_{Kh}^\infty g(\omega)\,\mathrm d\omega .
\end{align*}
The second inequality is the elementary right endpoint Riemann sum bound obtained by applying absolute continuity on each interval $[(k-1)h,kh]$.

For the reflected term, set $x=s+t$ and
\[
    q_x(\omega)
    :=g(\omega)\cos\!\left(\omega x+\frac{(a+b)\pi}{2}\right).
\]
Again $|x|\le 2\Lambda$, so $\int_0^\infty |q_x'|\le M_{g,\Lambda}$ and $\|q_x\|_\infty\le\|g\|_\infty$. Pair consecutive terms in the alternating sum, with \(N=\lfloor K/2\rfloor\). If \(K\) is even, all terms are paired. If \(K\) is odd, one endpoint remains and is bounded by \(\|g\|_\infty\). Hence
\[
\left|\sum_{k=1}^K(-1)^k q_x(kh)\right|
\le
\sum_{j=1}^N |q_x(2jh)-q_x((2j-1)h)|+\|g\|_\infty
\le
M_{g,\Lambda}+\|g\|_\infty .
\]
Therefore
\[
\sup_{(s,t)\in[t_0,t_1]^2}
\left|B_{L,K}^{[a,b]}(s,t)\right|
\le
\frac{h}{\pi}(M_{g,\Lambda}+\|g\|_\infty)
\le
\frac{C_{g,\Lambda}}{L},
\]
where $C_{g,\Lambda}<\infty$ is independent of $K$, $L$, $s$, and $t$.

Combining the two estimates and using $Kh=\pi K/(2L)$ gives
\[
\sup_{(s,t)\in[t_0,t_1]^2}
\left|F_{a,b}(s,t)-D_1^aD_2^b\widetilde C_{\theta,L,K}(s,t)\right|
\le
\frac{E_{a,b}}{L}
+
\frac{1}{\pi}\int_{\pi K/(2L)}^\infty g(\omega)\,\mathrm d\omega,
\]
for a finite constant $E_{a,b}$ that may depend on \(g\), \((a,b)\), and \(\Lambda\), but not on $K$, $L$, $s$, and $t$. The reflected-term bound is absorbed into the \(E_{a,b}/L\) term, while the remaining half-line spectral tail has coefficient \(1/\pi\). Taking $a=b=0$ gives the anchor-kernel bound, with $E=E_{0,0}$ and $g=S_\theta$. For a fixed pair $(a,b)$ satisfying \cref{ass:subsection-weighted-spectral}, the same display gives the derivative-block bound.

Finally, if $K,L\to\infty$ and $K/L\to\infty$, then $L^{-1}\to0$ and the lower limit $\pi K/(2L)$ in the tail integral tends to infinity. Since the relevant weight $g$ is integrable, the tail integral tends to zero. This proves both uniform convergence statements.
\end{mainproof}

The preceding approximation theorem also implies convergence for the whole finite ensemble. To see this, keep the assumptions of \cref{prop:spectral-approximation-bounds} and suppose that \cref{ass:subsection-weighted-spectral} holds for every pair \((a,b)\in\{0,\ldots,r\}^2\). For \(m\in\{-r,\ldots,r\}\), let \(a_m:=\max(-m,0)\) and \(i_m:=\max(m,0)\). The signed-order representation of the exact and finite-rank kernel-driven blocks gives their difference as \(\mathcal I_{1,t_0}^{i_p}\mathcal I_{2,t_0}^{i_q}\bigl(D_1^{a_p}D_2^{a_q}\widetilde C_{\theta,L,K}-D_1^{a_p}D_2^{a_q}C_\theta\bigr)\). Since repeated integration is a bounded map in the sup norm on the compact square, there is a finite constant \(M_r\), depending only on \(r\) and \(t_1-t_0\), such that
\[
\max_{p,q\in\{-r,\ldots,r\}}
\left\|
\widetilde C_{\theta,L,K}^{[p,q]}-C_\theta^{[p,q]}
\right\|_\infty
\le
M_r
\max_{0\le a,b\le r}
\left\|
D_1^aD_2^b\widetilde C_{\theta,L,K}
-
D_1^aD_2^bC_\theta
\right\|_\infty.
\]
The right-hand side converges to zero by \cref{prop:spectral-approximation-bounds}, uniformly over the finitely many derivative pairs. Consequently, as \(K,L\to\infty\) with \(K/L\to\infty\) and with the computational interval containing \([t_0,t_1]\),
\[
\max_{p,q\in\{-r,\ldots,r\}}
\sup_{(s,t)\in[t_0,t_1]^2}
\left|
\widetilde C_{\theta,L,K}^{[p,q]}(s,t)-C_\theta^{[p,q]}(s,t)
\right|
\to 0.
\]

The finite-rank construction enters posterior computation only through finite Gaussian mean vectors and covariance blocks. The preceding convergence statement gives sufficient conditions for the kernel-driven blocks to converge on any fixed collection of evaluation grids. The following proposition records the resulting finite-grid posterior convergence directly.

\begin{mainproposition}[Finite-grid posterior convergence]\label{prop:finite-grid-posterior-convergence}
Fix hyperparameters \(\vartheta=(\theta,\sigma^2,\boldsymbol\mu_\kappa,\boldsymbol\Sigma_\kappa)\) with \(\sigma^2>0\). Let \(Y=(Y_1,\ldots,Y_n)^\top\), with \(Y_i=f_0(x_i)+\epsilon_i\) at fixed points \(x_1,\ldots,x_n\), where \(\epsilon_i\overset{\indep}{\sim}N(0,\sigma^2)\) and the errors are independent of the ensemble, and let \(G\in\mathbb R^d\) be any fixed finite vector of ensemble evaluations or fixed finite linear combinations of them. For the exact model and a finite-rank approximation indexed by \(\ell\), write the joint laws in \(Y,G\) block order as
\[
\binom{Y}{G}\mid\vartheta
\sim
N\!\left(
\binom{\mu_Y}{\mu_G},
\bigl[\Sigma_{AB}\bigr]_{A,B\in\{Y,G\}}
\right),
\qquad
\binom{\widetilde Y^{(\ell)}}{\widetilde G^{(\ell)}}\mid\vartheta
\sim
N\!\left(
\binom{\widetilde\mu_Y^{(\ell)}}{\widetilde\mu_G^{(\ell)}},
\bigl[\widetilde\Sigma_{AB}^{(\ell)}\bigr]_{A,B\in\{Y,G\}}
\right),
\]
where the \(Y,Y\) blocks include the independent noise variance \(\sigma^2 I_n\). If \(\widetilde\mu_A^{(\ell)}\to\mu_A\) for \(A\in\{Y,G\}\) and \(\widetilde\Sigma_{AB}^{(\ell)}\to\Sigma_{AB}\) for all \(A,B\in\{Y,G\}\), then, for every fixed observed data vector \(y\in\mathbb R^n\), the conditional mean and covariance of \(\widetilde G^{(\ell)}\mid \widetilde Y^{(\ell)}=y\) satisfy
\[
\widetilde\mu_{G\mid Y}^{(\ell)}(y) \to
\mu_G+\Sigma_{GY}\Sigma_{YY}^{-1}(y-\mu_Y),\qquad
\widetilde\Sigma_{G\mid Y}^{(\ell)} \to \Sigma_{GG}-\Sigma_{GY}\Sigma_{YY}^{-1}\Sigma_{YG}.
\]
\end{mainproposition}

\begin{mainproof}
The \(Y,Y\) block is positive definite because it includes \(\sigma^2 I_n\) with \(\sigma^2>0\). In particular, \(\det(\Sigma_{YY})\neq0\). Since \(\widetilde\Sigma_{YY}^{(\ell)}\to\Sigma_{YY}\), continuity of the determinant implies that \(\det(\widetilde\Sigma_{YY}^{(\ell)})\neq0\) for all sufficiently large \(\ell\), and continuity of matrix inversion gives \((\widetilde\Sigma_{YY}^{(\ell)})^{-1}\to\Sigma_{YY}^{-1}\). Applying the finite-dimensional Gaussian conditioning formulas and using continuity of addition and matrix multiplication gives the stated limits.
\end{mainproof}

This convergence is conditional on the fixed hyperparameters \(\vartheta\). It concerns only the displayed finite grids and monitored vector \(G\) and does not assert continuum convergence, convergence for unmonitored targets, or convergence after hyperparameter integration. The conditioning argument only inverts the noisy observation block \(\Sigma_{YY}\), so singular or deterministic integration constants are allowed because \(\boldsymbol\Sigma_\kappa\) is not inverted.

The asymptotic result does not by itself choose a finite representation. The two approximation parameters \(L\) and \(K\) control different errors: \(L\) determines how far the artificial Dirichlet boundary is placed from the observation interval, while \(K\) determines the largest retained frequency. Practical HSGP calibration rules, such as those of \citet{riutort2023practical}, choose these quantities for a single Gaussian process. For a differential ensemble, however, calibrating only the anchor process \(f_0\) is not enough. Derivative levels amplify high frequencies, integral levels depend on the boundary and lower integration limit, and posterior summaries may depend on cross-covariances among derivative, anchor, and integral levels.

\subsection{Target-aware calibration of the Differential Ensemble}\label{sec:tartare-calibration}

TARTARE (Target-Aware Range and Truncation for Approximate Representations of Ensembles) is the corresponding extension of the calibration strategy of \citet{riutort2023practical} from a single Gaussian process to a differential ensemble. Rather than calibrating \(L\) and \(K\) only for the anchor covariance, TARTARE calibrates the finite-rank approximation against a user-specified set of monitored ensemble levels. Let \(\mathcal M\subseteq\{-r,\ldots,r\}\) be this monitoring set, where \(p\in\mathcal M\) denotes the ensemble element \(f_p\). For example, \(\mathcal M=\{0\}\) monitors the anchor, \(\mathcal M=\{-2\}\) monitors the second derivative, and \(\mathcal M=\{-2,-1,0,1,2\}\) monitors a full second-order ensemble.

TARTARE sets \(L\) and \(K\) through the normalized length-scale \(u=\rho/W\) (the covariance length-scale \(\rho\) relative to the half-width \(W\) of the observation window) and two calibration constants: a spectral constant \(m\), which fixes the retained frequency range and hence \(K\); and a range coefficient \(c_{\mathcal M}\), which fixes the distance to the artificial Dirichlet boundary and hence \(L\). These constants are chosen by comparing exact and HSGP covariance matrices on fixed calibration grids for the monitored levels. The exact Gaussian ensemble remains the target object. The polynomial covariance from integration constants is retained exactly and is not part of the offline covariance approximation. Offline calibration builds reusable entries indexed by kernel family, smoothness, ensemble order, and monitored set. Online use applies one such entry to a dataset, fits the model, and enlarges the finite representation until the reported posterior summaries are stable.

\paragraph{Offline calibration.}
Offline calibration is done on the normalized observation window with half-width \(W=1\). For a dataset with half-width \(W\), the normalized length-scale is \(u=\rho/W\), and an entry with spectral constant \(m\), range coefficient \(c_{\mathcal M}\), and floor \(c_{\mathrm{floor}}\) is used through
\[
u=\frac{\rho}{W},
\qquad
c=c(u)=\max\{c_{\mathrm{floor}},c_{\mathcal M}u\},
\qquad
L=cW,
\qquad
K=\left\lceil m\,\frac{c}{u}\right\rceil,
\qquad
\Omega_K=\frac{\pi K}{2L}.
\]
Here \(\rho\) is the covariance length-scale, \(W\) is the observation-window half-width, \(u\) is the normalized length-scale, \(c\) is the range factor, \(c_{\mathrm{floor}}\) is the minimum allowed range factor (\(c_{\mathrm{floor}}=1.2\) in this implementation), \(c_{\mathcal M}\) is the calibrated coefficient for the monitored set, \(m\) is the spectral constant, \(L\) is the computational half-width, \(K\) is the number of retained eigenfunctions, and \(\Omega_K\) is the angular frequency of the \(K\)th retained Laplacian--Dirichlet eigenfunction. The constant \(m\) controls the operator-weighted spectral tail, while \(c\) controls the distance from the observation window to the artificial Dirichlet boundary.

The spectral constant is tied to a calibration order \(q_\star\). For ensemble calibration entries, \(q_\star\) is set to the ensemble order \(r\). Thus the spectral cutoff is chosen for the highest derivative order represented by the ensemble, even when \(\mathcal M\) monitors only lower-order levels. Derivative-only entries use their derivative order. To choose the frequency cutoff, TARTARE measures how much of the derivative-weighted spectral mass lies above a candidate cutoff \(\Omega\). For a \(q\)th derivative, differentiation weights the spectral density by \(\omega^{2q}\) in the variance of that derivative. We therefore define the normalized tail fraction
\[
R_q(\Omega;\rho)
=
\frac{
\int_{\Omega}^{\infty} \omega^{2q}S_\theta(\omega;\rho)\,\mathrm d\omega
}{
\int_0^{\infty} \omega^{2q}S_\theta(\omega;\rho)\,\mathrm d\omega
},
\]
whenever the denominator is finite. Small \(R_q(\Omega;\rho)\) means that the retained frequencies up to \(\Omega\) capture nearly all of the spectral mass relevant for the \(q\)th derivative. Given a pre-specified tail tolerance \(\varepsilon_{\mathrm{tail}}\), the inverse-tail requirement for \(m\) is
\[
R_{q_\star}\!\left(\frac{\pi m}{2\rho};\rho\right)
\leq \varepsilon_{\mathrm{tail}},
\]
The stored value of \(m\) includes a conservative multiplier \(s_m\) used by the implementation with default \(s_m=1.25\). For the squared exponential kernel,
\[
R_q^{\mathrm{SE}}(\Omega;\rho)
=
P\!\left(\chi^2_{2q+1}>\rho^2\Omega^2\right),
\qquad
m_q^{\mathrm{SE}}(\varepsilon_{\mathrm{tail}})
=
\frac{2}{\pi}
\sqrt{\chi^2_{2q+1;\,1-\varepsilon_{\mathrm{tail}}}},
\]
so \(m_q^{\mathrm{SE}}(\varepsilon_{\mathrm{tail}})\) is the smallest spectral constant satisfying the tail tolerance. For a Mat\'ern kernel with smoothness \(\nu\), let \(z(\Omega)=2\nu/(2\nu+\rho^2\Omega^2)\). For admissible \(q<\nu\),
\[
R_{q,\nu}^{\mathrm{Mat}}(\Omega;\rho)
=
I_{z(\Omega)}\!\left(\nu-q,q+\frac12\right),
\]
where \(I_z(a,b)\) is the regularized incomplete beta function. If \(z_\varepsilon\) solves \(I_{z_\varepsilon}(\nu-q,q+1/2)=\varepsilon_{\mathrm{tail}}\), then
\[
m_{q,\nu}^{\mathrm{Mat}}(\varepsilon_{\mathrm{tail}})
=
\frac{2\sqrt{2\nu}}{\pi}
\sqrt{\frac{1-z_\varepsilon}{z_\varepsilon}}.
\]
This step is defined only for mean-square admissible derivative orders. Squared exponential kernels support every finite order. A Mat\'ern kernel with smoothness \(\nu\) supports order \(q\) only when \(q<\nu\), equivalently when \(\omega^{2q}S_\nu(\omega)\) is integrable \citep{gpml}. Near this boundary the weighted tail decays slowly, so calibration entries at the largest admissible Mat\'ern order are excluded from the joint ensemble table together with inadmissible higher orders.

After \(m\) has fixed the retained spectral range, the remaining calibration task is to choose the domain range factor \(c_{\mathcal M}\), which determines how far the artificial Dirichlet boundary is placed from the normalized observation window for the monitored ensemble levels. Fix \(\mathcal M=\{p_1,\ldots,p_J\}\) and a calibration grid \(\xi_1,\ldots,\xi_G\). The exact covariance matrix \(K_{\mathrm{ex}}(u)\) is built from the kernel-driven blocks $\left(\mathcal A_{1,t_0}^{[p]}\mathcal A_{2,t_0}^{[q]}C_\theta\right)(\xi_g,\xi_h)$ for $p,q\in\mathcal M$.
The HSGP matrix \(K_{\mathrm H}(u,c)\) is built from the finite Laplacian--Dirichlet expansion on \([-c,c]\), with the same \(\mathcal A\) operators applied to the basis functions as in \cref{prop:integrals}. Thus both matrices correspond to the stacked monitored evaluations $\left((\mathcal A_{t_0}^{[p_j]}f_0)(\xi_g):j=1,\ldots,J,\ g=1,\ldots,G\right)$.
The comparison includes all monitored same-level and cross-level covariance blocks. It excludes the covariance contribution of integration constants, which is added exactly in the model. With \(V_{\mathrm{diag}}\) denoting the diagonal matrix of exact marginal standard deviations for the stacked evaluations, the primary joint error is
\[
E_{\mathrm{joint}}(u,c)
=
\frac{
\left\|V_{\mathrm{diag}}^{-1}\left(K_{\mathrm{H}}(u,c)-K_{\mathrm{ex}}(u)\right)V_{\mathrm{diag}}^{-1}\right\|_{\mathrm F}
}{
\left\|V_{\mathrm{diag}}^{-1}K_{\mathrm{ex}}(u)V_{\mathrm{diag}}^{-1}\right\|_{\mathrm F}
},
\]
where \(\|\cdot\|_{\mathrm F}\) is the Frobenius norm. Secondary guards use the same blocks. \(E_{\mathrm{pair}}(u,c)\) is the largest normalized Frobenius error over monitored block pairs. \(E_{\mathrm{var}}(u,c)\) is the largest relative marginal-variance error on the calibration grid. If positive levels are monitored, \(E_{t_0}(u,c)\) is the one-sided marginal-variance error near the lower integration limit \(t_0=-W\). A candidate \(c\) is accepted at a given \(u\) only if these errors are below their prescribed tolerances. The search records the smallest acceptable \(c\) for each grid value of \(u\), and these minima are summarized by the conservative rule \(c(u)=\max\{c_{\mathrm{floor}},c_{\mathcal M}u\}\).

The offline check is finite and conditional. A completed calibration entry validates the calibration grid for the monitored set \(\mathcal M\). It is not a continuum approximation theorem and it does not guarantee accuracy for ensemble components outside \(\mathcal M\). This check is separate from \cref{prop:finite-grid-posterior-convergence}, which gives conditional posterior convergence once the required finite covariance blocks converge. The calibration constants are reported in \cref{tab:tartare-calibration-constants} in the Supplementary Material.

\paragraph{Online use.}
The online part applies a completed calibration entry to a dataset. The covariance length-scale \(\rho\) remains an inferred model parameter with its fixed prior. TARTARE uses posterior summaries of \(\rho\) only to choose the finite HSGP representation. Smaller values of \(\rho\) place more spectral mass at high frequencies and therefore require larger \(K\) to satisfy the derivative-weighted tail criterion. Phase A guards against posterior mass at such short length-scales, and Phase B checks whether the reported posterior summaries are stable under further basis enlargement.

\emph{Phase A: entry search.} All Phase A refits use the same prior on \(\rho\). Only \(K\) and \(L\) are changed. The first pass sets a working length-scale \(\rho_{\mathrm{work}}=\rho_{\mathrm{start}}\), with \(\rho_{\mathrm{start}}=0.5W\). The working value \(\rho_{\mathrm{work}}\) is not a parameter estimate. It is only the length-scale value used to design the next basis. For any current \(\rho_{\mathrm{work}}\), set \(u=\rho_{\mathrm{work}}/W\) and compute
\[
c=\max\{c_{\mathrm{floor}},c_{\mathcal M}u\},
\qquad
L=cW,
\qquad
K=\left\lceil m\,\frac{c}{u}\right\rceil .
\]
The model is then fitted with that prior. Let \(\rho_{\alpha}\) denote the posterior \(\alpha_\rho\)-quantile of \(\rho\), where \(\alpha_\rho\) is a small configured probability level, and let \(\delta_\rho\) be the configured length-scale tolerance. Phase A computes \(\ell_{\min}=mcW/K\), the smallest length-scale for which the current basis count attains the stored spectral-tail calibration, and requires \(\rho_{\alpha}-\delta_\rho\geq\ell_{\min}\). It also requires the operator-weighted spectral tail at \(\Omega_K\), evaluated at \(\rho_{\alpha}\), to be no larger than \(\varepsilon_{\mathrm{tail}}\). If either check fails, the next basis is designed with \(\rho_{\mathrm{work}}=\rho_{\alpha}\), and the same model with the same prior is refit with the enlarged representation. The loop stops when both checks hold or when the Phase A iteration cap is reached.

\emph{Phase B: diagnostic refinement.} Phase B starts from the accepted Phase A fit. Each refinement increases \(K\), recomputes \(c\) from the configured posterior length-scale summary of \(\rho\), and enlarges \(c\) further if the observation or prediction grid lies too close to the computational boundary. A refit is accepted only if the following checks pass. First, the Phase A length-scale and spectral-tail checks must still hold. The high-frequency fraction of derivative-weighted posterior basis energy, computed from squared basis coefficients weighted by \(\lambda_k^{q_\star}\), must be below its tolerance. If that fraction is unavailable, the total derivative-weighted posterior basis energy must change by less than its relative tolerance from the previous fit. Second, the posterior length-scale stability summary and leave-one-out expected log predictive density must change by less than their configured tolerances, and sampler diagnostics must pass. Third, for every monitored level in \(\mathcal M\), the posterior mean and standard-deviation curves must change by less than their prescribed tolerances from the previous fit. The same check is applied to posterior means and standard deviations of any unfixed integration constants. The curve comparison is first made on a trimmed interior prediction grid, obtained by dropping a small fraction of points near each endpoint, and then checked on the full grid to detect endpoint instability. Phase B terminates only after a configured number of consecutive enlarged-basis refits pass all checks. This requirement guards against reporting summaries that are artifacts of a particular truncation.

\section{Simulation study}\label{sec:simulation}

The simulation study evaluates whether TARTARE's target-aware ensemble calibration improves finite-rank HSGP approximations for differential ensemble inference. The comparison is against anchor-level calibration in the spirit of \citet{riutort2023practical}, which chooses the HSGP basis for the observed anchor process \(f_0\) alone. The central question is whether calibrating the basis to derivative or full-ensemble targets improves posterior recovery of the monitored ensemble levels, especially the second derivative \(f_{-2}\), without sacrificing anchor or integral summaries.

We simulate data from a second-order differential ensemble \((f_{-2},f_{-1},f_0,f_1,f_2)\) and observe only the noisy anchor \(f_0\). The simulation uses \(1000\) replicated datasets for each configuration and evaluates posterior summaries on a \(101\)-point prediction grid. Within each replicate, data were generated on \([-1,1]\) from \(y_i=f_0(t_i)+\epsilon_i\), \(\epsilon_i\sim N(0,0.10^2)\), at \(n\in\{25,50\}\) equally spaced design points. The design crossed squared exponential (SE) and Mat\'ern \(7/2\) (Mat7/2) anchor kernels with normalized length-scales \(\rho\in\{0.35,0.65\}\). The positive-index constants were fixed at their generating values, \(\kappa_1=0.60\) and \(\kappa_2=-0.40\), to isolate the kernel-driven HSGP approximation. Boundary-constant sensitivity is reported in \cref{appendix:boundary-constant-sensitivity}.

\Cref{tab:tartare-simulation-summary} summarizes the simulation results. Exact is the analytic exact-covariance reference at the generating hyperparameters. HSGP denotes an anchor-level practical Hilbert space Gaussian process rule in the style of \citet{riutort2023practical}. The TARTARE rules are T-f monitoring \(\mathcal M=\{0\}\), T-D2 monitoring \(\mathcal M=\{-2\}\), and T-all monitoring \(\mathcal M=\{-2,-1,0,1,2\}\). Spline is a smoothing spline point estimate comparator. The table reports failures from the online phase of the TARTARE calibration procedure, selected basis size \(K\), range factor \(c\), integrated root mean squared error (IRMSE), pointwise \(95\%\) coverage, and the geometric-mean IRMSE ratio relative to Exact within each configuration. Coverage is the percentage, over prediction-grid locations and available replicated fits, for which the simulated latent value \(f_p(t)\) lies inside the pointwise central \(95\%\) posterior interval for the corresponding ensemble level \(p\).

\begin{table}[htbp]
\centering
\caption{\label{tab:tartare-simulation-summary}Simulation summary for second-order differential ensemble fits. Each row aggregates 1000 replicated datasets for the listed configuration. Numerical summaries other than Failed are computed among completed fits. Coverage is not reported for the point estimate spline comparator.}
{\catcode`\%=12
\resizebox{\textwidth}{!}{\input{tables/tartare_simulation_summary.tex}}}
\end{table}

\Cref{tab:tartare-simulation-summary} shows the same qualitative ordering across sample sizes and length-scales. Exact gives the Monte Carlo reference and has coverage close to the nominal pointwise level. In the SE cases, the anchor-level HSGP rule uses the smallest bases but under-resolves derivatives: \(f_{-2}\) IRMSE is several times the Exact value and \(f_{-2}\) coverage is consistently low, while the anchor and integral levels remain much closer to Exact. The anchor-targeted T-f rule improves SE derivative error relative to HSGP, but it is not uniformly adequate for derivative uncertainty. In the Mat7/2 cases this distinction is sharper: T-f has geometric-mean IRMSE close to Exact, yet \(f_{-2}\) coverage remains well below nominal.

Monitoring \(f_{-2}\) directly through T-D2 moves derivative coverage close to the Exact reference and keeps anchor and integral summaries close to Exact. T-all gives essentially the same pointwise IRMSE and coverage as T-D2 in this table, but usually needs a larger basis, especially for Mat7/2. The smoothing-spline point estimates keep anchor and integral errors small in several settings. Their geometric-mean IRMSE ratios are larger than the corresponding TARTARE ratios, they provide no posterior coverage, and \(f_{-2}\) recovery is weak across the kernel settings. Additional exact-vs-HSGP posterior covariance diagnostics from the same simulation run compare T-D2 and T-all in \cref{appendix:additional-simulation-diagnostics} in the Supplementary Material.

The reported online calibration failures are nonzero but modest, and they are most visible for the derivative-aware and full-ensemble rules with \(n=25\). These failures refer to the Phase B online refinement contract described in \cref{sec:tartare-calibration}. Passing Phase B requires length-scale and spectral-tail checks, posterior basis-energy diagnostics, leave-one-out and sampler diagnostics, monitored-curve stability, and when integration constants are estimated rather than fixed, stability of their posterior means and standard deviations.

Overall, the table suggests that target-aware TARTARE calibration is a necessary computational safeguard when finite-rank HSGP approximations are used for derivative or ensemble inference. Anchor-only calibration can be adequate for anchor-level summaries, but derivative and ensemble summaries require derivative-aware calibration, often with larger finite-rank representations or enlarged computational domains to reduce boundary-induced approximation error from the Laplacian--Dirichlet eigen-representation.

\section{Application -- Motorcycle crash data}\label{sec:data}

We apply the method to motorcycle crash data to infer a coherent kinematic state from noisy head-acceleration measurements, using short-horizon turning-point behavior as a focused example of a joint ensemble functional.

The \texttt{mcycle} data comprise $133$ measurements of head acceleration from a simulated $60\,\mathrm{ms}$ motorcycle crash experiment and are a well-known example in nonparametric regression \citep{silverman1985spline,desouza2014switching}. For each observation, \texttt{times} gives elapsed time in milliseconds after impact and \texttt{accel} gives the corresponding head acceleration in units of \(g\). We use acceleration as the anchor of a differential ensemble and ask whether the inferred kinematic state predicts a near-future reversal of velocity during impact. This target is not a feature of a smoothed acceleration curve alone but depends jointly on acceleration, jerk, snap, velocity, and position.

The mean structure is a second-order differential ensemble with acceleration as the anchor \(f_0\). The negative levels are thus jerk \(f_{-1}\) and snap \(f_{-2}\), while the positive levels are velocity and position,
\[
f_1(t)=\int_{t_0}^{t}f_0(u)\,du+\kappa_1,\qquad
f_2(t)=\int_{t_0}^{t}\int_{t_0}^{v}f_0(u)\,du\,dv+(t-t_0)\kappa_1+\kappa_2.
\]
Here \(\kappa_1\) and \(\kappa_2\) represent initial velocity and initial position at crash time \(t_0\). We set \(\kappa_1\sim N(8\,\mathrm{m}/\mathrm{s},\,2.5^2\,\mathrm{m}^2/\mathrm{s}^2)\) to encode plausible forward motion at the start of the impact window without fixing the velocity level. We set \(\kappa_2\sim N(0\,\mathrm{m},\,0.05^2\,\mathrm{m}^2)\) independently of \(\kappa_1\), so that the position origin is centered at the crash-time reference while still allowing modest uncertainty in that origin. The data are also substantially heteroscedastic, so we model the log measurement standard deviation with a first-order differential ensemble \((h_{-1},h_0,h_1)\), where \(h_0(t)=\log\sigma(t)\), and \(\sigma(t)\) is the acceleration measurement standard deviation. Its cumulative component is anchored at \(t_0\) by fixing the integration constant implicitly as \(\kappa_h=h_0(t_0)\), so the cumulative log standard deviation level starts from the modeled crash-time noise scale rather than from a separate free parameter.

Both differential ensembles used Mat\'ern \(7/2\) anchor kernels with the Laplacian--Dirichlet basis representation. A common basis count and computational domain were selected by a TARTARE calibration on the motorcycle time grid with monitoring set \(\mathcal M=\{-2\}\), corresponding to the snap level of the acceleration ensemble, and the resulting basis was reused for the log-standard-deviation ensemble.  With centered observation half-width \(W=27.6\,\mathrm{ms}\), the two-phase calibration terminated with stable phase B diagnostics and selected \(K_{\mathrm{TARTARE}}=181\), \(c_{\mathrm{TARTARE}}=1.42\), and \(L_{\mathrm{TARTARE}}=39.31\,\mathrm{ms}\). Posterior trajectories were evaluated on a grid of \(500\) equidistant prediction points spanning the observed time range, and posterior draws for velocity and position were converted from \(g\,\mathrm{ms}\) and \(g\,\mathrm{ms}^2\) to \(\mathrm{m}/\mathrm{s}\) and \(\mathrm{m}\). Covariance parameter priors, posterior summaries, and trace plots are given in \cref{appendix:additional-application-information} in the Supplementary Material.

Posterior means and pointwise 95\% credible bands for the two ensembles are shown in \cref{fig:motorcycle-posterior-ensemble}. The ensemble for the mean acceleration captures the sharp deceleration phase and propagates it to jerk, snap, velocity, and position, with uncertainty increasing most visibly for the differentiated levels. The log standard deviation ensemble shows time-varying measurement variability rather than a constant noise scale, together with the associated noise dynamics and cumulative log uncertainty.

\begin{figure}[htbp]
\centering
\includegraphics[width=\textwidth]{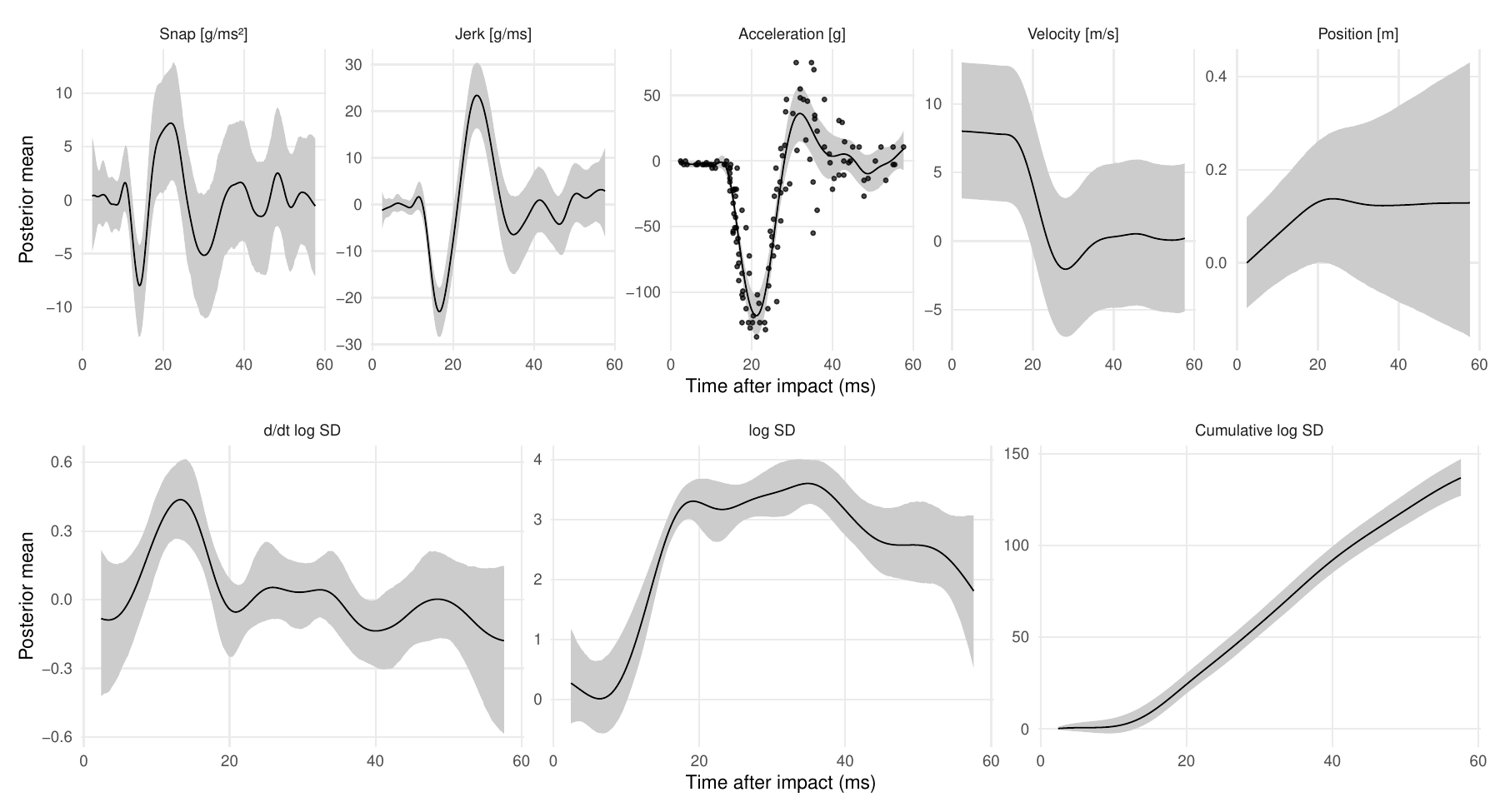}
\caption{Posterior ensemble summaries for the motorcycle crash data. The top row shows posterior means and 95\% credible bands for the mean differential ensemble levels (snap, jerk, acceleration, velocity, and position), with observed acceleration points overlaid. The bottom row shows posterior means and 95\% credible bands for the log standard deviation ensemble levels.}
\label{fig:motorcycle-posterior-ensemble}
\end{figure}

One way to use the joint inference over the full ensemble is to ask a short-horizon rebound question. Let \(I\) denote the observed crash window. At each time in \(I\), the question is whether the current kinematic state predicts a velocity reversal within the next \(5\,\mathrm{ms}\), and if so how far the head would have moved by that turning point. The reversal probability uses the posterior relation among velocity, acceleration, jerk, and snap, and the displacement threshold probability additionally uses the posterior position level. Thus the target is not an acceleration curve summary, but a posterior functional of the coupled snap--jerk--acceleration--velocity--position state.

For a starting time \(t\), define the local velocity forecast through snap,
\[
V_t(\tau)=
f_1(t)+\tau f_0(t)+\frac{\tau^2}{2}f_{-1}(t)+\frac{\tau^3}{6}f_{-2}(t),
\]
and let \(\tau^{\ast}(t)\) be the earliest positive root of \(V_t(\tau)\) in \((0,\tau_{\max}]\), setting \(\tau^{\ast}(t)=\infty\) if no such root exists. We take \(\tau_{\max}=5\,\mathrm{ms}\), so \(\tau^{\ast}(t)\le \tau_{\max}\) is the event that the local forecast predicts a velocity reversal within the next \(5\,\mathrm{ms}\). For finite \(\tau^{\ast}(t)\), the corresponding displacement is
\[
x_{\mathrm{turn}}(t)=
f_2(t)+\tau^{\ast}(t)f_1(t)+\frac{\tau^{\ast}(t)^2}{2}f_0(t)+\frac{\tau^{\ast}(t)^3}{6}f_{-1}(t)+\frac{\tau^{\ast}(t)^4}{24}f_{-2}(t).
\]
Let \(\mathcal D=(\texttt{times}_i,\texttt{accel}_i)_{i=1}^n\) denote the observed motorcycle data. We summarize this target by the marginal turning probability and by displacement-threshold probabilities,
\[
p_{\mathrm{turn}}(t)=P(\tau^\ast(t)\le\tau_{\max}\mid \mathcal D),
\qquad
p_d(t)=P(x_{\mathrm{turn}}(t)>d,\ \tau^\ast(t)\le\tau_{\max}\mid \mathcal D).
\]

We estimate these posterior quantities from sampled trajectories evaluated on a finite prediction grid \(\mathcal G\). Write \(\widehat f_p^{(s)}\) for the \(s\)th sampled trajectory from the fitted posterior at level \(p\), evaluated on \(\mathcal G\). For each grid time \(t\in\mathcal G\cap I\), the draw-level quantities \(V_t^{(s)}\), \(\tau^{\ast(s)}(t)\), and \(x_{\mathrm{turn}}^{(s)}(t)\) are obtained by replacing \(f_p\) with \(\widehat f_p^{(s)}\) in the definitions above. The probabilities \(p_{\mathrm{turn}}(t)\) and \(p_d(t)\) are then estimated by the corresponding posterior draw proportions, so \(\mathcal G\) is used to summarize the posterior functional rather than to construct finite-difference derivatives.

\Cref{fig:motorcycle-turning-point-composite} summarizes the posterior evidence for short-horizon rebound over start time and displacement threshold. The posterior turning probability \(p_{\mathrm{turn}}(t)\) peaks at \(0.612\) around \(20.1\,\mathrm{ms}\). At that time, among draws with \(\tau^\ast(t)\le\tau_{\max}\), the posterior median of \(x_{\mathrm{turn}}(t)\) is \(11.8\,\mathrm{cm}\), with central 95\% interval \((0.7,23.2)\,\mathrm{cm}\), and the threshold probability \(p_d(t)\) at \(d=10\,\mathrm{cm}\) is \(0.383\). The figure displays \(p_d(t)\) as a time-by-threshold heatmap, posterior probability trajectories for four selected thresholds, and the marginal turning probability \(p_{\mathrm{turn}}(t)\), identifying when the inferred state most strongly supports a substantial near-future rebound displacement.

\begin{figure}[htbp]
\centering
\includegraphics[width=\textwidth]{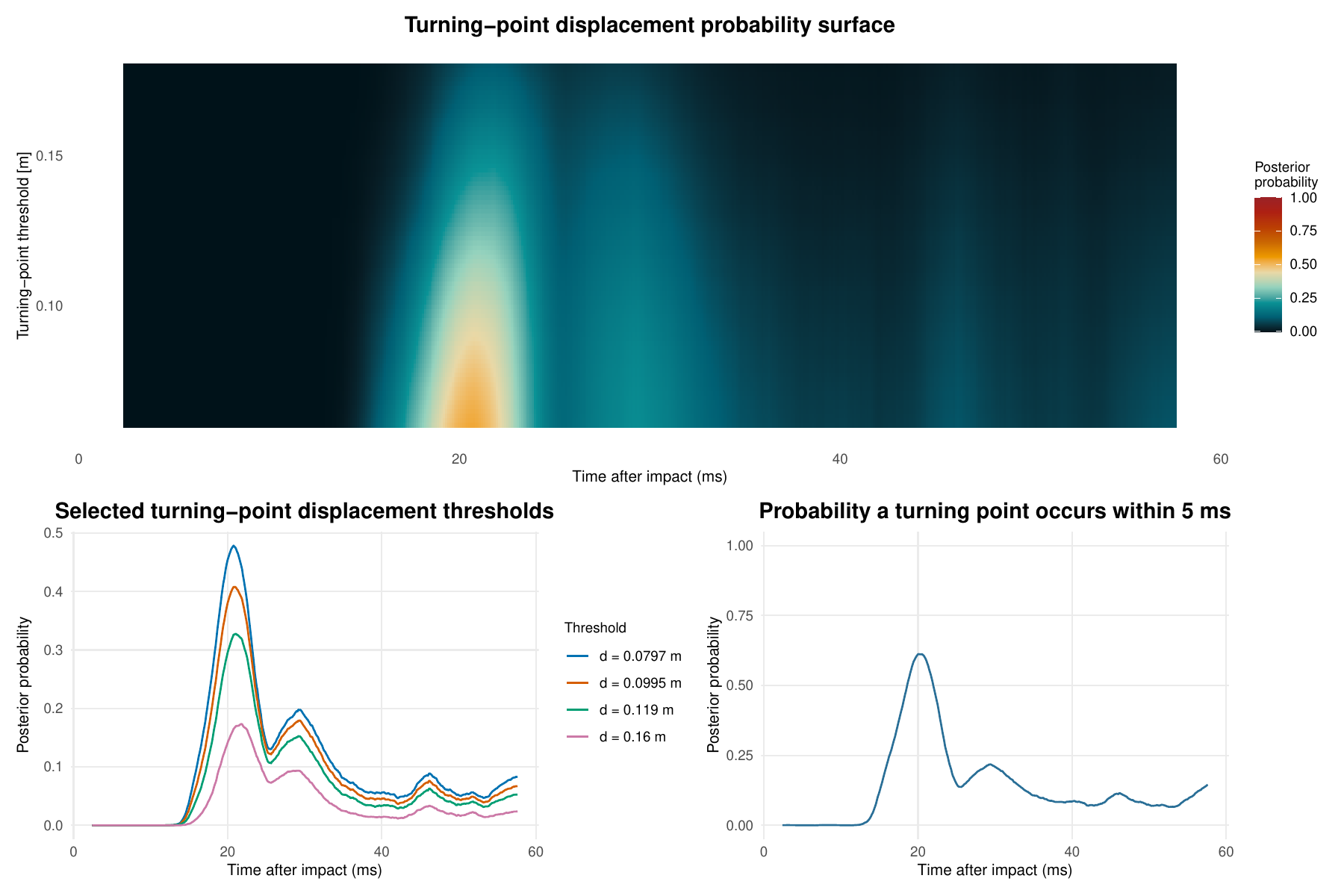}
\caption{Posterior summary of the short-horizon turning-point functional in the motorcycle crash application. The top panel shows \(p_d(t)=P(x_{\mathrm{turn}}(t)>d,\ \tau^\ast(t)\le 5\,\mathrm{ms}\mid \mathcal{D})\) over start time \(t\) and displacement threshold \(d\). The bottom-left panel shows \(p_d(t)\) for four selected thresholds, and the bottom-right panel shows \(p_{\mathrm{turn}}(t)=P(\tau^\ast(t)\le 5\,\mathrm{ms}\mid \mathcal{D})\). Together the panels show when the jointly inferred kinematic state most strongly supports a substantial near-future rebound displacement.}
\label{fig:motorcycle-turning-point-composite}
\end{figure}

\section{Discussion}\label{sec:discussion}

Observed functional data often enter a model through one anchor curve, but the scientific target is rarely limited to that curve.  Rates, accelerations, accumulated exposure, boundary states, and turning-point or threshold functionals all depend on several linked levels of a differential state.  The contribution is not the observation that Gaussian processes can be differentiated or integrated. It is the organization of these operations around a chosen anchor, with covariance propagated across all levels and with positive-index boundary information represented explicitly. TARTARE is the necessary calibration procedure that makes the finite-rank implementation target-aware when derivative or ensemble summaries are reported.  This organization keeps the inferential target visible from model specification to posterior reporting.

The exact construction separates two sources of uncertainty. The anchor covariance induces the derivative, anchor, integral, and cross-level covariance blocks through linear operators. The integration constants contribute a finite-dimensional polynomial covariance above the anchor. This distinction is statistically important.

The weak Gaussian construction is the inferential foundation of the paper. It gives the finite-dimensional Gaussian laws and covariance blocks needed for posterior computation. The pathwise differential chain is a stronger representation that requires additional regularity and version conditions.

The transformed HSGP implementation provides a reusable finite-rank representation for stationary one-dimensional anchor kernels. Derivatives and integrals are applied to Laplacian--Dirichlet basis functions, while the covariance family enters through spectral weights.  For stationary one-dimensional anchor kernels, the transformed HSGP representation approximates the kernel-driven operator blocks by applying derivatives and definite integrals to Laplacian--Dirichlet basis functions and weighting them by the anchor spectral density.  This makes differential ensembles computable to any prescribed finite order permitted by the regularity assumptions without deriving new covariance-specific derivative or integral formulas for each order.  The covariance contribution from integration constants is finite dimensional and is retained exactly, so truncation error is confined to the kernel-driven part.  Operator-aware calibration then asks whether a chosen finite-rank representation is adequate for a prespecified monitored set on a calibration grid.  The approximation theory establishes uniform bounds for derivative covariance blocks and finite-grid posterior convergence conditional on fixed hyperparameters. It does not establish a universal finite-\(K,L\) error bound for all posterior summaries, continuum convergence, or convergence after hyperparameter integration.

TARTARE addresses the practical gap between asymptotic approximation theory and a finite basis used in a fitted model. Its role is diagnostic and target-aware: it asks whether a chosen HSGP representation is accurate enough for specified monitored ensemble levels and whether reported posterior summaries are stable under basis enlargement.  Derivative-aware and ensemble-aware calibration correct this under-resolution in the studied scenarios; when derivative and full-ensemble calibration are compared jointly, their posterior discrepancies are broadly similar rather than uniformly ordered.  Recorded calibration failures in the production study are therefore best read as failures of the finite search contract for the monitored approximation, separate from Markov chain Monte Carlo diagnostics.

The motorcycle application demonstrates interpretable posterior state summaries from the same anchored construction.  Acceleration is the observed anchor, but the posterior summaries involve jerk, snap, velocity, position, and a short-horizon turning-point functional that depends on all these levels.  The analysis is useful because the state is sampled jointly: velocity changes, threshold exceedance probabilities, and turning-point displacements inherit the covariance implied by the acceleration anchor and by the initial-state priors.  This example shows how the framework reports coherent state uncertainty in an applied model; it is not a claim of broad empirical superiority over all competing functional-data workflows.

Several limitations define the current scope. The method is one-dimensional in its developed computational form and uses stationary anchor kernels for the transformed HSGP approximation. High-order derivative targets require kernels with sufficient mean-square smoothness and increase numerical demands because differentiation weights high frequencies. Positive-index inference is only as credible as the boundary specification, since independent integration constants remain prior-driven without additional information.  High-order derivatives require kernels smooth enough for the requested mean-square derivatives, and they place increasing numerical stress on finite-rank approximations because differentiation emphasizes high frequencies.  Mat\'ern kernels near their differentiability limits are especially demanding, both analytically and computationally.  The calibration results are finite-grid and target-specific, and the present simulations provide scenario evidence for the calibration workflow rather than a general finite-\(K,L\) posterior error theorem.

Future work should sharpen finite-basis posterior error theory, especially with inferred hyperparameters and nonlinear state functionals. Adaptive calibration rules that select monitoring sets from the posterior questions actually reported would make TARTARE less dependent on precomputed lookup tables.  Adaptive or joint calibration criteria could choose monitored targets from the posterior questions actually reported, while preserving the separation between approximation certification and sampler diagnostics.  Broader comparisons with spline derivatives, finite differences of posterior draws, derivative-GP workflows, and general linear-operator GP models would clarify where the ensemble structure is most consequential in practice.  Applied models would also benefit from the optional dependent integration-constant extension in the online supplement, particularly when initial states are partially informed by external measurements or physical constraints.  Finally, extending anchored differential ensembles to multidimensional and spatio-temporal domains would require operator-specific eigensystems and calibration rules, but would keep the same central principle: choose the anchor observed by the data, model the surrounding differential state jointly, and make boundary information explicit.

\input{bibliography.tex}
\appendix
\renewcommand{\thesection}{S}
\renewcommand{\thesubsection}{S.\arabic{subsection}}
\renewcommand{\thesubsubsection}{S.\arabic{subsection}.\arabic{subsubsection}}
\renewcommand{\thefigure}{S.\arabic{figure}}
\renewcommand{\thetable}{S.\arabic{table}}
\renewcommand{\theHsection}{S}
\renewcommand{\theHsubsection}{S.\arabic{subsection}}
\renewcommand{\theHsubsubsection}{S.\arabic{subsection}.\arabic{subsubsection}}
\renewcommand{\theHfigure}{S.\arabic{figure}}
\renewcommand{\theHtable}{S.\arabic{table}}
\setcounter{figure}{0}
\setcounter{table}{0}

\section{Supplementary Material}

\subsection{Proofs for main text results}\label{appendix:main-text-proofs}
\input{main-proofs.tex}

\subsection{Equivalent constructions above the anchor}\label{appendix:alternative-positive-constructions}
Fix \(r \in \mathbb{N}\) and let \(f_0\) be an anchor process satisfying the regularity assumptions of the main construction. This subsection compares three equivalent ways to define the positive-index components \(f_1,\ldots,f_r\): the \(t_0\)-based definite-integral construction used in the main text, a recursive construction through initial
  values, and a formulation based on indefinite antiderivatives.

The main-text construction defines the positive-index components by
\[
f_p(t)= (\mathcal I_{t_0}^p f_0)(t)+\sum_{j=1}^p \frac{(t-t_0)^{p-j}}{(p-j)!}\kappa_j,
\]
where $\boldsymbol{\kappa}:=(\kappa_1,\ldots,\kappa_r)^\top \sim N(\boldsymbol{\mu}_\kappa,\boldsymbol{\Sigma}_\kappa)$ is independent of $f_0$. The repeated definite integral is the canonical contribution generated by \(f_0\). Once \(f_0\) is fixed, this term is fixed. All remaining freedom is the finite-dimensional polynomial determined by \(\kappa_1,\ldots,\kappa_p\).

The same process class can be defined recursively by
\[
f_1(t):=\int_{t_0}^t f_0(u)\,du+\kappa_1,\quad
f_p(t):=\int_{t_0}^t f_{p-1}(u)\,du+\kappa_p,\quad p=2,\ldots,r,
\]
which gives \(D_t f_p(t)=f_{p-1}(t)\) and \(f_p(t_0)=\kappa_p\) for \(p=1,\ldots,r\). An elementary induction yields the definite-integral representation above. Each recursion adds the initial value \(f_p(t_0)=\kappa_p\), while previously introduced constants are integrated into lower-degree polynomial terms. The recursive and definite integral forms therefore differ only in presentation. The same quantities \(\kappa_p\) may be interpreted either as initial values or as coefficients of the polynomial correction.

The indefinite antiderivative view starts from the general solution of \(\frac{d^p}{dt^p}f_p(t)=f_0(t)\). For fixed \(p\geq 1\),
\[
f_p(t)= (\mathcal I_{t_0}^p f_0)(t)+\sum_{m=0}^{p-1}\frac{(t-t_0)^m}{m!}a_{p,m},
\]
where \(a_{p,0},\ldots,a_{p,p-1}\) are arbitrary constants, or arbitrary random variables in a stochastic formulation. Thus \(p\)-fold indefinite integration determines \(f_p\) only up to a polynomial of degree \(p-1\).

Most choices of these polynomial coefficients do not define a coherent differential ensemble. The ensemble also requires the chain \(D_t f_p(t)=f_{p-1}(t)\) for \(p=1,\ldots,r\). This forces the compatibility relations
$a_{p,m+1}=a_{p-1,m}$ for $m=0,\ldots,p-2$. Writing $\kappa_j:=a_{j,0}=f_j(t_0)$ for $j=1,\ldots,r$, these relations reduce the polynomial term to
\[
\sum_{j=1}^p \frac{(t-t_0)^{p-j}}{(p-j)!}\kappa_j.
\]
Hence the indefinite antiderivative view becomes equivalent to the recursive and definite integral constructions once the differential chain is imposed. The advantage of the definite integral formulation is that it selects the unique \(p\)-fold antiderivative whose derivatives of orders \(0,\ldots,p-1\) vanish at \(t_0\), and places all remaining finite-dimensional freedom explicitly in \(\kappa_1,\ldots,\kappa_p\).

Conditional on \(f_0\), only \(\boldsymbol{\kappa}\) remains random above the anchor. One may therefore retain the integration constants as Gaussian parameters, constrain them through direct or virtual observations, or fix selected constants at zero. Fixing selected constants at zero is a modeling restriction, not a harmless change of parameterization, because it removes part of the finite-dimensional uncertainty above the anchor.

\subsection{Dependent integration constants and coupled differential ensembles}\label{appendix:dependent-integration-constants}

Use the notation and operator conventions of \cref{sec:methods}. Let \(\mathfrak{f}:=(f_{-r},\ldots,f_r)\) be defined from the anchor process \(f_0\) as in \cref{prop:gaussian-ensemble}, except that \(\boldsymbol{\kappa}:=(\kappa_1,\ldots,\kappa_r)^\top\) may depend on \(f_0\). We first derive covariance blocks for a valid jointly Gaussian pair \((f_0,\boldsymbol{\kappa})\), then characterize admissibility, and finally give constructions that realize such pairs.

Write
\begin{align*}
\boldsymbol{\gamma}(s):=\Cov\bigl(f_0(s),\boldsymbol{\kappa}\bigr)=\bigl(\gamma_1(s),\ldots,\gamma_r(s)\bigr)^\top,
\qquad
\gamma_m(s):=\Cov\bigl(f_0(s),\kappa_m\bigr),
\end{align*}
and let \(\boldsymbol{\Sigma}_{\kappa}:=\Cov(\boldsymbol{\kappa})\). Assume \cref{ass:ms-regularity}. For \(m=1,\ldots,r\), assume \(\gamma_m\in C^r([t_0,t_1])\). Also assume that all displayed cross-covariance derivatives below exist as continuous functions on their stated domains with endpoint derivatives interpreted one-sided. These assumptions justify differentiating covariance kernels and passing repeated \(L^2\)-integrals through covariance by Cauchy--Schwarz and Fubini. In particular, an evaluation of \(\mathcal A_{t_0}^{[p]}\) at an endpoint for \(p<0\) denotes the corresponding one-sided mean-square derivative value. Whenever a pathwise statement is used, also assume \cref{ass:pathwise-regularity} together with the joint measurability and integrability hypotheses stated in \cref{prop:samplepaths}.

The ensemble representation is unchanged. The only difference is that \((f_0,\boldsymbol{\kappa})\) is jointly Gaussian rather than independent. Consequently, the mean functions remain those from \cref{prop:gaussian-ensemble}. Whenever the pathwise assumptions above are imposed, the pathwise differential chain relations from \cref{prop:samplepaths} continue to hold. In particular, for every \(p=1,\ldots,r\) and \(j=0,\ldots,p-1\), one has \(\frac{d^j}{dt^j}f_p(t_0)=\kappa_{p-j}\) almost surely. Thus dependence changes only the covariance blocks through \(\boldsymbol{\gamma}\), while the representation and boundary identities remain unchanged. Those additional terms are derived below.

\begin{lemma}[Cross-covariance with the integration constants]\label{lem:appendix-cross-basic}
For each \(m\in\{1,\ldots,r\}\), each \(p\in\{-r,\ldots,r\}\), and each \(s\in[t_0,t_1]\) when \(p\ge 0\) or \(s\in(t_0,t_1)\) when \(p<0\),
\begin{align*}
\Cov\bigl((\mathcal A_{t_0}^{[p]}f_0)(s),\kappa_m\bigr)=(\mathcal A_{t_0}^{[p]}\gamma_m)(s).
\end{align*}
\end{lemma}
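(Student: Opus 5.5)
We split into the three cases of the signed-order operator and in each case pass the covariance with \(\kappa_m\) through the \(L^2\)-limit or \(L^2\)-integral that defines \((\mathcal A_{t_0}^{[p]}f_0)(s)\). The tool is continuity of covariance in \(L^2\): if \(X_h\to X\) in \(L^2(\Omega)\) and \(Z\in L^2(\Omega)\), then \(\Cov(X_h,Z)\to\Cov(X,Z)\). This follows from Cauchy--Schwarz, \(|\Cov(X_h-X,Z)|\le \E[(X_h-X)^2]^{1/2}\operatorname{Var}(Z)^{1/2}\), together with convergence of the means.

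\emph{Case \(p=0\).} The claim is the definition of \(\gamma_m\).

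\emph{Case \(p<0\).} Write \(k=-p\) and fix \(s\in(t_0,t_1)\). By \cref{ass:ms-regularity} and the standard theory invoked in the proof of \cref{prop:gaussian-ensemble}, the mean-square derivatives are obtained recursively. First, \(D_s f_0(s)\) is the \(L^2\)-limit of \(h^{-1}(f_0(s+h)-f_0(s))\). Covariance with \(\kappa_m\) is linear, so
\[
\Cov\bigl(h^{-1}(f_0(s+h)-f_0(s)),\kappa_m\bigr)=h^{-1}(\gamma_m(s+h)-\gamma_m(s)),
\]
and the left side converges to \(\Cov(D_sf_0(s),\kappa_m)\) by the continuity fact. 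Hence \(\gamma_m\) is differentiable with derivative \(\Cov(D_sf_0(s),\kappa_m)\). This is consistent with the standing assumption \(\gamma_m\in C^r\), and it identifies the derivative.

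We then induct on the order. If \(\Cov(D^{j}f_0(s),\kappa_m)=\gamma_m^{(j)}(s)\) for all interior \(s\), the same argument applied to \(D^jf_0\) gives the claim for \(D^{j+1}f_0\), up to \(j+1=k\le r\). Interior points \(s\) are used so that two-sided quotients are available. This is why the statement restricts to \(s\in(t_0,t_1)\) when \(p<0\).

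\emph{Case \(p>0\).} Here \((\mathcal I_{t_0}^pf_0)(s)\) is the \(L^2\) integral with kernel \(K_{p,s}(u)=(s-u)^{p-1}\mathbf 1_{[t_0,s]}(u)/(p-1)!\). As in the proof of \cref{prop:samplepaths}, the weak-integral characterization \(\E[ZJ]=\int K_{p,s}(u)\E[Zf_0(u)]\,du\) holds for every \(Z\in L^2(\Omega)\). This is justified by Fubini, using the bound \(\sup_u\E[f_0(u)^2]<\infty\) and Cauchy--Schwarz, exactly as in the square-integrability step of \cref{prop:gaussian-ensemble}.

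Apply the characterization with \(Z=\kappa_m\) and \(Z=1\) and subtract the product of the means. This gives
\[
\Cov\bigl((\mathcal I_{t_0}^pf_0)(s),\kappa_m\bigr)=\int_{t_0}^s K_{p,s}(u)\gamma_m(u)\,du=(\mathcal I_{t_0}^p\gamma_m)(s),
\]
valid for all \(s\in[t_0,t_1]\). As an alternative, one can approximate by Riemann sums converging in \(L^2\), which uses mean-square continuity of \(f_0\), and then apply the continuity fact.

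\emph{Main obstacle.} The only delicate point is justifying the interchange of covariance with the limit or integral. I will handle it uniformly by the \(L^2\)-continuity of \(Z\mapsto\Cov(\cdot,Z)\) together with Fubini. No Gaussianity is needed beyond \(\kappa_m\in L^2\).
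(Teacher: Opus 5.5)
Your proof is correct and follows the paper's own three-case argument: the definition of $\gamma_m$ for $p=0$, passing the covariance through the $L^2$ integral for $p>0$, and passing it through the mean-square derivative for $p<0$. The only difference is that the paper takes both interchanges from its standing assumptions, whereas you derive them from $L^2$-continuity of $X\mapsto\Cov(X,\kappa_m)$ together with Fubini or Riemann sums; as a side effect, your argument shows that the interior differentiability of $\gamma_m$ is a consequence rather than an extra hypothesis.
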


\begin{proof}
For $p=0$ this is the definition of $\gamma_m$. For \(p=\ell\geq 1\), the assumed covariance--integration interchange gives
\begin{align*}
\Cov\bigl((\mathcal I_{t_0}^{\ell}f_0)(s),\kappa_m\bigr)
&=\Cov\left(\int_{t_0}^{s}\frac{(s-u)^{\ell-1}}{(\ell-1)!}f_0(u)\,du,\kappa_m\right)\\
&=\int_{t_0}^{s}\frac{(s-u)^{\ell-1}}{(\ell-1)!}\gamma_m(u)\,du
=(\mathcal I_{t_0}^{\ell}\gamma_m)(s).
\end{align*}
For $p=-\ell\leq -1$, the covariance--differentiation interchange assumed above gives
\begin{align*}
\Cov\bigl((D_s^{\ell}f_0)(s),\kappa_m\bigr)
&=\frac{d^\ell}{ds^\ell}\Cov\bigl(f_0(s),\kappa_m\bigr)
=(D_s^{\ell}\gamma_m)(s).
\end{align*}
These three cases exhaust all $p\in\{-r,\ldots,r\}$.
\end{proof}

Combining \cref{lem:appendix-cross-basic} with the representation of \(f_p\) gives the full covariance formula.

\begin{proposition}[General covariance formula for the dependent extension]\label{thm:appendix-general-cov}
Retain the standing assumptions of this subsection. For \(p,q\in\{-r,\ldots,r\}\), let \(C_{pq}(s,t):=\Cov\bigl(f_p(s),f_q(t)\bigr)\). Then
\begin{align*}
C_{pq}(s,t)
&=(\mathcal A_{1,t_0}^{[p]}\mathcal A_{2,t_0}^{[q]}C_\theta)(s,t)\\
&\quad+\mathbf{1}_{\{q>0\}}\sum_{m=1}^{q}\frac{(t-t_0)^{q-m}}{(q-m)!}(\mathcal A_{t_0}^{[p]}\gamma_m)(s)\\
&\quad+\mathbf{1}_{\{p>0\}}\sum_{j=1}^{p}\frac{(s-t_0)^{p-j}}{(p-j)!}(\mathcal A_{t_0}^{[q]}\gamma_j)(t)\\
&\quad+\mathbf{1}_{\{p>0\}}\mathbf{1}_{\{q>0\}}\sum_{j=1}^{p}\sum_{m=1}^{q}
\frac{(s-t_0)^{p-j}}{(p-j)!}\frac{(t-t_0)^{q-m}}{(q-m)!}(\boldsymbol{\Sigma}_{\kappa})_{jm}.
\end{align*}
If $\boldsymbol{\gamma}\equiv \mathbf{0}$, then the two middle terms vanish and one recovers the covariance formula in \cref{prop:gaussian-ensemble}.
\end{proposition}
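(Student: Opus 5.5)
The plan is to use the same decomposition as in the proof of \cref{prop:gaussian-ensemble}. Write \(f_p(s)=A_p(s)+B_p(s)\), where \(A_p(s):=(\mathcal A_{t_0}^{[p]}f_0)(s)\) and
\[
B_p(s):=\mathbf 1_{\{p>0\}}\sum_{j=1}^{p}\frac{(s-t_0)^{p-j}}{(p-j)!}\kappa_j .
\]
The representation of \(f_p\) does not change in the dependent extension, so this split is still valid. Since all quantities lie in \(L^2\) (the square-integrability argument of \cref{prop:gaussian-ensemble} never used independence), bilinearity of covariance gives four terms:
\[
C_{pq}(s,t)=\Cov(A_p(s),A_q(t))+\Cov(A_p(s),B_q(t))+\Cov(B_p(s),A_q(t))+\Cov(B_p(s),B_q(t)).
\]

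The first term depends only on the law of \(f_0\). It equals \((\mathcal A_{1,t_0}^{[p]}\mathcal A_{2,t_0}^{[q]}C_\theta)(s,t)\) by exactly the four-case computation in the proof of \cref{prop:gaussian-ensemble}, and that computation used only \cref{ass:ms-regularity}, not independence. The last term is \(\Cov(B_p(s),B_q(t))\). Since \(B_p,B_q\) are deterministic-coefficient linear combinations of the \(\kappa_j\), it expands to the double sum with \((\boldsymbol\Sigma_\kappa)_{jm}\), and it vanishes unless \(p,q>0\).

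For the cross terms, I will pull the deterministic polynomial coefficients out of the covariance:
\[
\Cov(A_p(s),B_q(t))=\mathbf 1_{\{q>0\}}\sum_{m=1}^{q}\frac{(t-t_0)^{q-m}}{(q-m)!}\Cov\bigl((\mathcal A_{t_0}^{[p]}f_0)(s),\kappa_m\bigr).
\]
Then \cref{lem:appendix-cross-basic} replaces each inner covariance by \((\mathcal A_{t_0}^{[p]}\gamma_m)(s)\). The symmetric term \(\Cov(B_p(s),A_q(t))\) is handled the same way, with the roles of \((p,s)\) and \((q,t)\) swapped, and yields \(\sum_j \frac{(s-t_0)^{p-j}}{(p-j)!}(\mathcal A_{t_0}^{[q]}\gamma_j)(t)\). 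Setting \(\boldsymbol\gamma\equiv\mathbf 0\) kills both middle terms, and what remains is the formula in \cref{prop:gaussian-ensemble}.

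The only delicate point is the domain of validity. For \(p<0\), \cref{lem:appendix-cross-basic} is stated for interior \(s\); at endpoints I will use the standing convention that \(\mathcal A_{t_0}^{[p]}\) denotes the one-sided mean-square derivative, together with the assumed continuity of the cross-covariance derivatives. This extends the identity to all of \([t_0,t_1]\). I would also check that the interchange justifying the anchor block still holds. It does, since it involves only \(f_0\). Apart from these points the argument is bookkeeping, so I expect no real obstacle beyond careful index tracking in the two middle sums.
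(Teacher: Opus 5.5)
Your proposal is correct and follows the paper's proof essentially step for step. Both use the split \(f_p=A_p+B_p\), the four-term bilinear expansion, the anchor block taken from the proof of \cref{prop:gaussian-ensemble}, the cross terms evaluated via \cref{lem:appendix-cross-basic} (with symmetry for the second), and the polynomial double sum in \((\boldsymbol{\Sigma}_{\kappa})_{jm}\); your remark on endpoint evaluations for \(p<0\) makes explicit what the paper leaves to its standing one-sided-derivative convention.
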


\begin{proof}
Write
\begin{align*}
A_p(t):=(\mathcal A_{t_0}^{[p]}f_0)(t),
\qquad
B_p(t):=\mathbf{1}_{\{p>0\}}\sum_{j=1}^{p}\frac{(t-t_0)^{p-j}}{(p-j)!}\kappa_j,
\end{align*}
so that $f_p(t)=A_p(t)+B_p(t)$. Then
\begin{align*}
\Cov\bigl(f_p(s),f_q(t)\bigr)
&=\Cov\bigl(A_p(s),A_q(t)\bigr)+\Cov\bigl(A_p(s),B_q(t)\bigr)\\
&\quad+\Cov\bigl(B_p(s),A_q(t)\bigr)+\Cov\bigl(B_p(s),B_q(t)\bigr).
\end{align*}
By the covariance formula for the anchor part established in the proof of \cref{prop:gaussian-ensemble},
\begin{align*}
\Cov\bigl(A_p(s),A_q(t)\bigr)=(\mathcal A_{1,t_0}^{[p]}\mathcal A_{2,t_0}^{[q]}C_\theta)(s,t).
\end{align*}
Moreover,
\begin{align*}
\Cov\bigl(A_p(s),B_q(t)\bigr)
&=\mathbf{1}_{\{q>0\}}\sum_{m=1}^{q}\frac{(t-t_0)^{q-m}}{(q-m)!}\Cov\bigl(A_p(s),\kappa_m\bigr),\\
\Cov\bigl(B_p(s),A_q(t)\bigr)
&=\mathbf{1}_{\{p>0\}}\sum_{j=1}^{p}\frac{(s-t_0)^{p-j}}{(p-j)!}\Cov\bigl(\kappa_j,A_q(t)\bigr).
\end{align*}
By \cref{lem:appendix-cross-basic}, together with symmetry of covariance in the second term, these two contributions equal the two middle terms in the stated formula. Finally,
\begin{align*}
\Cov\bigl(B_p(s),B_q(t)\bigr)
&=\mathbf{1}_{\{p>0\}}\mathbf{1}_{\{q>0\}}\sum_{j=1}^{p}\sum_{m=1}^{q}
\frac{(s-t_0)^{p-j}}{(p-j)!}\frac{(t-t_0)^{q-m}}{(q-m)!}\Cov(\kappa_j,\kappa_m),
\end{align*}
which is the last term. Adding the four contributions proves the formula.
\end{proof}

\subsubsection{Validity and conditional characterization of the dependent extension}
We now characterize when \(C_\theta\), \(\boldsymbol{\gamma}\), and \(\boldsymbol{\Sigma}_{\kappa}\) define a valid jointly Gaussian pair \((f_0,\boldsymbol{\kappa})\).

For any $n\in\mathbb{N}$ and any $u_1,\ldots,u_n\in[t_0,t_1]$, the covariance matrix of
\begin{align*}
\bigl(f_0(u_1),\ldots,f_0(u_n),\kappa_1,\ldots,\kappa_r\bigr)^\top
\end{align*}
has the block form
\begin{align*}
\boldsymbol{\Omega}_n:=
\begin{pmatrix}
\mathbf{C}_n & \mathbf{G}_n\\
\mathbf{G}_n^\top & \boldsymbol{\Sigma}_{\kappa}
\end{pmatrix},
\end{align*}
where \((\mathbf{C}_n)_{ij}=C_\theta(u_i,u_j)\) and \((\mathbf{G}_n)_{im}=\gamma_m(u_i)\). The admissibility of the dependent extension is therefore determined by the positivity of these finite-dimensional covariance matrices. The next theorem characterizes admissibility of the jointly Gaussian pair \((f_0,\boldsymbol{\kappa})\). The standing smoothness assumptions are used afterward to extend this pair across the full differential ladder.

\begin{theorem}[Covariance function criterion for the dependent extension]\label{thm:appendix-schur}
Let \(\boldsymbol{\Sigma}_{\kappa}\in\mathbb R^{r\times r}\) be symmetric positive definite, and define
\[
C_{\theta\mid\kappa}(s,t):=C_\theta(s,t)-\boldsymbol\gamma(s)^\top\boldsymbol{\Sigma}_{\kappa}^{-1}\boldsymbol{\gamma}(t).
\]
Then the following are equivalent:
\begin{enumerate}
\item There exists a jointly Gaussian pair \((f_0,\boldsymbol{\kappa})\) such that
\[
\E[f_0(t)]=\mu(t),\quad t\in[t_0,t_1],
\qquad
\E[\boldsymbol{\kappa}]=\boldsymbol{\mu}_\kappa,
\]
and
\begin{align*}
\Cov(f_0(s),f_0(t))&=C_\theta(s,t), && s,t\in[t_0,t_1],\\
\Cov(f_0(s),\boldsymbol{\kappa})&=\boldsymbol{\gamma}(s), && s\in[t_0,t_1],\\
\Cov(\boldsymbol{\kappa})&=\boldsymbol{\Sigma}_{\kappa}.
\end{align*}
\item The covariance function \(C_{\theta\mid\kappa}\) is positive semidefinite on \([t_0,t_1]\), that is, for every \(n\in\mathbb{N}\) and every \(u_1,\ldots,u_n\in[t_0,t_1]\), the matrix \(\bigl(C_{\theta\mid\kappa}(u_i,u_j)\bigr)_{i,j=1}^n\) is positive semidefinite.
\end{enumerate}
\end{theorem}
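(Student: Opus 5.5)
The plan is to reduce both directions to the finite-dimensional block matrices \(\boldsymbol{\Omega}_n\) and use the Schur complement characterization of positive semidefiniteness. The basic linear-algebra fact is this. For a symmetric block matrix with invertible \(\boldsymbol{\Sigma}_\kappa\succ0\), we have
\[
\begin{pmatrix}\mathbf C_n&\mathbf G_n\\ \mathbf G_n^\top&\boldsymbol\Sigma_\kappa\end{pmatrix}\succeq 0
\iff
\mathbf C_n-\mathbf G_n\boldsymbol\Sigma_\kappa^{-1}\mathbf G_n^\top\succeq0.
\]
It follows from the congruence
\[
\boldsymbol\Omega_n=\begin{pmatrix} I&\mathbf G_n\boldsymbol\Sigma_\kappa^{-1}\\0&I\end{pmatrix}
\begin{pmatrix}\mathbf C_n-\mathbf G_n\boldsymbol\Sigma_\kappa^{-1}\mathbf G_n^\top&0\\0&\boldsymbol\Sigma_\kappa\end{pmatrix}
\begin{pmatrix} I&0\\\boldsymbol\Sigma_\kappa^{-1}\mathbf G_n^\top&I\end{pmatrix},
\]
since congruence by an invertible matrix preserves semidefiniteness. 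The Schur complement appearing here is exactly the matrix \(\bigl(C_{\theta\mid\kappa}(u_i,u_j)\bigr)_{i,j}\), because its \((i,j)\) entry is \(C_\theta(u_i,u_j)-\boldsymbol\gamma(u_i)^\top\boldsymbol\Sigma_\kappa^{-1}\boldsymbol\gamma(u_j)\).

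For (1)\(\Rightarrow\)(2): if such a jointly Gaussian pair exists, then every \(\boldsymbol\Omega_n\) is the covariance matrix of the random vector \((f_0(u_1),\ldots,f_0(u_n),\boldsymbol\kappa)\). It is therefore positive semidefinite, and the Schur complement equivalence gives (2). Probabilistically, \(C_{\theta\mid\kappa}(s,t)\) is the covariance of the residual \(f_0(s)-\boldsymbol\gamma(s)^\top\boldsymbol\Sigma_\kappa^{-1}(\boldsymbol\kappa-\boldsymbol\mu_\kappa)\), which gives an alternative one-line argument.

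For (2)\(\Rightarrow\)(1), the construction is explicit. Since \(C_{\theta\mid\kappa}\) is symmetric and positive semidefinite, the Kolmogorov extension theorem gives a centered Gaussian process \(g\) on \([t_0,t_1]\) with covariance \(C_{\theta\mid\kappa}\). Symmetry holds because \(C_\theta\) is symmetric. Take \(\boldsymbol\kappa\sim N(\boldsymbol\mu_\kappa,\boldsymbol\Sigma_\kappa)\) independent of \(g\), on a product space, and set
\[
f_0(t):=\mu(t)+g(t)+\boldsymbol\gamma(t)^\top\boldsymbol\Sigma_\kappa^{-1}(\boldsymbol\kappa-\boldsymbol\mu_\kappa).
\]
Every finite collection of \(f_0\)-values together with \(\boldsymbol\kappa\) is an affine image of the independent Gaussian vectors \((g(u_i))_i\) and \(\boldsymbol\kappa\), so it is jointly Gaussian. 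The moments then follow by direct computation:
\begin{align*}
\Cov(f_0(s),\boldsymbol\kappa)&=\boldsymbol\gamma(s)^\top\boldsymbol\Sigma_\kappa^{-1}\boldsymbol\Sigma_\kappa=\boldsymbol\gamma(s)^\top,\\
\Cov(f_0(s),f_0(t))&=C_{\theta\mid\kappa}(s,t)+\boldsymbol\gamma(s)^\top\boldsymbol\Sigma_\kappa^{-1}\boldsymbol\gamma(t)=C_\theta(s,t).
\end{align*}
The means are \(\mu\) and \(\boldsymbol\mu_\kappa\), as required.

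The work is mostly bookkeeping, and the only delicate points are these. The converse direction needs a legitimate construction of \(g\) from a merely positive semidefinite kernel, which Kolmogorov consistency supplies because Gaussian finite-dimensional laws with a positive semidefinite covariance are consistent under marginalization and permutation. The Schur step needs invertibility of \(\boldsymbol\Sigma_\kappa\), which is guaranteed by the standing positive-definiteness hypothesis; without it one would instead need the generalized-inverse version together with a range condition. The regularity requirements \(C_\theta\in C^{r,r}\) and \(\gamma_m\in C^r\) are not needed for this equivalence itself. They are used only afterwards, via \cref{thm:appendix-general-cov}, to extend the pair across the full differential ladder.
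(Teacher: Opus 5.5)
Your proof is correct. The forward direction is the same as the paper's. Both of you read $\boldsymbol\Omega_n$ as a covariance matrix and apply the Schur complement criterion, whose entries are $C_{\theta\mid\kappa}(u_i,u_j)$. You additionally justify the criterion through the block congruence, whereas the paper cites it as standard.

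For the converse you take a different route. The paper runs the Schur criterion backwards to get $\boldsymbol\Omega_n\succeq0$ for every finite point set. It then assigns the law $N(\cdot,\boldsymbol\Omega_n)$ to $(f_0(u_1),\ldots,f_0(u_n),\boldsymbol\kappa)$ and applies Kolmogorov's extension theorem once, on the augmented index set $[t_0,t_1]$ plus $r$ distinguished coordinates. You instead apply Kolmogorov only to the residual kernel $C_{\theta\mid\kappa}$, attach an independent $\boldsymbol\kappa$ on a product space, and set $f_0=\mu+g+\boldsymbol\gamma^\top\boldsymbol\Sigma_\kappa^{-1}(\boldsymbol\kappa-\boldsymbol\mu_\kappa)$. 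This is the probabilistic realization of your own congruence factorization, so you never need the reverse Schur implication.

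What each buys:
\begin{itemize}
\item The paper's argument is shorter. It treats $(f_0,\boldsymbol\kappa)$ as a single Gaussian family from the outset and needs no auxiliary process.
\item Yours is more explicit. It exhibits a residual process, independent of $\boldsymbol\kappa$, with covariance $C_{\theta\mid\kappa}$. That is exactly the conditional structure asserted next in \cref{prop:appendix-conditional-laws}, and it gives a direct recipe for simulating the dependent pair.
\end{itemize}

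Both arguments rely on symmetry of $C_\theta$; you state this explicitly, the paper leaves it implicit. Your remark that the $C^{r,r}$ and $C^r$ regularity hypotheses play no role in this equivalence matches the paper's statement.
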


\begin{proof}
Assume first that $(f_0,\boldsymbol{\kappa})$ defines a valid jointly Gaussian model. Then, for every $n\in\mathbb{N}$ and every $u_1,\ldots,u_n\in[t_0,t_1]$, the matrix
\begin{align*}
\boldsymbol{\Omega}_n=
\begin{pmatrix}
\mathbf{C}_n & \mathbf{G}_n\\
\mathbf{G}_n^\top & \boldsymbol{\Sigma}_{\kappa}
\end{pmatrix}
\end{align*}
is positive semidefinite. Since $\boldsymbol{\Sigma}_{\kappa}$ is positive definite, the standard Schur complement criterion for positive semidefinite block matrices yields
\begin{align*}
\boldsymbol{\Omega}_n\succeq 0
\quad\Longleftrightarrow\quad
\mathbf{C}_n-\mathbf{G}_n\boldsymbol{\Sigma}_{\kappa}^{-1}\mathbf{G}_n^\top\succeq 0.
\end{align*}
The finite-dimensional admissibility condition is therefore Schur-complement positivity.
The $(i,j)$ entry of the Schur complement is
\begin{align*}
C_\theta(u_i,u_j)-\boldsymbol{\gamma}(u_i)^\top\boldsymbol{\Sigma}_{\kappa}^{-1}\boldsymbol{\gamma}(u_j)
=
C_{\theta\mid\kappa}(u_i,u_j).
\end{align*}
Hence the matrix $\bigl(C_{\theta\mid\kappa}(u_i,u_j)\bigr)_{i,j=1}^n$ is positive semidefinite for every finite choice of points, so $C_{\theta\mid\kappa}$ is positive semidefinite on $[t_0,t_1]$.

Conversely, assume that $C_{\theta\mid\kappa}$ is positive semidefinite on $[t_0,t_1]$. Then for every $n\in\mathbb{N}$ and every $u_1,\ldots,u_n\in[t_0,t_1]$, the matrix
\begin{align*}
\mathbf{C}_n-\mathbf{G}_n\boldsymbol{\Sigma}_{\kappa}^{-1}\mathbf{G}_n^\top
=
\bigl(C_{\theta\mid\kappa}(u_i,u_j)\bigr)_{i,j=1}^n
\end{align*}
is positive semidefinite. Since $\boldsymbol{\Sigma}_{\kappa}$ is positive definite, the Schur complement criterion implies that $\boldsymbol{\Omega}_n\succeq 0$ for every finite set of evaluation points. Assign to
\[
\bigl(f_0(u_1),\ldots,f_0(u_n),\boldsymbol{\kappa}^\top\bigr)^\top
\]
the Gaussian law with the prescribed mean vector and covariance matrix $\boldsymbol{\Omega}_n$. These finite-dimensional Gaussian laws are consistent under deletion and permutation of evaluation points because the prescribed means and covariances restrict to the corresponding sub-vectors and principal submatrices. Kolmogorov's extension theorem then supplies a process on the product index set \([t_0,t_1]\) together with the \(r\) distinguished coordinates of \(\boldsymbol{\kappa}\). Since every finite subvector was assigned a Gaussian law, the resulting pair \((f_0,\boldsymbol{\kappa})\) is jointly Gaussian with the displayed mean and covariance blocks.
\end{proof}

\(C_{\theta\mid\kappa}\) is the residual anchor covariance after conditioning on the integration constants. The function \(\boldsymbol{\gamma}\) is the covariance link between \(f_0\) and \(\boldsymbol{\kappa}\), and \(\boldsymbol{\gamma}=\mathbf{0}\) recovers \cref{prop:gaussian-ensemble}. The next proposition propagates this conditional Gaussian structure from the anchor process to the full differential ladder.

\begin{proposition}[Conditional laws given the integration constants]\label{prop:appendix-conditional-laws}
Retain the standing assumptions, including the definition of \(\mathfrak f\) and joint Gaussianity of \((f_0,\boldsymbol{\kappa})\), and assume that \(\boldsymbol{\Sigma}_{\kappa}\) is positive definite. Write
\begin{align*}
\boldsymbol{\gamma}(s):=\Cov\bigl(f_0(s),\boldsymbol{\kappa}\bigr)=\bigl(\gamma_1(s),\ldots,\gamma_r(s)\bigr)^\top,
\end{align*}
and define
\begin{align*}
C_{\theta\mid\kappa}(s,t):=C_\theta(s,t)-\boldsymbol{\gamma}(s)^\top\boldsymbol{\Sigma}_{\kappa}^{-1}\boldsymbol{\gamma}(t),
\end{align*}
and, for \(p=-r,\ldots,r\),
\begin{align*}
(\mathcal A_{t_0}^{[p]}\boldsymbol{\gamma})(t):=\bigl((\mathcal A_{t_0}^{[p]}\gamma_1)(t),\ldots,(\mathcal A_{t_0}^{[p]}\gamma_r)(t)\bigr)^\top.
\end{align*}

\begin{enumerate}
\item Conditional on \(\boldsymbol{\kappa}\), the anchor process is Gaussian with conditional mean and covariance
\begin{align*}
\E\bigl[f_0(s)\mid \boldsymbol{\kappa}\bigr]
&=\mu(s)+\boldsymbol{\gamma}(s)^\top\boldsymbol{\Sigma}_{\kappa}^{-1}(\boldsymbol{\kappa}-\boldsymbol{\mu}_{\kappa}), && s\in[t_0,t_1],\\
\Cov\bigl(f_0(s),f_0(t)\mid \boldsymbol{\kappa}\bigr)
&=C_{\theta\mid\kappa}(s,t), && s,t\in[t_0,t_1].
\end{align*}

\item Conditional on \(\boldsymbol{\kappa}\), the family \(\mathfrak{f}\) is jointly Gaussian in the sense that, for every finite collection \((p_1,\tau_1),\ldots,(p_n,\tau_n)\) with \(p_i\in\{-r,\ldots,r\}\) and \(\tau_i\in[t_0,t_1]\), the vector
\begin{align*}
\bigl(f_{p_1}(\tau_1),\ldots,f_{p_n}(\tau_n)\bigr)^\top
\end{align*}
is multivariate Gaussian conditional on \(\boldsymbol{\kappa}\). Its conditional mean is
\begin{align*}
\E\bigl[f_p(t)\mid \boldsymbol{\kappa}\bigr]
&=(\mathcal A_{t_0}^{[p]}\mu)(t)+(\mathcal A_{t_0}^{[p]}\boldsymbol{\gamma})(t)^\top\boldsymbol{\Sigma}_{\kappa}^{-1}(\boldsymbol{\kappa}-\boldsymbol{\mu}_{\kappa})\\
&\quad+\mathbf{1}_{\{p>0\}}\sum_{j=1}^{p}\frac{(t-t_0)^{p-j}}{(p-j)!}\kappa_j,
\qquad p=-r,\ldots,r,\quad t\in[t_0,t_1],
\end{align*}
and its conditional covariance entries are
\begin{align*}
\Cov\bigl(f_p(s),f_q(t)\mid \boldsymbol{\kappa}\bigr)
&=(\mathcal A_{1,t_0}^{[p]}\mathcal A_{2,t_0}^{[q]}C_{\theta\mid\kappa})(s,t), && p,q\in\{-r,\ldots,r\},\quad s,t\in[t_0,t_1].
\end{align*}
\end{enumerate}
\end{proposition}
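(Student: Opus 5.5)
The plan is to reduce everything to a residual decomposition of the anchor. Define
\[
R(s):=f_0(s)-\mu(s)-\boldsymbol\gamma(s)^\top\boldsymbol\Sigma_\kappa^{-1}(\boldsymbol\kappa-\boldsymbol\mu_\kappa),\qquad s\in[t_0,t_1].
\]
Since \((f_0,\boldsymbol\kappa)\) is jointly Gaussian, each finite vector \((R(u_1),\ldots,R(u_n),\boldsymbol\kappa)\) is Gaussian, and \(\Cov(R(s),\boldsymbol\kappa)=\boldsymbol\gamma(s)-\boldsymbol\gamma(s)^\top\boldsymbol\Sigma_\kappa^{-1}\boldsymbol\Sigma_\kappa=\mathbf 0\). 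Hence the process \(R\) is independent of \(\boldsymbol\kappa\). A direct computation gives \(\Cov(R(s),R(t))=C_{\theta\mid\kappa}(s,t)\). Part 1 then follows immediately, because conditionally on \(\boldsymbol\kappa\) the anchor equals the deterministic shift \(\mu+\boldsymbol\gamma^\top\boldsymbol\Sigma_\kappa^{-1}(\boldsymbol\kappa-\boldsymbol\mu_\kappa)\) plus the independent centered Gaussian process \(R\).

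For part 2, I apply the linear operators to this decomposition:
\[
(\mathcal A_{t_0}^{[p]}f_0)(t)=(\mathcal A_{t_0}^{[p]}\mu)(t)+(\mathcal A_{t_0}^{[p]}\boldsymbol\gamma)(t)^\top\boldsymbol\Sigma_\kappa^{-1}(\boldsymbol\kappa-\boldsymbol\mu_\kappa)+(\mathcal A_{t_0}^{[p]}R)(t).
\]
This identity needs two facts.

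\begin{enumerate}
\item The deterministic parts are transformed by ordinary calculus. This is justified by \(\mu\in C^r\) from \cref{ass:ms-regularity} and \(\gamma_m\in C^r\). Because the random factor \(\boldsymbol\Sigma_\kappa^{-1}(\boldsymbol\kappa-\boldsymbol\mu_\kappa)\) is a finite vector not depending on \(t\), it passes through \(L^2\) difference quotients and \(L^2\) integrals.
\item \(R\) inherits mean-square differentiability and integrability from \(f_0\). I would check this by verifying \(C_{\theta\mid\kappa}\in C^{r,r}\), which holds because \(C_\theta\in C^{r,r}\) and \(\gamma_m\in C^r\). The derivative and integral processes of \(R\) are then defined, and linearity of \(L^2\) limits gives the identity.
\end{enumerate}

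Adding the polynomial term \(B_p(t)\), which is \(\boldsymbol\kappa\)-measurable, yields the stated conditional mean once I show \(\E[(\mathcal A^{[p]}_{t_0}R)(t)\mid\boldsymbol\kappa]=0\).

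The remaining step is to establish that \((\mathcal A_{t_0}^{[p_i]}R)(\tau_i)_{i}\) is jointly Gaussian, centered, and independent of \(\boldsymbol\kappa\). I would reuse the argument from the proof of \cref{prop:gaussian-ensemble}. Each coordinate is an \(L^2\)-limit of finite linear combinations of evaluations of \(R\): finite differences for negative indices and Riemann sums for positive indices. Those approximants are jointly Gaussian with \(\boldsymbol\kappa\) and uncorrelated with it, and both properties survive \(L^2\) limits via characteristic functions (Cram\'er--Wold). So the limit vector is Gaussian, centered, and independent of \(\boldsymbol\kappa\).

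Conditionally on \(\boldsymbol\kappa\), the vector \((f_{p_i}(\tau_i))_i\) is therefore a \(\boldsymbol\kappa\)-measurable shift of an independent Gaussian vector, which proves conditional Gaussianity and the mean formula. The conditional covariance equals \(\Cov((\mathcal A^{[p]}_{t_0}R)(s),(\mathcal A^{[q]}_{t_0}R)(t))\). The four-case computation (derivative/derivative, integral/derivative, derivative/integral, integral/integral) in the proof of \cref{prop:gaussian-ensemble}, applied with kernel \(C_{\theta\mid\kappa}\) in place of \(C_\theta\), gives \((\mathcal A_{1,t_0}^{[p]}\mathcal A_{2,t_0}^{[q]}C_{\theta\mid\kappa})(s,t)\).

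I expect the main obstacle to be regularity bookkeeping rather than any Gaussian algebra. One must confirm that \(C_{\theta\mid\kappa}\) has the mixed derivatives needed for the derivative blocks, including one-sided derivatives at endpoints, as allowed by the standing assumptions. One must also confirm that the conditional law is genuinely a regular conditional distribution. The independence representation via \(R\) handles the latter cleanly, because conditioning on \(\boldsymbol\kappa\) simply freezes the shift.
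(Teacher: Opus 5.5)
Your argument is correct, but it follows a different route from the paper. The paper never constructs a residual process. It writes $f_p=A_p+B_p$ and shows that $(\mathbf A^\top,\boldsymbol\kappa^\top)^\top$ is jointly Gaussian by the $L^2$/Cram\'er--Wold argument. It then computes the unconditional moments $\E[A_p(t)]=(\mathcal A_{t_0}^{[p]}\mu)(t)$ and the cross-covariances $\Cov(A_p(t),\boldsymbol\kappa)=(\mathcal A_{t_0}^{[p]}\boldsymbol\gamma)(t)$ via \cref{lem:appendix-cross-basic}, and applies the finite-dimensional Gaussian conditioning (Schur complement) formula. Finally it identifies $(\mathcal A_{1,t_0}^{[p]}\mathcal A_{2,t_0}^{[q]}C_\theta)(s,t)-(\mathcal A_{t_0}^{[p]}\boldsymbol\gamma)(s)^\top\boldsymbol\Sigma_\kappa^{-1}(\mathcal A_{t_0}^{[q]}\boldsymbol\gamma)(t)$ with $(\mathcal A_{1,t_0}^{[p]}\mathcal A_{2,t_0}^{[q]}C_{\theta\mid\kappa})(s,t)$ by linearity of the operators in each argument.

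Your decomposition $f_0=\mu+\boldsymbol\gamma^\top\boldsymbol\Sigma_\kappa^{-1}(\boldsymbol\kappa-\boldsymbol\mu_\kappa)+R$, with $R$ independent of $\boldsymbol\kappa$, replaces both the conditioning formula and the cross-covariance lemma. The lemma is recovered as a byproduct of pushing $\mathcal A_{t_0}^{[p]}$ through the decomposition. The conditional covariance is then simply \cref{prop:gaussian-ensemble} applied to the centered anchor $R$ with kernel $C_{\theta\mid\kappa}$.

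Your route gives three things:
\begin{enumerate}
\item A structural statement: conditionally on $\boldsymbol\kappa$, the ensemble is a $\boldsymbol\kappa$-measurable shift of an independent differential ensemble with anchor $R$.
\item A genuine regular conditional law for the whole family, rather than finite-dimensional conditionings glued together by consistency.
\item No need for the final linearity identification.
\end{enumerate}
The price is the bookkeeping you already flag. You must check that $C_{\theta\mid\kappa}\in C^{r,r}$, which is immediate from $C_\theta\in C^{r,r}$ and $\gamma_m\in C^r$. You must also check that the term $\boldsymbol\gamma(t)^\top Z$, with $Z$ a fixed $L^2$ random vector, commutes with $L^2$ differentiation and integration. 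The paper's route instead reuses already-proved unconditional covariance identities and needs the regularity of $C_{\theta\mid\kappa}$ only to make sense of the final expression.
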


\begin{proof}
For every finite collection \(\tau_1,\ldots,\tau_n\), the vector
\begin{align*}
\bigl(f_0(\tau_1),\ldots,f_0(\tau_n),\boldsymbol{\kappa}^\top\bigr)^\top
\end{align*}
is multivariate Gaussian by assumption, with the block covariance matrix displayed above \cref{thm:appendix-schur}. Since \(\boldsymbol{\Sigma}_{\kappa}\) is positive definite, the formulas in part (i) are the standard conditional mean and conditional covariance of a Gaussian block vector, and consistency over finite collections yields the stated conditional Gaussian process.

For part (ii), fix an arbitrary finite collection \((p_1,\tau_1),\ldots,(p_n,\tau_n)\) with \(p_i\in\{-r,\ldots,r\}\) and \(\tau_i\in[t_0,t_1]\). For \(p\in\{-r,\ldots,r\}\), write
\[
A_p(t):=(\mathcal A_{t_0}^{[p]}f_0)(t),
\qquad
B_p(t):=\mathbf{1}_{\{p>0\}}\sum_{j=1}^{p}\frac{(t-t_0)^{p-j}}{(p-j)!}\kappa_j,
\]
so that \(f_p(t)=A_p(t)+B_p(t)\). Define
\begin{align*}
\mathbf A :=\bigl(A_{p_1}(\tau_1),\ldots,A_{p_n}(\tau_n)\bigr)^\top,\qquad \mathbf B :=\bigl(B_{p_1}(\tau_1),\ldots,B_{p_n}(\tau_n)\bigr)^\top.
\end{align*}
Conditional on \(\boldsymbol{\kappa}\), the vector \(\mathbf B\) is deterministic. All conditional randomness therefore comes from the vector \(\mathbf A\), whose entries are determined by \(f_0\) through differentiation, evaluation, or integration.
By the same approximation argument used in the proof of \cref{prop:gaussian-ensemble}, each \(A_{p_i}(\tau_i)\) is an \(L^2\)-limit of finite linear combinations of values of \(f_0\). Therefore \((\mathbf A^\top,\boldsymbol{\kappa}^\top)^\top\) is jointly Gaussian, because its approximating vectors are finite linear transformations of finite vectors of the form
\begin{align*}
\bigl(f_0(u_1),\ldots,f_0(u_M),\boldsymbol{\kappa}^\top\bigr)^\top,
\end{align*}
which are jointly Gaussian by assumption, and joint Gaussianity is preserved under \(L^2\)-limits by the Cram\'er--Wold argument used in \cref{prop:gaussian-ensemble}. Since \(\mathbf B\) is a deterministic linear transformation of \(\boldsymbol{\kappa}\), the vector
\begin{align*}
\bigl(\mathbf A+\mathbf B,\boldsymbol{\kappa}\bigr)
=
\bigl(f_{p_1}(\tau_1),\ldots,f_{p_n}(\tau_n),\boldsymbol{\kappa}\bigr)
\end{align*}
is jointly Gaussian. Conditioning on \(\boldsymbol{\kappa}\) therefore shows that
\begin{align*}
\bigl(f_{p_1}(\tau_1),\ldots,f_{p_n}(\tau_n)\bigr)^\top
\end{align*}
is multivariate Gaussian conditional on \(\boldsymbol{\kappa}\). Since the finite collection was arbitrary, the family \(\mathfrak f\) is jointly Gaussian conditional on \(\boldsymbol{\kappa}\).

It remains to identify the conditional mean and covariance entries. Fix \(p,q\in\{-r,\ldots,r\}\). By the mean calculation in the proof of \cref{prop:gaussian-ensemble} and by \cref{lem:appendix-cross-basic} applied componentwise,
\begin{align*}
\E[A_p(t)]&=(\mathcal A_{t_0}^{[p]}\mu)(t),\\
\Cov\bigl(A_p(t),\boldsymbol{\kappa}\bigr)
&=\bigl((\mathcal A_{t_0}^{[p]}\gamma_1)(t),\ldots,(\mathcal A_{t_0}^{[p]}\gamma_r)(t)\bigr)^\top
=(\mathcal A_{t_0}^{[p]}\boldsymbol{\gamma})(t).
\end{align*}
Hence the Gaussian conditioning formula gives
\begin{align*}
\E\bigl[A_p(t)\mid \boldsymbol{\kappa}\bigr]
&=
(\mathcal A_{t_0}^{[p]}\mu)(t)
+
(\mathcal A_{t_0}^{[p]}\boldsymbol{\gamma})(t)^\top
\boldsymbol{\Sigma}_{\kappa}^{-1}
(\boldsymbol{\kappa}-\boldsymbol{\mu}_{\kappa}),
\end{align*}
and adding the \(\boldsymbol{\kappa}\)-measurable term \(B_p(t)\) yields the stated conditional mean of \(f_p(t)\).

For the conditional covariance, the polynomial terms contribute no conditional randomness, so
\[
\Cov\bigl(f_p(s),f_q(t)\mid \boldsymbol{\kappa}\bigr)
=
\Cov\bigl(A_p(s),A_q(t)\mid \boldsymbol{\kappa}\bigr).
\]
Applying the Gaussian conditioning formula once more, and using the covariance calculation from the proof of \cref{prop:gaussian-ensemble} together with \cref{lem:appendix-cross-basic}, gives
\begin{align*}
\Cov\bigl(A_p(s),A_q(t)\mid \boldsymbol{\kappa}\bigr)
&=
\Cov\bigl(A_p(s),A_q(t)\bigr)
-
\Cov\bigl(A_p(s),\boldsymbol{\kappa}\bigr)
\boldsymbol{\Sigma}_{\kappa}^{-1}
\Cov\bigl(\boldsymbol{\kappa},A_q(t)\bigr)\\
&=
(\mathcal A_{1,t_0}^{[p]}\mathcal A_{2,t_0}^{[q]}C_\theta)(s,t)
-
(\mathcal A_{t_0}^{[p]}\boldsymbol{\gamma})(s)^\top
\boldsymbol{\Sigma}_{\kappa}^{-1}
(\mathcal A_{t_0}^{[q]}\boldsymbol{\gamma})(t).
\end{align*}
Because \(\boldsymbol{\Sigma}_{\kappa}^{-1}\) is constant, the signed-order operators act on the two variables separately. The regularity assumptions make these operators well defined on \(C_{\theta\mid\kappa}\), so the last display equals $(\mathcal A_{1,t_0}^{[p]}\mathcal A_{2,t_0}^{[q]}C_{\theta\mid\kappa})(s,t)$, which is the stated conditional covariance.
\end{proof}

\subsubsection{Constructive representations and coupled ensembles}
The admissibility criterion is complemented by two constructions. One defines \(\boldsymbol{\kappa}\) directly as Gaussian functionals of \(f_0\). The other uses a second process \(u_0\), jointly Gaussian with \(f_0\), whose derivative jet at \(t_0\) supplies \(\boldsymbol{\kappa}\). These constructions give the same covariance form but represent different sources of dependence.

\begin{proposition}[Linear functional construction]\label{prop:appendix-linear-functional}
Let \(w_1,\ldots,w_r\) be fixed deterministic functions in \(L^2([t_0,t_1])\), used as weights in the linear functionals below. Let $\boldsymbol{\varepsilon}:=(\varepsilon_1,\ldots,\varepsilon_r)^\top\sim N(\mathbf{0},\mathbf{T})$, where $\mathbf{T}\in\mathbb{R}^{r\times r}$ is positive semidefinite, and assume that $\boldsymbol{\varepsilon}$ is independent of $f_0$. Define, as $L^2$ random variables,
\begin{align*}
\kappa_m:=\int_{t_0}^{t_1}w_m(u)f_0(u)\mathrm{d}u+\varepsilon_m,
\qquad m=1,\ldots,r.
\end{align*}
Write $\boldsymbol{\kappa}:=(\kappa_1,\ldots,\kappa_r)^\top$, $\gamma_m(s):=\Cov(f_0(s),\kappa_m)$, and $(\boldsymbol{\Sigma}_{\kappa})_{mj}:=\Cov(\kappa_m,\kappa_j)$. Then $(f_0,\boldsymbol{\kappa})$ is jointly Gaussian, each \(\gamma_m\) belongs to \(C^r([t_0,t_1])\), and
\begin{align*}
\gamma_m(s)&=\int_{t_0}^{t_1}w_m(u)C_\theta(s,u)\mathrm{d}u, && m=1,\ldots,r,\quad s\in[t_0,t_1],\\
(\boldsymbol{\Sigma}_{\kappa})_{mj}&=\int_{t_0}^{t_1}\int_{t_0}^{t_1}w_m(u)C_\theta(u,v)w_j(v)\mathrm{d}u\mathrm{d}v+\mathbf{T}_{mj}, && m,j=1,\ldots,r.
\end{align*}
If, in addition, the resulting matrix $\boldsymbol{\Sigma}_{\kappa}$ is positive definite, then the compatibility condition in \cref{thm:appendix-schur} holds automatically.
\end{proposition}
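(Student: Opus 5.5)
The statement to prove is \cref{prop:appendix-linear-functional}. I would reduce it to the machinery already used in the proof of \cref{prop:gaussian-ensemble}, namely $L^2$ integrals of $f_0$, Riemann-sum approximation, and the Cram\'er--Wold argument, followed by a Fubini computation of the covariances and a direct Schur-complement check. The steps are taken in that order.

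\textbf{Well-definedness and joint Gaussianity.} Under \cref{ass:ms-regularity}, $C_\theta$ is continuous, so $\sup_u\E[f_0(u)^2]<\infty$ and $f_0$ is mean-square continuous. Hence $\int w_m f_0$ exists as an $L^2$ (Bochner) integral whenever $w_m\in L^2$. Since $w_m$ need not be continuous, I will not use Riemann sums of $w_m f_0$ directly. Instead, I approximate $w_m$ in $L^2$ by continuous functions $w_m^{(k)}$. This gives
\[
\E\Bigl|\int (w_m-w_m^{(k)})f_0\Bigr|^2\le \|w_m-w_m^{(k)}\|_2^2\,(t_1-t_0)\sup_u\E[f_0(u)^2]
\]
by Cauchy--Schwarz, which tends to zero (centering at $\mu$ if needed). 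For continuous weights, Riemann sums converge in $L^2$ by mean-square continuity. As a result, every finite vector $(f_0(u_1),\ldots,f_0(u_n),\int w_1f_0,\ldots,\int w_rf_0)$ is an $L^2$-limit of linear images of finite evaluation vectors of $f_0$, and is therefore Gaussian. Adding the independent Gaussian vector $\boldsymbol\varepsilon$ preserves joint Gaussianity, so $(f_0,\boldsymbol\kappa)$ is jointly Gaussian.

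\textbf{Covariance formulas.} Continuity of $L^2$ integration against the bounded linear functional $Z\mapsto\Cov(f_0(s),Z)$, or equivalently Fubini as in the proof of \cref{prop:samplepaths}, together with independence of $\boldsymbol\varepsilon$, gives $\gamma_m(s)=\int w_m(u)C_\theta(s,u)\,du$. Applying the same argument twice gives the double-integral formula for $\boldsymbol\Sigma_\kappa$, with the $\mathbf T$ contribution coming from independence. For $\gamma_m\in C^r$, I differentiate under the integral sign. Since $D_1^aC_\theta$ is continuous on the compact square for $a\le r$ (because $C_\theta\in C^{r,r}$), the difference quotients are uniformly bounded by the mean value theorem and converge uniformly. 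Dominated convergence with the $L^1$ weight $w_m$ (which is in $L^1$ because it is in $L^2$ on a compact interval) then yields $\gamma_m^{(a)}(s)=\int w_m(u)D_1^aC_\theta(s,u)\,du$, and continuity of this expression follows from uniform continuity.

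\textbf{Compatibility.} This is the step needing the most care, and it is where positive definiteness of $\boldsymbol\Sigma_\kappa$ enters. The key observation is that $(f_0,\boldsymbol\kappa)$ has now been realized as an actual jointly Gaussian pair with the stated moments. Every block matrix $\boldsymbol\Omega_n$ is therefore a genuine covariance matrix and hence positive semidefinite. Since $\boldsymbol\Sigma_\kappa\succ0$, the Schur complement criterion (the forward direction of \cref{thm:appendix-schur}) gives positive semidefiniteness of $C_{\theta\mid\kappa}$. So the compatibility condition holds with no further computation. The only delicate points are that the pair is realized on a common probability space and that the means match ($\E\kappa_m=\int w_m\mu$). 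Both are immediate from the construction, because $\mathbf T$ positive semidefinite ensures $\boldsymbol\varepsilon$ exists.
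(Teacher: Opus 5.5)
Your proposal is correct and follows the paper's proof essentially step for step. You use $L^2$ approximation of the weighted integrals with Riemann sums and Cram\'er--Wold for joint Gaussianity, pass covariance through the $L^2$ integral together with independence of $\boldsymbol\varepsilon$ for the covariance formulas, use dominated convergence with the $L^1$ weight for $\gamma_m\in C^r([t_0,t_1])$, and invoke the (i)$\Rightarrow$(ii) direction of \cref{thm:appendix-schur} for compatibility. The only difference is cosmetic: you approximate $w_m$ by continuous functions in $L^2$, whereas the paper uses step functions with the bound $M^{1/2}\|w_m\|_{L^1}$, and this changes nothing of substance.
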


\begin{proof}
Let $M:=\sup_{u\in[t_0,t_1]}\E[f_0(u)^2]<\infty$. Since $[t_0,t_1]$ has finite length, $w_m\in L^2([t_0,t_1])$ implies $w_m\in L^1([t_0,t_1])$. Hence the integral $\int_{t_0}^{t_1}w_m(u)f_0(u)\mathrm{d}u$ is well defined as an $L^2$ limit of step-function sums, with the bound
\[
\left\|\int_{t_0}^{t_1}w_m(u)f_0(u)\mathrm{d}u\right\|_{L^2(\Omega)}
\le M^{1/2}\|w_m\|_{L^1}.
\]
For any $\tau_1,\ldots,\tau_n\in[t_0,t_1]$, approximate all $r$ integrals simultaneously by such step-function sums, and approximate the resulting integrals of $f_0$ over subintervals by Riemann sums in $L^2$. The approximating vectors are linear transformations of finite Gaussian vectors formed from values of $f_0$ together with the independent Gaussian vector $\boldsymbol{\varepsilon}$. Their $L^2$ limits are therefore Gaussian by the Cram\'er--Wold argument, so $(f_0,\boldsymbol{\kappa})$ is jointly Gaussian.

The covariance identities follow by applying Cauchy--Schwarz to pass covariance through the $L^2$ limits and then using Fubini and independence of $\boldsymbol{\varepsilon}$ from $f_0$:
\begin{align*}
\gamma_m(s)
&=\Cov\bigl(f_0(s),\kappa_m\bigr)
=\int_{t_0}^{t_1}w_m(u)\Cov\bigl(f_0(s),f_0(u)\bigr)\mathrm{d}u
=\int_{t_0}^{t_1}w_m(u)C_\theta(s,u)\mathrm{d}u,\\
(\boldsymbol{\Sigma}_{\kappa})_{mj}
&=\Cov(\kappa_m,\kappa_j)
=\int_{t_0}^{t_1}\int_{t_0}^{t_1}w_m(u)C_\theta(u,v)w_j(v)\mathrm{d}u\mathrm{d}v+\mathbf{T}_{mj}.
\end{align*}
For $a=0,\ldots,r$, dominated convergence gives
\[
D_s^a\gamma_m(s)=\int_{t_0}^{t_1}w_m(u)D_1^aC_\theta(s,u)\mathrm{d}u,
\]
because $D_1^aC_\theta$ is continuous and bounded on the compact square and $w_m\in L^1([t_0,t_1])$. Thus $\gamma_m\in C^r([t_0,t_1])$.

If $\boldsymbol{\Sigma}_{\kappa}$ is positive definite, then the construction gives a jointly Gaussian pair $(f_0,\boldsymbol{\kappa})$ with the displayed covariance blocks. Thus condition~(i) of \cref{thm:appendix-schur} holds, and the compatibility claim follows.
\end{proof}

The second construction makes \(\boldsymbol{\kappa}\) the derivative jet of a jointly Gaussian process \(u_0\). The cross-covariances are then determined by \(C_{fg}\) and \(C_g\).

\begin{proposition}[Construction from a second Gaussian process]\label{prop:appendix-second-process}
Let $u_0\sim\mathcal{GP}(\nu,C_g)$ be a second Gaussian process on $[t_0,t_1]$, jointly Gaussian with $f_0$, and write
\[
C_{fg}(s,t):=\Cov\bigl(f_0(s),u_0(t)\bigr).
\]
Assume that \(u_0\) is mean-square differentiable up to order \(r-1\) and that, for \(m=1,\ldots,r\), the right-sided endpoint mean-square derivative \(D_t^{m-1}u_0(t_0)\) exists. Define
\[
\kappa_m:=D_t^{m-1}u_0(t_0),\qquad m=1,\ldots,r,
\]
set $\boldsymbol{\kappa}:=(\kappa_1,\ldots,\kappa_r)^\top$, and write
\[
\gamma_m(s):=\Cov\bigl(f_0(s),\kappa_m\bigr),\qquad
(\boldsymbol{\Sigma}_{\kappa})_{mj}:=\Cov\bigl(\kappa_m,\kappa_j\bigr).
\]
Then \((f_0,\boldsymbol{\kappa})\) is jointly Gaussian and
\begin{align*}
\gamma_m(s)&=D_2^{m-1}C_{fg}(s,t)\big|_{t=t_0}, && m=1,\ldots,r,\quad s\in[t_0,t_1],\\
(\boldsymbol{\Sigma}_{\kappa})_{mj}&=D_1^{m-1}D_2^{j-1}C_g(s,t)\big|_{s=t=t_0}, && m,j=1,\ldots,r.
\end{align*}
In particular, this covariance specification is admissible because the pair \((f_0,\boldsymbol{\kappa})\) is constructed jointly on a common probability space.
\end{proposition}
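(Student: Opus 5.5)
The plan has three steps: show that every finite vector built from values of \(f_0\) and the jet entries \(\kappa_m\) is an \(L^2\)-limit of Gaussian vectors, pass covariances through those limits, and then invoke \cref{thm:appendix-schur} for admissibility.

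\textbf{Step 1: approximate the jet entries by finite differences.} Fix \(m\in\{1,\ldots,r\}\). By the assumed existence of the right-sided endpoint mean-square derivative, \(\kappa_m=D_t^{m-1}u_0(t_0)\) is the \(L^2\)-limit of iterated forward difference quotients
\[
\Delta_h^{m-1}u_0(t_0)=h^{-(m-1)}\sum_{i=0}^{m-1}(-1)^{m-1-i}\binom{m-1}{i}u_0(t_0+ih),
\qquad h\downarrow0 .
\]
For \(m=1\) the quotient is simply \(u_0(t_0)\). If convergence of the plain iterated quotient is not directly available, I would first approximate \(D_t^{m-1}u_0(t_0)\) by one-sided quotients of \(D_t^{m-2}u_0\) near \(t_0\) and induct on \(m\). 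Either way, each \(\kappa_m\) is an \(L^2\)-limit of finite linear combinations of values of \(u_0\).

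\textbf{Step 2: joint Gaussianity.} Fix points \(\tau_1,\ldots,\tau_n\in[t_0,t_1]\) and consider
\[
\bigl(f_0(\tau_1),\ldots,f_0(\tau_n),\Delta_h^{0}u_0(t_0),\ldots,\Delta_h^{r-1}u_0(t_0)\bigr)^\top .
\]
This vector is a fixed linear transformation of a finite evaluation vector of the jointly Gaussian pair \((f_0,u_0)\), so it is Gaussian. As \(h\downarrow0\) it converges in \(L^2\) to \((f_0(\tau_1),\ldots,f_0(\tau_n),\boldsymbol\kappa^\top)^\top\). By the Cramér--Wold argument already used in the proof of \cref{prop:gaussian-ensemble}, the limit is multivariate Gaussian, which gives joint Gaussianity of \((f_0,\boldsymbol\kappa)\).

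\textbf{Step 3: covariance identities.} \(L^2\)-convergence allows covariances to pass through the limits by Cauchy--Schwarz. Applied to \(\gamma_m\), this gives
\[
\gamma_m(s)=\lim_{h\downarrow0}\Delta_h^{m-1}\bigl[C_{fg}(s,\cdot)\bigr](t_0),
\]
and the right-hand side is the difference-quotient limit of \(C_{fg}(s,\cdot)\) at \(t_0\). Applied to \(\boldsymbol\Sigma_\kappa\), using two independent step sizes, it gives
\[
(\boldsymbol\Sigma_\kappa)_{mj}=\lim_{h,h'\downarrow0}\Delta_{h,1}^{m-1}\Delta_{h',2}^{j-1}C_g(t_0,t_0).
\]

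The main obstacle is identifying these limits with the classical mixed partial derivatives \(D_2^{m-1}C_{fg}\) and \(D_1^{m-1}D_2^{j-1}C_g\) evaluated at \(t_0\). I would argue in the direction covariance-of-derivatives equals derivative-of-covariance. Mean-square differentiability of \(u_0\) up to order \(r-1\) implies that \(t\mapsto\Cov(f_0(s),D_t^{k}u_0(t))\) is differentiable with derivative \(\Cov(f_0(s),D_t^{k+1}u_0(t))\), again by passing covariance through the \(L^2\) difference quotient. Iterating shows that the one-sided derivatives \(D_2^{m-1}C_{fg}(s,t_0)\) exist and equal \(\gamma_m(s)\). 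The same argument in both arguments of \(C_g\), using the standard result of \citet{cramer1967stationary} cited earlier, gives the \(\boldsymbol\Sigma_\kappa\) formula. This is where the one-sided endpoint interpretation must be tracked carefully.

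\textbf{Admissibility.} Admissibility needs no separate check. The pair \((f_0,\boldsymbol\kappa)\) has been realized on a common probability space as a jointly Gaussian vector, so condition (i) of \cref{thm:appendix-schur} holds whenever \(\boldsymbol\Sigma_\kappa\) is positive definite. In the singular case, the finite block covariance matrices are still covariance matrices of actual random vectors, hence positive semidefinite, which is all the construction requires.
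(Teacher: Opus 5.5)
Your proposal is correct and follows essentially the same route as the paper. The paper likewise uses one-sided finite-difference approximants for the jet entries, the Cram\'er--Wold argument for joint Gaussianity of the $L^2$ limits, and Cauchy--Schwarz to pass covariances through those limits, and it reads admissibility off from the joint construction. Your Step~3 is slightly more careful than the paper's, which simply asserts that the limits of the finite-difference quotients of $C_{fg}$ and $C_g$ are the stated partial derivatives. You instead establish that those one-sided derivatives exist by iterating first-order mean-square differentiation through the covariance.
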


\begin{proof}
For each $m$, choose a one-sided finite-difference approximant $Q_{m,h}(t_0)$ to $D_t^{m-1}u_0(t_0)$, based on values of $u_0$ near $t_0$, such that $Q_{m,h}(t_0)\to\kappa_m$ in $L^2$ as $h\downarrow0$; for $m=1$ take $Q_{1,h}(t_0)=u_0(t_0)$. For any finite collection $s_1,\ldots,s_n\in[t_0,t_1]$, the approximating vector
\[
\bigl(f_0(s_1),\ldots,f_0(s_n),Q_{1,h}(t_0),\ldots,Q_{r,h}(t_0)\bigr)
\]
is jointly Gaussian because $(f_0,u_0)$ is jointly Gaussian. Its $L^2$ limit is
\[
\bigl(f_0(s_1),\ldots,f_0(s_n),\kappa_1,\ldots,\kappa_r\bigr),
\]
which is Gaussian by the Cram\'er--Wold argument. Hence $(f_0,\boldsymbol{\kappa})$ is jointly Gaussian.

For fixed $s\in[t_0,t_1]$, Cauchy--Schwarz gives
\[
\Cov\bigl(f_0(s),Q_{m,h}(t_0)\bigr)\to \Cov\bigl(f_0(s),\kappa_m\bigr).
\]
The covariance on the left is the same finite-difference operator applied to the second argument of $C_{fg}(s,t)$ at $t=t_0$. Passing to the limit gives
\[
\gamma_m(s)=\Cov\bigl(f_0(s),\kappa_m\bigr)=D_2^{m-1}C_{fg}(s,t)\big|_{t=t_0}.
\]
Similarly, if $Q_{j,\eta}(t_0)\to\kappa_j$ is the corresponding approximant for $D_t^{j-1}u_0(t_0)$, then
\[
\Cov\bigl(Q_{m,h}(t_0),Q_{j,\eta}(t_0)\bigr)\to \Cov\bigl(\kappa_m,\kappa_j\bigr)
\]
as $h,\eta\downarrow0$. The left side is the mixed finite-difference quotient of $C_g(s,t)$ at $(t_0,t_0)$, and hence
\[
(\boldsymbol{\Sigma}_{\kappa})_{mj}
=\Cov\bigl(\kappa_m,\kappa_j\bigr)
=D_1^{m-1}D_2^{j-1}C_g(s,t)\big|_{s=t=t_0}.
\]
The final admissibility claim follows from the joint construction of $(f_0,\boldsymbol{\kappa})$.
\end{proof}

The next corollary specializes \cref{prop:appendix-second-process} to the case \(u_0=f_0\), with derivatives interpreted in the same endpoint mean-square sense.

\begin{corollary}[Derivative jet construction from the anchor process]\label{cor:appendix-self-referential}
If, in \cref{prop:appendix-second-process}, one takes \(u_0=f_0\), then
\begin{align*}
\kappa_m&=D_t^{m-1}f_0(t_0), && m=1,\ldots,r,\\
\gamma_m(s)&=D_2^{m-1}C_\theta(s,t)\big|_{t=t_0}, && m=1,\ldots,r,\quad s\in[t_0,t_1],
\end{align*}
and
\begin{align*}
(\boldsymbol{\Sigma}_{\kappa})_{mj}=D_1^{m-1}D_2^{j-1}C_\theta(s,t)\big|_{s=t=t_0},
\qquad m,j=1,\ldots,r.
\end{align*}
In that case,
\begin{align*}
f_p(t)=(\mathcal A_{t_0}^{[p]}f_0)(t)+\sum_{j=1}^{p}\frac{(t-t_0)^{p-j}}{(p-j)!}D_t^{j-1}f_0(t_0),
\qquad p=1,\ldots,r,\quad t\in[t_0,t_1].
\end{align*}
\end{corollary}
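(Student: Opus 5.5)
The corollary follows by direct specialization of \cref{prop:appendix-second-process} with \(u_0=f_0\), \(\nu=\mu\), and \(C_g=C_{fg}=C_\theta\). The first step is to check that the hypotheses of that proposition hold.

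Joint Gaussianity of \((f_0,u_0)\) is trivial when \(u_0=f_0\). Mean-square differentiability up to order \(r-1\) follows from \cref{ass:ms-regularity}. Indeed, \(C_\theta\in C^{r,r}([t_0,t_1]^2)\) gives mean-square derivatives up to order \(r\), and one-sided endpoint mean-square derivatives at \(t_0\) via the standard criterion \citep[Theorem~2.5.1]{cramer1967stationary}. Here the mixed derivatives of the covariance are continuous up to the boundary, so one-sided difference quotients are \(L^2\)-Cauchy.

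With the hypotheses in place, the definition \(\kappa_m:=D_t^{m-1}u_0(t_0)\) reads \(\kappa_m=D_t^{m-1}f_0(t_0)\). The covariance formulas of \cref{prop:appendix-second-process} then become
\[
\gamma_m(s)=D_2^{m-1}C_\theta(s,t)\big|_{t=t_0},
\qquad
(\boldsymbol\Sigma_\kappa)_{mj}=D_1^{m-1}D_2^{j-1}C_\theta(s,t)\big|_{s=t=t_0},
\]
because \(C_{fg}(s,t)=\Cov(f_0(s),f_0(t))=C_\theta(s,t)\) and \(C_g=C_\theta\). The displayed representation of \(f_p\) is simply the defining construction of \(f_p\) from \cref{prop:gaussian-ensemble}, as used throughout the dependent extension, with these \(\kappa_j\) substituted. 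No new argument is needed here beyond noting that the construction does not require independence.

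The main point to watch is \(\gamma_m\in C^r([t_0,t_1])\), which is part of the standing assumptions of this subsection. From \(\gamma_m=D_2^{m-1}C_\theta(\cdot,t_0)\), I would obtain \(D^a\gamma_m=D_1^aD_2^{m-1}C_\theta(\cdot,t_0)\). This is continuous for \(a\le r\) since \(m-1\le r-1\le r\), which is exactly what \(C^{r,r}\) provides.

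One caveat is that \cref{prop:appendix-conditional-laws} requires \(\boldsymbol\Sigma_\kappa\) positive definite, and this can fail, for example for degenerate kernels. The corollary itself asserts no conditional law, so I would only remark that admissibility here holds by joint construction, as in \cref{prop:appendix-second-process}. I would also note that the singular case is fine for the unconditional covariance formula of \cref{thm:appendix-general-cov}.
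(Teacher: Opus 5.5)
Your proposal is correct and follows the paper's proof. Both set \(u_0=f_0\) in \cref{prop:appendix-second-process} so that \(C_{fg}=C_g=C_\theta\), read off \(\kappa_m\), \(\gamma_m\), and \(\boldsymbol\Sigma_\kappa\), and substitute \(\kappa_j=D_t^{j-1}f_0(t_0)\) into the unchanged positive-index representation. Your extra checks are sound additions that the paper leaves implicit: the one-sided endpoint mean-square derivatives from \(C_\theta\in C^{r,r}\), \(\gamma_m\in C^r\), and the remark that \(\boldsymbol\Sigma_\kappa\) may be singular.
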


\begin{proof}
Taking $u_0=f_0$ in \cref{prop:appendix-second-process} gives $C_{fg}=C_g=C_\theta$, and therefore
\[
\kappa_m=D_t^{m-1}f_0(t_0),\qquad
\gamma_m(s)=D_2^{m-1}C_\theta(s,t)\big|_{t=t_0},
\]
together with
\[
(\boldsymbol{\Sigma}_{\kappa})_{mj}=D_1^{m-1}D_2^{j-1}C_\theta(s,t)\big|_{s=t=t_0}.
\]
Since the defining positive-index representation in the dependent extension is unchanged from the main construction, the positive-index representation remains
\[
f_p(t)=(\mathcal A_{t_0}^{[p]}f_0)(t)+\sum_{j=1}^{p}\frac{(t-t_0)^{p-j}}{(p-j)!}\kappa_j,
\qquad p=1,\ldots,r.
\]
Substituting \(\kappa_j=D_t^{j-1}f_0(t_0)\) yields the stated formula.
\end{proof}

The preceding constructions define dependence at the level of \((f_0,\boldsymbol{\kappa})\). We now track how that dependence propagates through the ensemble.

\begin{proposition}[Cross-covariances across ensemble levels]\label{prop:appendix-cross-chain}
Retain the standing assumptions of this subsection. For each \(p\in\{-r,\ldots,r\}\), define the vector-valued cross-covariance function
\begin{align*}
\boldsymbol{\Gamma}_p(t):=\Cov\bigl(f_p(t),\boldsymbol{\kappa}\bigr)
=\bigl(\Gamma_{p,1}(t),\ldots,\Gamma_{p,r}(t)\bigr)^\top,
\end{align*}
where the covariance with \(\boldsymbol{\kappa}\) is understood componentwise. Then, for \(m=1,\ldots,r\),
\begin{align*}
\Gamma_{p,m}(t)
&=(\mathcal A_{t_0}^{[p]}\gamma_m)(t)+\mathbf{1}_{\{p>0\}}\sum_{j=1}^{p}\frac{(t-t_0)^{p-j}}{(p-j)!}(\boldsymbol{\Sigma}_{\kappa})_{jm},
\qquad p=-r,\ldots,r,
\end{align*}
and, for \(p=-r+1,\ldots,r\) and \(t\in(t_0,t_1)\),
\begin{align*}
\frac{d}{dt}\Gamma_{p,m}(t)=\Gamma_{p-1,m}(t).
\end{align*}
Thus \(\Gamma_{p,m}\) is obtained by applying the same signed-order operator to \(\gamma_m\), with an added polynomial correction from \(\boldsymbol{\Sigma}_{\kappa}\) when \(p>0\). More precisely, if \(p=-q<0\), then \(\Gamma_{p,m}=\gamma_m^{(q)}\), so \(\gamma_m\) is recovered from \(\Gamma_{p,m}\) by \(q\) integrations together with the boundary values \(\gamma_m(t_0),\gamma_m'(t_0),\ldots,\gamma_m^{(q-1)}(t_0)\). If \(p=0\), then \(\Gamma_{0,m}=\gamma_m\). If \(p>0\), then
\begin{align*}
\Gamma_{p,m}(t)-\sum_{j=1}^{p}\frac{(t-t_0)^{p-j}}{(p-j)!}(\boldsymbol{\Sigma}_{\kappa})_{jm}
=(\mathcal A_{t_0}^{[p]}\gamma_m)(t),
\end{align*}
so \(\gamma_m\) is recovered by subtracting this polynomial correction term and differentiating \(p\) times.
\end{proposition}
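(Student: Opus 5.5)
The plan is to decompose $f_p=A_p+B_p$ exactly as in the proof of \cref{thm:appendix-general-cov}, with
\[
A_p(t)=(\mathcal A_{t_0}^{[p]}f_0)(t),\qquad
B_p(t)=\mathbf 1_{\{p>0\}}\sum_{j=1}^{p}\frac{(t-t_0)^{p-j}}{(p-j)!}\kappa_j ,
\]
and then compute $\Gamma_{p,m}(t)=\Cov(A_p(t),\kappa_m)+\Cov(B_p(t),\kappa_m)$ by bilinearity.

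\textbf{Step 1: the two pieces.} The first term is $(\mathcal A_{t_0}^{[p]}\gamma_m)(t)$ by \cref{lem:appendix-cross-basic}. The second is a finite linear combination of $\Cov(\kappa_j,\kappa_m)=(\boldsymbol\Sigma_\kappa)_{jm}$ with the displayed polynomial coefficients. Adding them gives the first displayed formula for every $p\in\{-r,\ldots,r\}$. The interior restriction $t\in(t_0,t_1)$ for $p<0$ is inherited from the lemma.

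\textbf{Step 2: the derivative chain.} I would prove $\tfrac{d}{dt}\Gamma_{p,m}=\Gamma_{p-1,m}$ on $(t_0,t_1)$ directly from the formula, splitting by the sign of $p$.
\begin{itemize}
\item For $p\le 0$ the polynomial term is absent and $\mathcal A_{t_0}^{[p]}\gamma_m=D^{-p}\gamma_m$. Since $\gamma_m\in C^r$ by the standing assumption, differentiating gives $D^{-p+1}\gamma_m=\mathcal A_{t_0}^{[p-1]}\gamma_m$.
\item For $p\ge 1$, the fundamental theorem of calculus applies to the Cauchy repeated-integral kernel, because $\gamma_m$ is continuous. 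This gives $\tfrac{d}{dt}\mathcal I_{t_0}^{p}\gamma_m=\mathcal I_{t_0}^{p-1}\gamma_m$, with $\mathcal I^0=\mathrm{Id}$.
\item The polynomial term is differentiated exactly as the boundary term was in the proof of \cref{prop:samplepaths}. The $j=p$ term is constant and drops out, and the remaining sum is the polynomial correction of level $p-1$, which vanishes when $p=1$, matching $\Gamma_{0,m}=\gamma_m$.
\end{itemize}

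As an alternative check, one could argue from the mean-square chain $D_tf_p=f_{p-1}$ of \cref{def:ensemble}. By Cauchy--Schwarz, $\Cov(\cdot,\kappa_m)$ is continuous on $L^2$, so it commutes with the $L^2$ limit of difference quotients. This gives the same identity without computation. I expect to present this as the primary argument, since it uses only square-integrability of $\kappa_m$, and keep the explicit computation as confirmation.

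\textbf{Step 3: the recovery statements.} These follow from the formula.
\begin{itemize}
\item For $p=-q<0$, $\Gamma_{p,m}=\gamma_m^{(q)}$. Taylor's formula with integral remainder (Cauchy's formula) then gives
\[
\gamma_m=\mathcal I_{t_0}^{q}\gamma_m^{(q)}+\sum_{i=0}^{q-1}\frac{(t-t_0)^{i}}{i!}\gamma_m^{(i)}(t_0),
\]
which exhibits the needed boundary values.
\item For $p=0$ the statement is immediate.
\item For $p>0$, subtracting the polynomial leaves $\mathcal I_{t_0}^{p}\gamma_m$. Differentiating $p$ times recovers $\gamma_m$, by repeated application of the fundamental theorem of calculus as in Step 2.
\end{itemize}

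\textbf{Main obstacle.} The only delicate point is the endpoint/interior bookkeeping. The negative-index cross-covariances use one-sided derivatives at $t_0$, both in the lemma's domain restriction and in the boundary values $\gamma_m^{(i)}(t_0)$ needed for recovery. I would handle this by invoking the standing assumption that $\gamma_m\in C^r([t_0,t_1])$ with one-sided endpoint derivatives. This makes $\gamma_m^{(i)}(t_0)$ well defined and lets Taylor's formula be applied on the closed interval.
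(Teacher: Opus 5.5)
Your proposal is correct and follows the paper's proof: the same split $f_p=A_p+B_p$, \cref{lem:appendix-cross-basic} for the $A_p$ part, $\Cov(\kappa_j,\kappa_m)=(\boldsymbol\Sigma_\kappa)_{jm}$ for the polynomial part, direct differentiation of the resulting formula for the chain, and integration and differentiation for the recovery statements. The mean-square argument you would make primary is also valid, since it passes $D_tf_p=f_{p-1}$ through the $L^2$-continuous functional $\Cov(\cdot,\kappa_m)$ and needs no explicit formula, but you should state why that chain holds in the dependent construction: the chain argument in the proof of \cref{prop:gaussian-ensemble} never uses independence of $f_0$ and $\boldsymbol\kappa$.
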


\begin{proof}
For fixed \(m\), the right-hand side is the same positive-level deterministic construction as in the main ensemble, with anchor element \(\gamma_m\) and boundary constants \((\boldsymbol{\Sigma}_\kappa)_{1m},\ldots,(\boldsymbol{\Sigma}_\kappa)_{rm}\).

By the defining representation of \(f_p\) inherited from \cref{prop:gaussian-ensemble},
\begin{align*}
\Gamma_{p,m}(t)
=\Cov\bigl((\mathcal A_{t_0}^{[p]}f_0)(t),\kappa_m\bigr)+\mathbf{1}_{\{p>0\}}\sum_{j=1}^{p}\frac{(t-t_0)^{p-j}}{(p-j)!}\Cov(\kappa_j,\kappa_m).
\end{align*}
Now \cref{lem:appendix-cross-basic} gives
\begin{align*}
\Cov\bigl((\mathcal A_{t_0}^{[p]}f_0)(t),\kappa_m\bigr)=(\mathcal A_{t_0}^{[p]}\gamma_m)(t),
\end{align*}
and \(\Cov(\kappa_j,\kappa_m)=(\boldsymbol{\Sigma}_{\kappa})_{jm}\), which proves the displayed formula for \(\Gamma_{p,m}\).

For \(t\in(t_0,t_1)\), differentiating the displayed expression yields
\begin{align*}
\frac{d}{dt}\Gamma_{p,m}(t)=\Gamma_{p-1,m}(t),
\end{align*}
since \(\frac{d}{dt}(\mathcal A_{t_0}^{[p]}\gamma_m)(t)=(\mathcal A_{t_0}^{[p-1]}\gamma_m)(t)\) and the derivative of the polynomial correction term is exactly the polynomial correction term at level \(p-1\).

For the recovery statement, if \(p=-q<0\), then \(\Gamma_{p,m}=\gamma_m^{(q)}\), so repeated integration together with the boundary values \(\gamma_m(t_0),\gamma_m'(t_0),\ldots,\gamma_m^{(q-1)}(t_0)\) recovers \(\gamma_m\). If \(p=0\), there is nothing to prove because \(\Gamma_{0,m}=\gamma_m\). If \(p>0\), subtracting the polynomial correction term gives \((\mathcal A_{t_0}^{[p]}\gamma_m)(t)=\mathcal I_{t_0}^{p}\gamma_m(t)\), and differentiating \(p\) times recovers \(\gamma_m\).
\end{proof}

Although deterministic, the cross-covariance chain \(\bigl(\Gamma_{-r,m},\ldots,\Gamma_{r,m}\bigr)\) evolves like an ensemble for each fixed \(m\). It can be viewed as a weak second ensemble because it follows the same derivative--integral transformation law without adding randomness.

A coupled second differential ensemble gives a stronger probabilistic realization: its derivative branch at \(t_0\) supplies the integration constants of \(\mathfrak f\). The following statement gives the corresponding ensemble-level construction.

\begin{proposition}[Coupled second differential ensemble]\label{prop:appendix-coupled-second-ensemble}
Assume the setting of \cref{prop:appendix-second-process}, and assume in addition that the anchor process \(u_0\) satisfies the order-\(r\) analogue of \cref{ass:ms-regularity}: its mean belongs to \(C^r([t_0,t_1])\) and its covariance function \(C_g\) belongs to \(C^{r,r}([t_0,t_1]^2)\). Let \(\boldsymbol{\chi}:=(\chi_1,\ldots,\chi_r)^\top\) be any Gaussian vector jointly Gaussian with \((f_0,u_0)\), and define
\begin{align*}
u_\ell(t):=(\mathcal A_{t_0}^{[\ell]}u_0)(t)+\mathbf{1}_{\{\ell>0\}}\sum_{m=1}^{\ell}\frac{(t-t_0)^{\ell-m}}{(\ell-m)!}\chi_m,
\qquad \ell=-r,\ldots,r.
\end{align*}
Then \(\mathfrak{u}:=(u_{-r},\ldots,u_r)\) satisfies the mean-square derivative and boundary clauses of \cref{def:ensemble}. If, in addition, \(\nu\equiv 0\) and \(\E[\boldsymbol{\chi}]=\mathbf 0\), then \(\mathfrak{u}\) is a centered differential ensemble in the sense of \cref{def:ensemble}. Moreover, by \cref{prop:appendix-second-process}, the integration constants of \(\mathfrak{f}\) satisfy
\begin{align*}
\kappa_m=D_t^{m-1}u_0(t_0),
\qquad m=1,\ldots,r.
\end{align*}
Consequently, for \(p=1,\ldots,r\),
\begin{align*}
f_p(t)=(\mathcal A_{t_0}^{[p]}f_0)(t)+\sum_{j=1}^{p}\frac{(t-t_0)^{p-j}}{(p-j)!}D_t^{j-1}u_0(t_0).
\end{align*}
\end{proposition}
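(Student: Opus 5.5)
The plan is to reduce everything to results already established, treating \(u_0\) as the anchor of its own ensemble. The first claim, that \(\mathfrak u\) satisfies the mean-square derivative chain and the boundary identities of \cref{def:ensemble}, is exactly the first part of \cref{prop:gaussian-ensemble} applied with anchor \(u_0\) and constants \(\boldsymbol\chi\) in place of \(f_0\) and \(\boldsymbol\kappa\). I would check which parts of that proof actually use \cref{ass:independent-boundary}. The derivative chain for \(u_{-r},\ldots,u_0\) follows from the order-\(r\) regularity of \(\nu\) and \(C_g\) via Cram\'er's theorem. Square-integrability of \(\mathcal I_{t_0}^\ell u_0\) follows from the Cauchy--Schwarz bound with \(\sup_u \E[u_0(u)^2]<\infty\). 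The mean-square differentiability of \(G_\ell=\mathcal I^\ell_{t_0}u_0\), with derivative \(G_{\ell-1}\), follows from the Jensen argument and induction. The polynomial part \(B_\ell\) has ordinary derivative \(B_{\ell-1}\). Finally \(u_\ell(t_0)=\chi_\ell\) because \(G_\ell(t_0)=0\). None of these steps uses independence of \(\boldsymbol\chi\) from \(u_0\); only square-integrability of \(\boldsymbol\chi\) is needed, and this holds since \(\boldsymbol\chi\) is Gaussian. So I would state that the chain and boundary part of the proof of \cref{prop:gaussian-ensemble} carries over verbatim. Dependence only enters the covariance blocks and the joint Gaussianity, and joint Gaussianity is not claimed here beyond what the Cram\'er--Wold argument gives anyway.

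For the centered statement, I would compute \(\E[u_\ell(t)]\) exactly as in the mean computation of \cref{prop:gaussian-ensemble}. The \(L^2\) limits of difference quotients handle negative \(\ell\), and Fubini handles positive \(\ell\), giving \((\mathcal A^{[\ell]}_{t_0}\nu)(t)\) plus a polynomial in \(\E[\boldsymbol\chi]\). Under \(\nu\equiv0\) and \(\E\boldsymbol\chi=\mathbf 0\) this vanishes, and together with the finite second moments this yields a centered ensemble.

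The remaining claims are direct substitutions. \Cref{prop:appendix-second-process} defines \(\kappa_m=D^{m-1}_tu_0(t_0)\) and shows that \((f_0,\boldsymbol\kappa)\) is jointly Gaussian. The positive-index representation of \(\mathfrak f\) in the dependent extension is unchanged from \cref{prop:gaussian-ensemble}, so substituting \(\kappa_j\) gives the displayed formula for \(f_p\), just as in \cref{cor:appendix-self-referential}. I would also remark that \(D_t^{m-1}u_0(t_0)=u_{-(m-1)}(t_0)\) as a one-sided endpoint value. This says the derivative branch of \(\mathfrak u\) at \(t_0\) supplies \(\boldsymbol\kappa\), and it is where the order-\(r\) regularity of \(C_g\) guarantees that the right-sided derivatives assumed in \cref{prop:appendix-second-process} exist.

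The main obstacle I expect is this endpoint identification. Definition~\ref{def:ensemble}(i) only gives the chain at interior \(t\in(t_0,t_1)\), whereas \(\kappa_m\) are one-sided derivatives at \(t_0\). I would handle this by appealing to the \(C^{r,r}\) regularity on the closed square, which gives one-sided mean-square derivatives at \(t_0\) that are \(L^2\)-limits of one-sided quotients. Hence they coincide with the continuous extension of \(u_{-(m-1)}\) to \(t_0\), since \(u_{-(m-1)}\) is mean-square continuous on \([t_0,t_1]\).
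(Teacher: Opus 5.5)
Your proposal is correct and follows the paper's proof essentially line for line. The paper likewise notes that the derivative-chain and boundary part of the proof of \cref{prop:gaussian-ensemble} never uses independence, so it carries over verbatim to \(u_0\) and \(\boldsymbol\chi\); it obtains centering from the vanishing means, and gets the \(\kappa_m\) identity and the formula for \(f_p\) by direct substitution from \cref{prop:appendix-second-process}. The endpoint identification \(D_t^{m-1}u_0(t_0)=u_{-(m-1)}(t_0)\) that you flag as the main obstacle is not needed for the statement, because \(\kappa_m\) is defined as that right-sided derivative and its existence is already assumed in the setting of \cref{prop:appendix-second-process}; your argument for it is fine but optional.
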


\begin{proof}
The analogue of \cref{ass:ms-regularity} for \(u_0\) gives the derivative processes \(u_{-r},\ldots,u_{-1}\), bounded second moments on the compact interval, and the same mean-square continuity used for the positive-index integral branch in \cref{prop:gaussian-ensemble}. The finite polynomial terms involving \(\boldsymbol{\chi}\) are square-integrable. The proof of the derivative recursion and boundary identities in \cref{prop:gaussian-ensemble} therefore applies verbatim to \(u_0\) and \(\boldsymbol{\chi}\); that part of the argument does not use independence between the anchor process and the boundary vector. Hence \(\mathfrak u\) satisfies the mean-square derivative relations and \(u_\ell(t_0)=\chi_\ell\) for \(\ell=1,\ldots,r\). If \(\nu\equiv0\) and \(\E[\boldsymbol{\chi}]=\mathbf0\), the derivative, anchor, integral, and polynomial components all have mean zero, so \(\mathfrak u\) is centered.

The displayed identity for \(\kappa_m\) is the defining identity in \cref{prop:appendix-second-process}. Substituting it into the positive-index representation of \(f_p\) gives the stated formula.
\end{proof}

The final result links the coupled second ensemble back to \(\Gamma_{p,m}\): the full cross-covariance with the derivative branch specializes at \(t_0\) to the chain above.

\begin{proposition}[Cross-covariance with the derivative branch of the second ensemble]\label{prop:appendix-crosscov-second-ensemble}
Assume the setting of \cref{prop:appendix-coupled-second-ensemble}, and interpret endpoint derivatives with the same convention as in \cref{prop:appendix-second-process}. Then, for every \(p\in\{-r,\ldots,r\}\), every \(\ell\in\{0,\ldots,r\}\), and every \(s,t\in[t_0,t_1]\),
\begin{align*}
\Cov\bigl(f_p(s),u_{-\ell}(t)\bigr)
&=(\mathcal A_{1,t_0}^{[p]}D_2^{\ell}C_{fg})(s,t)\\
&\quad+\mathbf{1}_{\{p>0\}}\sum_{j=1}^{p}\frac{(s-t_0)^{p-j}}{(p-j)!}D_1^{j-1}D_2^{\ell}C_g(u,v)\big|_{u=t_0,v=t}.
\end{align*}
In particular, with \(\gamma_m\), \(\Gamma_{p,m}\), and \(\boldsymbol{\Sigma}_{\kappa}\) as in \cref{prop:appendix-second-process,prop:appendix-cross-chain}, for every \(p\in\{-r,\ldots,r\}\), every \(m\in\{1,\ldots,r\}\), and every \(s\in[t_0,t_1]\),
\begin{align*}
\Cov\bigl(f_p(s),u_{-(m-1)}(t_0)\bigr)
&=(\mathcal A_{t_0}^{[p]}\gamma_m)(s)+\mathbf{1}_{\{p>0\}}\sum_{j=1}^{p}\frac{(s-t_0)^{p-j}}{(p-j)!}(\boldsymbol{\Sigma}_{\kappa})_{jm}\\
&=\Gamma_{p,m}(s).
\end{align*}
\end{proposition}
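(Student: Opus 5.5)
Write \(f_p(s)=A_p(s)+B_p(s)\) with \(A_p(s)=(\mathcal A_{t_0}^{[p]}f_0)(s)\) and \(B_p(s)=\mathbf 1_{\{p>0\}}\sum_{j=1}^p\frac{(s-t_0)^{p-j}}{(p-j)!}\kappa_j\), exactly as in the proof of \cref{prop:gaussian-ensemble}. Here \(\kappa_j=D_t^{j-1}u_0(t_0)\) by \cref{prop:appendix-coupled-second-ensemble}. For \(\ell\in\{0,\ldots,r\}\), \(u_{-\ell}(t)=D_t^\ell u_0(t)\) is a mean-square derivative, with one-sided endpoint quotients when \(t\in\{t_0,t_1\}\). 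Bilinearity then splits \(\Cov(f_p(s),u_{-\ell}(t))\) into \(\Cov(A_p(s),u_{-\ell}(t))\) plus \(\sum_j\frac{(s-t_0)^{p-j}}{(p-j)!}\Cov(\kappa_j,u_{-\ell}(t))\). Each piece will be identified by approximating every mean-square derivative or \(L^2\)-integral by finite differences or Riemann sums. Covariance is continuous on \(L^2\times L^2\) by Cauchy--Schwarz, so it passes through these limits.

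\emph{First term.} I first treat \(\Cov(f_0(s),u_{-\ell}(t))\). Write \(u_{-\ell}(t)\) as the \(L^2\)-limit of \(\ell\)th finite-difference quotients \(Q_{\ell,h}(t)\) of \(u_0\). Then \(\Cov(f_0(s),Q_{\ell,h}(t))\) is the same quotient applied to \(C_{fg}(s,\cdot)\) at \(t\), and it converges to \(D_2^\ell C_{fg}(s,t)\). This limit exists because the covariance limit exists, and it agrees with the classical derivative where that derivative exists. Next I apply the operator in \(s\). For \(p<0\), a second finite-difference limit in \(s\) gives \(D_1^{-p}D_2^\ell C_{fg}\). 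For \(p>0\), the covariance--Fubini interchange used in \cref{prop:gaussian-ensemble} gives \(\mathcal I_{1,t_0}^p D_2^\ell C_{fg}\). The interchange is justified by the uniform \(L^2\) bound \(\sup_u\E[f_0(u)^2]<\infty\) and by continuity of \(u\mapsto\Cov(f_0(u),u_{-\ell}(t))\), which follows from mean-square continuity of \(f_0\). Together these give \((\mathcal A_{1,t_0}^{[p]}D_2^\ell C_{fg})(s,t)\).

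\emph{Second term.} Here \(\Cov(D_t^{j-1}u_0(t_0),D_t^\ell u_0(t))\) is a mixed mean-square derivative covariance of \(u_0\). By the order-\(r\) regularity \(C_g\in C^{r,r}\) and the standard identity \(\Cov(D^au_0(x),D^bu_0(y))=D_1^aD_2^bC_g(x,y)\) (Cram\'er, as cited), it equals \(D_1^{j-1}D_2^\ell C_g(u,v)|_{u=t_0,v=t}\), with one-sided derivatives at endpoints. Summing the two terms gives the first display.

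\emph{Specialization.} Set \(\ell=m-1\) and \(t=t_0\). Then \(u_{-(m-1)}(t_0)=\kappa_m\), so the left side is \(\Gamma_{p,m}(s)\) by definition. On the right, \(D_2^{m-1}C_{fg}(\cdot,t_0)=\gamma_m\) by \cref{prop:appendix-second-process}, so the first term becomes \((\mathcal A_{t_0}^{[p]}\gamma_m)(s)\). The second term becomes \((\boldsymbol\Sigma_\kappa)_{jm}\), since \(D_1^{j-1}D_2^{m-1}C_g(t_0,t_0)\) is that entry. This matches \cref{prop:appendix-cross-chain}.

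The main obstacle is bookkeeping rather than any single estimate. One must make sure that \(\mathcal A^{[p]}_{1,t_0}\) acting on \(D_2^\ell C_{fg}\) is well defined and commutes with the limits. The standing assumption that all displayed cross-covariance derivatives exist and are continuous, with one-sided endpoint derivatives, is what will be invoked at exactly this point.
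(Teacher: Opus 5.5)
Your proposal is correct and follows essentially the same route as the paper: the same split \(f_p=A_p+B_p\), the same covariance--limit interchanges as in \cref{lem:appendix-cross-basic} (with \(u_{-\ell}(t)\) in place of \(\kappa_m\)) for the anchor term, the mixed-derivative covariance identity for \(\Cov(\kappa_j,u_{-\ell}(t))\) (which the paper obtains by the finite-difference argument of \cref{prop:appendix-second-process}), and the same specialization \(\ell=m-1\), \(t=t_0\). Your remark that the left side equals \(\Gamma_{p,m}(s)\) directly by definition, since \(u_{-(m-1)}(t_0)=\kappa_m\), is a slightly cleaner way to close than the paper's matching of formulas against \cref{prop:appendix-cross-chain}.
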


\begin{proof}
Write \(f_p(s)=A_p(s)+B_p(s)\), where
\begin{align*}
A_p(s)=(\mathcal A_{t_0}^{[p]}f_0)(s),
\qquad
B_p(s)=\mathbf{1}_{\{p>0\}}\sum_{j=1}^{p}\frac{(s-t_0)^{p-j}}{(p-j)!}\kappa_j.
\end{align*}
For fixed \(\ell\) and \(t\), the mean-square derivative identity for \(u_0\) gives
\[
\Cov\bigl(f_0(v),u_{-\ell}(t)\bigr)=D_2^\ell C_{fg}(v,t),
\qquad v\in[t_0,t_1].
\]
Applying the same covariance--integration and covariance--differentiation interchanges as in \cref{lem:appendix-cross-basic}, now with the square-integrable variable \(u_{-\ell}(t)\) in place of \(\kappa_m\), yields
\[
\Cov\bigl(A_p(s),u_{-\ell}(t)\bigr)
=(\mathcal A_{1,t_0}^{[p]}D_2^\ell C_{fg})(s,t).
\]
For the polynomial part,
\begin{align*}
\Cov\bigl(B_p(s),u_{-\ell}(t)\bigr)
&=\mathbf{1}_{\{p>0\}}\sum_{j=1}^{p}\frac{(s-t_0)^{p-j}}{(p-j)!}\Cov\bigl(\kappa_j,u_{-\ell}(t)\bigr)\\
&=\mathbf{1}_{\{p>0\}}\sum_{j=1}^{p}\frac{(s-t_0)^{p-j}}{(p-j)!}D_1^{j-1}D_2^\ell C_g(u,v)\big|_{u=t_0,v=t},
\end{align*}
where the last equality uses \(\kappa_j=D_t^{j-1}u_0(t_0)\) and the same finite-difference covariance argument as in \cref{prop:appendix-second-process}. Adding the two terms proves the general formula.

Setting \(\ell=m-1\) and \(t=t_0\) gives \(u_{-(m-1)}(t_0)=\kappa_m\). Moreover,
\[
D_2^{m-1}C_{fg}(s,t)\big|_{t=t_0}=\gamma_m(s),
\qquad
D_1^{j-1}D_2^{m-1}C_g(u,v)\big|_{u=v=t_0}=(\boldsymbol{\Sigma}_{\kappa})_{jm}.
\]
The displayed specialization is therefore exactly the formula for \(\Gamma_{p,m}\) in \cref{prop:appendix-cross-chain}.
\end{proof}

The construction has three cases. If \(u_0=f_0\), no additional randomness is introduced. If \(f_0\) and \(u_0\) are independent, then \(C_{fg}\equiv0\), \(\boldsymbol{\gamma}\equiv0\), and the independent model in \cref{prop:gaussian-ensemble} is recovered. If they are distinct and dependent, the constants encode external information coupled to \(f_0\) through \(C_{fg}\).

\subsection{Laplacian--Dirichlet eigenbasis implementation for the differential ensemble}\label{appendix:marginal-finite-basis-likelihood}

This subsection records the Stan implementation used for the single-curve differential ensemble. The notation follows the main text. The non-marginalized model samples the finite-rank coefficients directly. The marginalized model integrates out those coefficients and reconstructs them in generated quantities.

Let \(Y=(Y_1,\ldots,Y_n)^\top\) denote the anchor observations at times \(t_1,\ldots,t_n\), and let \(y\) be the observed realization. Let \(\tau_1,\ldots,\tau_G\) be prediction locations. The computational domain is \([-L,L]\), with \(t_i,\tau_g,t_0\in[-L,L]\). The retained eigensystem is the one used in \cref{sec:finite-rank},
\[
\lambda_{k,L}=\left(\frac{k\pi}{2L}\right)^2,
\qquad
\phi_{k,L}(t)=L^{-1/2}\sin\left(\frac{k\pi(t+L)}{2L}\right),
\qquad
k=1,\ldots,K.
\]
The requested order \(r\) must be admissible for the kernel. Squared exponential kernels allow every finite \(r\). A Mat\'ern kernel with smoothness \(\nu\) is allowed only when \(r<\nu\).

For \(p\in\{-r,\ldots,r\}\), define the transformed basis functions as in the main text,
\[
\psi_k^{[p]}(t)=(\mathcal A_{t_0}^{[p]}\phi_{k,L})(t).
\]
With finite-rank coefficients \(\beta=(\beta_1,\ldots,\beta_K)^\top\), set
\[
g_p(t)=\sum_{k=1}^K\beta_k\psi_k^{[p]}(t).
\]
The implemented ensemble is
\[
f_p(t)=
\begin{cases}
g_p(t), & p\le 0,\\[4pt]
g_p(t)+\displaystyle\sum_{\ell=1}^p
\kappa_\ell\,\frac{(t-t_0)^{p-\ell}}{(p-\ell)!}, & p>0,
\end{cases}
\]
so \(f_p(t_0)=\kappa_p\) for \(p=1,\ldots,r\). The observations satisfy
\[
Y_i\mid f_0,\sigma_y \overset{\indep}{\sim}
\mathcal N(f_0(t_i),\sigma_y^2),
\qquad i=1,\ldots,n.
\]

Let \(\Phi_{ik}=\phi_{k,L}(t_i)\). Write \(S^{\mathrm{unit}}_\rho\) for the spectral density of the selected covariance family with unit marginal amplitude and length-scale \(\rho\), and define
\[
v_k(\rho)=S^{\mathrm{unit}}_\rho\left(\sqrt{\lambda_{k,L}}\right)^{1/2}.
\]
The observed-design normalization is the average finite-basis prior variance at the observed inputs for a unit-amplitude process:
\[
\bar V_\rho
=
\max\left(
\frac1n\sum_{i=1}^n\sum_{k=1}^K v_k(\rho)^2\Phi_{ik}^2,\,
10^{-10}
\right).
\]
Thus \(\bar V_\rho\) depends only on \(\rho\), the observed input locations, and the chosen finite basis, and it rescales the process amplitude below so that \(1-\eta\) represents the latent-process variance fraction on the observed design.
Let \(\vartheta=(\sigma_{\mathrm{tot}},\eta,\rho)\), where \(\sigma_{\mathrm{tot}}>0\) is the total response scale and \(\eta\in(0,1)\) is the residual variance fraction. The non-marginalized parameterization uses
\[
\zeta\sim\mathcal N_K(0,I_K),
\qquad
\beta_k=\alpha_\vartheta v_k(\rho)\zeta_k,
\qquad
\alpha_\vartheta=\sigma_{\mathrm{tot}}\sqrt{\frac{1-\eta}{\bar V_\rho}},
\qquad
\sigma_y=\sigma_{\mathrm{tot}}\sqrt\eta.
\]
Equivalently, conditional on \(\vartheta\),
\[
\beta\sim
\mathcal N_K(0,D_\vartheta),
\qquad
D_\vartheta
=
\sigma_{\mathrm{tot}}^2
\diag\left(\frac{1-\eta}{\bar V_\rho}v_1(\rho)^2,\ldots,
\frac{1-\eta}{\bar V_\rho}v_K(\rho)^2\right),
\]
and
\[
Y\mid \beta,\vartheta
\sim
\mathcal N_n(\Phi\beta,\sigma_{\mathrm{tot}}^2\eta I_n).
\]

The priors are computed from \(y\) and the observation times. Define the empirical response scale as
\[
s_y=
\max\left\{
\operatorname{median}_{i}|y_i-\operatorname{median}_j y_j|,
\operatorname{sd}(y),
10^{-3}
\right\}.
\]
Let \(\Delta_{\mathrm{med}}\) be the median gap between sorted observation times, with fallback \(0.05L\) when \(n=1\). The length-scale prior is
\[
\log\rho\sim\mathcal N(m_\rho,s_\rho^2),
\qquad
m_\rho=\frac12(\log q_{\rho,\mathrm{lo}}+\log q_{\rho,\mathrm{hi}}),
\qquad
s_\rho=
\frac{\log q_{\rho,\mathrm{hi}}-\log q_{\rho,\mathrm{lo}}}{2\Phi^{-1}(0.95)},
\]
where
\[
q_{\rho,\mathrm{hi}}=2L,
\qquad
q_{\rho,\mathrm{lo}}=\min\{\max(\Delta_{\mathrm{med}},0.05L),L\}.
\]
The remaining priors are
\[
\sigma_{\mathrm{tot}}\sim t^+_4(0,s_y),
\qquad
\eta\sim\operatorname{Beta}(2,2).
\]
The implementation uses independent normal priors for the integration constants,
\[
\kappa_j\sim\mathcal N(\mu_{\kappa,j},\tau_{\kappa,j}^2),
\qquad j=1,\ldots,r.
\]
The values \(\mu_{\kappa,j}\) and \(\tau_{\kappa,j}\) may be supplied as data. If the default prior is requested, they are set to
\[
\mu_{\kappa,j}=0,
\qquad
\tau_{\kappa,j}=\frac12s_yL^j,
\qquad j=1,\ldots,r.
\]
Because only \(f_0\) is observed in this model, independent integration constants do not enter the anchor likelihood and remain prior-driven unless separate boundary information is added.

The marginal implementation uses the same priors and integrates out \(\zeta\). Define
\[
A_\vartheta
=
\Phi\diag\left(\sqrt{\frac{1-\eta}{\bar V_\rho}}\,v(\rho)\right),
\qquad
S_\vartheta=\eta I_n+A_\vartheta A_\vartheta^\top .
\]
Then
\[
Y\mid\vartheta
\sim
\mathcal N_n(0,\sigma_{\mathrm{tot}}^2S_\vartheta),
\]
with observation-space log likelihood
\[
\ell_n(\vartheta)
=
-\frac12\left(
n\log(2\pi)
+2n\log\sigma_{\mathrm{tot}}
+\log |S_\vartheta|
+\sigma_{\mathrm{tot}}^{-2}y^\top S_\vartheta^{-1}y
\right).
\]
This is an exact rewrite of the explicit coefficient model. It removes \(\zeta\) from the sampled parameter block. Since \(\boldsymbol\kappa\) is independent of the anchor likelihood, the marginal implementation draws both \(\boldsymbol\kappa\) and \(\beta\) later in generated quantities.

The marginal likelihood is evaluated by one of two algebraically equivalent branches.
The dense branch factors the \(n\times n\) matrix \(S_\vartheta\), with time cost \(O(n^2K+n^3)\) and memory cost \(O(n^2)\). When \(K<n\) and \(\eta>0\), the same likelihood can be evaluated through
\[
B_\vartheta=I_K+\eta^{-1}A_\vartheta^\top A_\vartheta,
\qquad
|S_\vartheta|=\eta^n|B_\vartheta|.
\]
If \(L_B L_B^\top=B_\vartheta\), then
\[
y^\top S_\vartheta^{-1}y
=
\eta^{-1}y^\top y
-
\left\|
L_B^{-1}\frac{A_\vartheta^\top y}{\eta}
\right\|^2 .
\]
This coefficient-space branch costs \(O(nK^2+K^3)\) time and \(O(K^2)\) memory. The implementation uses it only when \(K<n\) and \(\eta>10^{-8}\), otherwise it uses the dense branch. No jitter is added, so both branches evaluate the same finite-basis likelihood.

After fitting the marginal model, finite-basis coefficients are drawn from \(\beta\mid y,\vartheta\). With
\[
\Sigma_Y=\Phi D_\vartheta\Phi^\top+\sigma_{\mathrm{tot}}^2\eta I_n,
\]
Gaussian conditioning gives
\[
\beta\mid y,\vartheta
\sim
\mathcal N_K\!\left(
D_\vartheta\Phi^\top\Sigma_Y^{-1}y,\,
D_\vartheta-D_\vartheta\Phi^\top\Sigma_Y^{-1}\Phi D_\vartheta
\right).
\]
The implementation samples this law with Matheron's rule. Draw
\[
\beta_{\mathrm{prior}}\sim\mathcal N_K(0,D_\vartheta),
\qquad
\epsilon_{\mathrm{prior}}\sim\mathcal N_n(0,\sigma_{\mathrm{tot}}^2\eta I_n),
\qquad
y_{\mathrm{prior}}=\Phi\beta_{\mathrm{prior}}+\epsilon_{\mathrm{prior}},
\]
and set
\[
\beta_{\mathrm{draw}}
=
\beta_{\mathrm{prior}}
+D_\vartheta\Phi^\top\Sigma_Y^{-1}(y-y_{\mathrm{prior}}).
\]
This draw has distribution \(\beta\mid y,\vartheta\) and requires only an \(n\times n\) solve. The generated quantities then draw \(\kappa_j\) from the independent normal prior above and assemble \(f_{-r},\ldots,f_r\) from the transformed basis matrices \(\psi_k^{[p]}\) and the polynomial integration-constant terms.

\subsection{TARTARE calibration procedure}\label{appendix:tartare-procedure}

\begin{tartarenote}{Inputs}
\begin{itemize}[nosep]
\item kernel specification: covariance family and fixed smoothness parameters, which determine the spectral density \(S_\theta\) and the exact covariance blocks induced by \(C_\theta\);
\item representation order \(r\) and monitoring set \(\mathcal M\subseteq\{-r,\ldots,r\}\), where \(p\in\mathcal M\) denotes the ensemble level \(f_p\);
\item normalized length-scale grid \(\mathcal U=(u_i)\), where \(u_i=\rho_iW^{-1}\) and \(W\) is the observation window half-width;
\item candidate range grid \(\mathcal C\), grid size \(G\), floor \(c_{\mathrm{floor}}\), and tolerances \(\varepsilon_{\mathrm{tail}}\), \(\tau_{\mathrm{joint}}\), \(\tau_{\mathrm{pair}}\), \(\tau_{\mathrm{var}}\), and optional \(\tau_{t_0}\).
\end{itemize}
\end{tartarenote}

\begin{tartarenote}{Definitions and conventions}
Write \(\theta\) for the fixed kernel specification together with a length-scale \(\rho\), and set the global marginal variance scale to one, since it cancels in the normalized calibration criteria. In the v1 implementation, the spectral tail order is the row order \(q_\star=r\), not re-estimated from the monitored subset \(\mathcal M\). For derivative-only rows, the row order equals the derivative order being calibrated, so \(q_\star=q\); for full differential-ensemble rows, \(q_\star=r\). Define
\[
R_{q_\star}(\Omega;\rho)
=
\int_{\Omega}^{\infty}\omega^{2q_\star}S_\theta(\omega;\rho)\,\diff\omega
\left(\int_0^{\infty}\omega^{2q_\star}S_\theta(\omega;\rho)\,\diff\omega\right)^{-1}.
\]
Let \(\mathcal G_{\mathrm{joint}}\) be the \(G\)-point grid on \([-(1-\alpha_r),1-\alpha_r]\), where \(\alpha_r=0.15\) for marginal Mat\'ern rows and otherwise \(\alpha_r=\min(0.05+0.02r,0.20)\). Let \(\mathcal G_{\mathrm{full}}\) be the untrimmed \(G\)-point grid on \([-1,1]\), augmented by \(t_0\). For integral rows, \(t_0=-1\), \(\delta_{t_0}=0.05\), and at least \(n_{t_0}=3\) near-\(t_0\) points are required.
\end{tartarenote}

\begin{tartarebox}{Offline calibration}
\small
\begin{enumerate}[label=\textbf{O\arabic*.},leftmargin=1.8em,itemsep=4pt,topsep=2pt]
\item Set \(m\) from the scale-normalized tail rule with v1 safety factor \(s=1.25\), so \(R_{q_\star}\!\left(\pi m(2\rho)^{-1};\rho\right)\leq \varepsilon_{\mathrm{tail}}\).
\item Calibrate over the prespecified normalized grid \(\mathcal U=(u_i)\); no length-scale is estimated in the offline step.
\item For each \(u_i\in\mathcal U\), set the calibration scale to \(W=1\), so the grid value \(u_i=\rho_iW^{-1}\) corresponds to \(\rho_i=u_i\).
\item Construct the exact joint covariance \(K_{\mathrm{ex}}\) on \(\mathcal G_{\mathrm{joint}}\), using closed-form covariance blocks induced by \(C_\theta\) for all ordered pairs of monitored levels in \(\mathcal M\). The covariance contribution from integration constants is retained exactly in the model and is excluded from this HSGP calibration error. If \(\mathcal M\) includes positive levels, also construct \(K_{\mathrm{ex}}^{\mathrm{full}}\) on \(\mathcal G_{\mathrm{full}}\) for the near-\(t_0\) check.
\item For each \(c\in\mathcal C\), set \(L=cW\), \(K=\left\lceil mcu_i^{-1}\right\rceil\), and \(\Omega_K=\pi K(2L)^{-1}\). Construct the HSGP joint covariance \(K_{\mathrm H}\) on \(\mathcal G_{\mathrm{joint}}\), using the transformed HSGP basis for each level in \(\mathcal M\) and the spectral weights \(S_\theta(\omega_k;\rho_i)\).
\item Let \(\eta=10^{-12}\max_j|K_{\mathrm{ex},jj}|\) and \(\Lambda_{\mathcal M}=\diag\left(\sqrt{\max\{|K_{\mathrm{ex},jj}|,\eta\}}\right)\). Compute
\[
E_{\mathrm{joint}}=\norm{\Lambda_{\mathcal M}^{-1}(K_{\mathrm H}-K_{\mathrm{ex}})\Lambda_{\mathcal M}^{-1}}_{\mathrm F}\norm{\Lambda_{\mathcal M}^{-1}K_{\mathrm{ex}}\Lambda_{\mathcal M}^{-1}}_{\mathrm F}^{-1}.
\]
\item For each monitored ordered pair \((p,q)\), compute
\[
E_{p,q}=\norm{\Lambda_p^{-1}(K_{\mathrm H}^{p,q}-K_{\mathrm{ex}}^{p,q})\Lambda_q^{-1}}_{\mathrm F}\norm{\Lambda_p^{-1}K_{\mathrm{ex}}^{p,q}\Lambda_q^{-1}}_{\mathrm F}^{-1},
\]
where \(\Lambda_p\) and \(\Lambda_q\) contain exact marginal standard deviations for levels \(p\) and \(q\). Set \(E_{\mathrm{pair}}=\max_{p,q\in\mathcal M}E_{p,q}\).
\item Compute the maximum marginal variance error
\[
E_{\mathrm{var}}=\max_j |K_{\mathrm H,jj}-K_{\mathrm{ex},jj}|\max\{|K_{\mathrm{ex},jj}|,\eta\}^{-1}.
\]
\item If \(\mathcal M\) includes positive levels, construct \(K_{\mathrm H}^{\mathrm{full}}\) on \(\mathcal G_{\mathrm{full}}\). Let \(\mathcal M_{\mathrm{int}}=\mathcal M\cap\{1,\ldots,r\}\) be the monitored \(t_0\)-based integral levels, and for each \(p\in\mathcal M_{\mathrm{int}}\) define
\[
\mathcal G_{t_0}^p=\{t\in\mathcal G_{\mathrm{full}}:t_0<t\leq t_0+\delta_{t_0},\, |K_{\mathrm{ex}}^{p,p}(t,t)|>10^{-12}\},
\]
falling back to the threshold \(10^{-15}\max_{t\in\mathcal G_{\mathrm{full}}}|K_{\mathrm{ex}}^{p,p}(t,t)|\) if fewer than \(n_{t_0}\) points are found. This is the anchor-band diagnostic in the implementation. Compute
\[
E_{t_0}=\max_{\substack{p\in\mathcal M_{\mathrm{int}}\\ t\in\mathcal G_{t_0}^p}} |K_{\mathrm H}^{\mathrm{full},p,p}(t,t)-K_{\mathrm{ex}}^{\mathrm{full},p,p}(t,t)|\,|K_{\mathrm{ex}}^{\mathrm{full},p,p}(t,t)|^{-1}.
\]
\item Mark \(c\) acceptable if \(E_{\mathrm{joint}}\leq\tau_{\mathrm{joint}}\), \(E_{\mathrm{pair}}\leq\tau_{\mathrm{pair}}\), \(E_{\mathrm{var}}\leq\tau_{\mathrm{var}}\), and, when applicable, \(E_{t_0}\leq\tau_{t_0}\). Record \(c_i^\star\), the smallest acceptable \(c\) for \(u_i\).
\item Fit only the coefficient \(c_{\mathcal M}\). Let \(\mathcal J=\{i:c_i^\star>c_{\mathrm{floor}}+10^{-12}\}\), use \(c_{\mathcal M}=c_{\mathrm{floor}}\) if \(\mathcal J\) is empty, and otherwise set
\[
c_{\mathcal M}=\max\{c_{\mathrm{floor}},\lceil10\max_{i\in\mathcal J}c_i^\star u_i^{-1}\rceil10^{-1}\}.
\]
Define \(c(u)=\max\{c_{\mathrm{floor}},c_{\mathcal M}u\}\).
\item Validate the envelope on dense or refined grids; accept the calibration only if all required guards remain below tolerance.
\end{enumerate}
\end{tartarebox}

\begin{tartarebox}{Online adaptive use}
\small
\begin{enumerate}[label=\textbf{A\arabic*.},leftmargin=1.8em,itemsep=4pt,topsep=2pt]
\item Normalize the observed input window to half-width \(W\). Integral rows use the corresponding physical left-end reference point \(t_0=-W\).
\item Choose the completed calibration entry for the kernel, representation order, and monitored set. Keep the specified prior on \(\rho\) fixed throughout Phase A.
\item Initialize the working length-scale at \(\rho_{\mathrm{work}}=\rho_{\mathrm{start}}\), with \(\rho_{\mathrm{start}}=0.5W\) in the implementation. For the current \(\rho_{\mathrm{work}}\), set \(u=\rho_{\mathrm{work}}W^{-1}\), \(c=c(u)\), \(L=cW\), \(K=\left\lceil mcu^{-1}\right\rceil\), and \(\Omega_K=\pi K(2L)^{-1}\).
\item Fit the model with the fixed length-scale prior, and let \(\rho_\alpha\) be the posterior \(\alpha_\rho\)-quantile of \(\rho\).
\item Compute \(\ell_{\min}=mcW/K\). Accept the Phase A basis only if \(\rho_\alpha-\delta_\rho\geq \ell_{\min}\) and \(R_{q_\star}(\Omega_K;\rho_\alpha)\leq\varepsilon_{\mathrm{tail}}\).
\item If either Phase A check fails, set the next working length-scale from the posterior length-scale summary, recompute \(u,c,L,K,\Omega_K\), and refit. Stop when both checks pass or when the Phase A iteration cap is reached.
\end{enumerate}

\begin{enumerate}[label=\textbf{B\arabic*.},leftmargin=1.8em,itemsep=4pt,topsep=2pt]
\item Start from the accepted Phase A basis.
\item Increase \(K\) according to the refinement rule. Recompute \(c\), \(L\), and \(\Omega_K\) from the current posterior length-scale summary, and enlarge \(c\) further if the boundary diagnostic requires it.
\item Refit the model and recompute the online diagnostics.
\item Require the Phase A length-scale and spectral-tail checks to continue to hold.
\item Require either the high-frequency fraction of derivative-weighted posterior basis energy, computed from squared basis coefficients weighted by \(\lambda_k^{q_\star}\), to fall below tolerance, or the total derivative-weighted posterior basis energy to be stable relative to the previous fit.
\item Require posterior length-scale stability, leave-one-out expected log predictive density stability, passing sampler diagnostics, stable posterior mean and standard-deviation curves for every monitored level in \(\mathcal M\), and stable posterior means and standard deviations for any unfixed integration constants.
\item First apply the curve-stability checks on a trimmed interior prediction grid, then repeat the endpoint check on the full grid.
\item Accept the online basis only after the configured number of consecutive enlarged-basis refits pass all required checks.
\end{enumerate}
\end{tartarebox}

\input{tables/tartare_calibration_constants.tex}

\subsection{Additional TARTARE simulation results}\label{appendix:additional-simulation-analyses}

\subsubsection{Posterior covariance diagnostics}\label{appendix:additional-simulation-diagnostics}

\Cref{tab:tartare-d2-vs-full-diagnostics} compares posterior covariance errors for T-D2, with \(\mathcal M=\{-2\}\), and T-all, with \(\mathcal M=\{-2,-1,0,1,2\}\), using the same replicated datasets as \cref{tab:tartare-simulation-summary}. The diagnostics are whitened relative Frobenius errors against the exact-covariance posterior: \(E_{\mathrm{joint}}\) for the stacked covariance, \(E_{\mathrm{pair,max}}\) for the worst pairwise block, and \(E_{\mathrm{var,max}}\) for marginal variances. Positive changes favor full-ensemble calibration.

\begin{table}[htbp]
\centering
\caption{\label{tab:tartare-d2-vs-full-diagnostics}Posterior covariance diagnostics comparing TARTARE calibration targeted at the second derivative (D2) with calibration targeted at the full second-order ensemble (Full). Completed columns give the number of accepted simulation replicates contributing to each diagnostic.}
\resizebox{\textwidth}{!}{\input{tables/supplement_tartare_d2_vs_full_diagnostics.tex}}
\end{table}

Across the eight configurations, T-D2 and T-all give nearly identical posterior covariance errors. Changes in \(E_{\mathrm{joint}}\) and \(E_{\mathrm{pair,max}}\) are about one percent or less. This reflects the structure of the second-order ensemble: \(f_{-2}\) is the hardest component for the finite basis because differentiation amplifies high-frequency error, while the anchor and integral levels are easier once \(f_{-2}\) is resolved.

The pairwise diagnostic tests whether full-ensemble calibration improves cross-level dependence. In these simulations, it does not materially reduce \(E_{\mathrm{pair,max}}\), and inspection of individual cross-block errors gives the same conclusion. This is a global Frobenius diagnostic, so it does not rule out differences for localized covariance features or downstream functionals.

TARTARE is a finite-grid acceptance rule, not an optimizer of the posterior diagnostics in this table. Adding all ensemble levels can increase the calibration burden without lowering posterior covariance error when \(f_{-2}\) is already the limiting target. In these scenarios, \(\mathcal M=\{-2\}\) captures nearly all of the covariance-approximation gain, while full-ensemble monitoring remains a conservative choice when several levels are primary targets.

\subsubsection{Boundary-constant sensitivity}\label{appendix:boundary-constant-sensitivity}

This simulation isolates the role of positive-index integration constants. Anchor observations update the canonical integral component induced by \(f_0\), but they do not identify independent boundary constants. Integral-level uncertainty and bias can therefore be dominated by the boundary component unless the constants are fixed correctly or informed by boundary measurements.

The simulation reused the second-order design grid from the main study after removing duplicate calibration-target rows. This gave eight covariance-design settings: SE and Mat\'ern \(7/2\), \(\rho\in\{0.35,0.65\}\), \(n\in\{25,50\}\), \(n_{\mathrm{pred}}=101\), \(\sigma_y=0.10\), and truth \((\kappa_1,\kappa_2)=(0.60,-0.40)\). For each setting, \(1000\) anchor datasets were generated on \([-1,1]\). The analysis did not refit Stan models. It combined exact canonical integral covariances with analytic Gaussian boundary-constant conditions.

The \(22\) boundary settings were evaluated for all eight covariance-design settings and \(1000\) replicated anchor datasets per design. They consist of two fixed baselines, \(\kappa=(0.60,-0.40)\) and \(\kappa=(0,0)\), four independent-prior settings \(\kappa_j\sim N(0,\tau^2)\), and sixteen direct-observation settings \(z_j=\kappa_j+\eta_j\), \(\eta_j\sim N(0,\omega^2)\). Here \(\tau\in\{0.10,0.25,0.50,1.00\}\) and \(\omega\in\{0.02,0.05,0.10,0.25\}\). Constant posteriors were computed by Gaussian conditioning, then propagated through the polynomial integral terms.

\begin{table}[htbp]
\centering
\caption{\label{tab:boundary-constant-kappa-summary}Boundary-constant posterior summaries in the sensitivity simulation. Entries aggregate over the boundary settings within each condition type, eight design settings, and \(1000\) seeds per design. Coverage is the proportion of nominal 95\% intervals containing the generating value. The SD/prior SD column is reported for the prior-based conditions and is averaged over the prior-standard-deviation grid.}
\resizebox{\textwidth}{!}{\input{tables/boundary_constant_kappa_summary.tex}}
\end{table}

\Cref{tab:boundary-constant-kappa-summary} shows the identifiability pattern. Known constants are degenerate at the truth, fixed wrong constants are degenerate at the wrong value, and independent priors are unchanged by anchor data. Direct boundary observations reduce the average posterior SD from \(0.463\) to \(0.086\), but average coverage remains below nominal because the grid includes tight priors centered away from the truth.

\begin{table}[htbp]
\centering
\caption{\label{tab:boundary-constant-variance-decomposition}Mean variance decomposition for the integral levels in the boundary-constant sensitivity simulation. Variance columns are multiplied by \(10^3\). The canonical variance is induced by the anchor-process posterior covariance. The constant variance is the polynomial contribution from posterior uncertainty in \((\kappa_1,\kappa_2)\).}
\resizebox{\textwidth}{!}{\input{tables/boundary_constant_variance_decomposition.tex}}
\end{table}

\Cref{tab:boundary-constant-variance-decomposition} shows that boundary uncertainty dominates positive-index variance. Under independent priors, constants contribute \(98.3\%\) of mean \(I_1\) variance and \(99.5\%\) of mean \(I_2\) variance. Direct boundary observations reduce the contribution sharply, but it still accounts for \(78.1\%\) and \(90.2\%\), respectively.

\begin{table}[htbp]
\centering
\caption{\label{tab:boundary-constant-integral-error}Integrated error and pointwise interval summaries for the positive-index levels. IRMSE is the integrated root mean squared error over the prediction grid. Coverage and interval width are averaged over prediction locations, replicated datasets, and design settings.}
\resizebox{\textwidth}{!}{\input{tables/boundary_constant_integral_error_summary.tex}}
\end{table}

\Cref{tab:boundary-constant-integral-error} separates uncertainty calibration from bias. When the constants are known, the remaining error is small, with IRMSE \(0.0209\) for \(I_1\), \(0.0161\) for \(I_2\), and pointwise coverage about \(0.94\). Fixing the constants incorrectly leaves the same narrow canonical intervals but shifts the integral levels, increasing IRMSE to \(0.6003\) for \(I_1\) and \(0.4003\) for \(I_2\) and producing essentially no coverage. Independent priors without boundary data have the same posterior mean error as the fixed-wrong condition, because the prior mean is zero, but they produce much wider intervals. Those wide intervals partially recover coverage, especially for \(I_2\), at the cost of interval widths that are roughly twenty-one to forty-four times larger than the known-constant widths. Direct boundary observations give the best compromise among the non-oracle conditions: they reduce IRMSE to about \(0.135\) for both levels and keep intervals much narrower than the independent-prior intervals, although average coverage remains below nominal for \(I_1\) because the grid includes prior settings that are too concentrated around zero.

\begin{table}[htbp]
\centering
\caption{\label{tab:boundary-constant-location-summary}Location-specific error summaries for the integral levels. The anchor is the boundary point where constants enter directly, the midpoint is the center of the prediction interval, and the far endpoint is the opposite end of the interval. Coverage is for nominal 95\% intervals.}
\resizebox{\textwidth}{!}{\input{tables/boundary_constant_location_summary.tex}}
\end{table}

\Cref{tab:boundary-constant-location-summary} shows how the boundary error propagates through the polynomial part of the integral representation. For \(I_1\), an error in \(\kappa_1\) shifts the level across the domain, so the fixed-wrong RMSE remains approximately \(0.60\) at the anchor, midpoint, and far endpoint. For \(I_2\), the location effect is polynomial: the fixed-wrong RMSE is \(0.40\) at the anchor, \(0.201\) at the midpoint, and \(0.801\) at the far endpoint. The known-constant condition has near-nominal coverage away from the anchor, where the degenerate boundary value makes coverage numerically brittle, and the direct-observation condition again falls between the oracle and prior-only cases. The far endpoint of \(I_2\) is the most sensitive location because uncertainty in \(\kappa_1\) and \(\kappa_2\) accumulates through the second integral.

Boundary constants are statistical components of the positive-index ensemble, not HSGP details. Anchor-only data cannot identify independent constants. Fixing them incorrectly produces biased, overconfident integral summaries. Independent priors represent boundary uncertainty but do not remove boundary bias. Direct boundary measurements reduce uncertainty and error, although coverage can still suffer when the prior scale is too tight. Applied analyses should state the boundary specification explicitly and treat prior scales on integration constants as sensitivity parameters.

\subsection{Motorcycle covariance priors and diagnostics}\label{appendix:additional-application-information}

\noindent The motorcycle model uses Mat\'ern \(7/2\) anchor kernels for both the mean ensemble and the log standard deviation ensemble. It follows the finite-basis scale parameterization described in the single-curve Stan implementation, applied separately to the mean and log standard deviation ensembles. The reported covariance parameters are $(\alpha_\mu,\rho_\mu,\sigma_\mu)$ and $(\alpha_h,\rho_h,\sigma_h)$ where \(\alpha\) is the reconstructed process amplitude, \(\rho\) is the length-scale, and \(\sigma\) is the residual scale. These parameters were estimated jointly with the latent basis coefficients. Let \(s_y\) denote the empirical scale used for the corresponding response, and let \(\bar V_\rho\) be the observed-design average unit-amplitude HSGP variance for the corresponding ensemble. The Stan scale parameters are \(\sigma_{\mathrm{tot}}\) and \(\eta\), with
\[
\sigma_{\mathrm{tot}}^2=\alpha^2\bar V_\rho+\sigma^2,
\qquad
\eta=\frac{\sigma^2}{\sigma_{\mathrm{tot}}^2}.
\]
Their priors are
\[
\sigma_{\mathrm{tot}}\sim \mathrm{half}\text{-}t_4(0,s_y),
\qquad
\eta\sim \mathrm{Beta}(2,2),
\]
and the length-scale has prior
\[
\rho\sim \mathrm{LogNormal}(m_\rho,s_\rho),
\]
where \(m_\rho\) is the log-location parameter and \(s_\rho\) is the log-scale parameter. The table below reports the reconstructed parameters \(\alpha=\sigma_{\mathrm{tot}}\sqrt{(1-\eta)/\bar V_\rho}\) and \(\sigma=\sigma_{\mathrm{tot}}\sqrt{\eta}\). In the implementation, \(L\) is the calibrated HSGP computational half-width. Let \(\Delta_{\mathrm{med}}\) be the median gap between sorted centered observation times, using \(0.05L\) as the fallback spacing, and set
\[
q_{\rho,\mathrm{hi}}=2L,
\qquad
q_{\rho,\mathrm{lo}}=\min\{\max(\Delta_{\mathrm{med}},0.05L),L\}.
\]
The lognormal parameters are
\[
m_\rho=\frac12(\log q_{\rho,\mathrm{lo}}+\log q_{\rho,\mathrm{hi}}),
\qquad
s_\rho=\frac{\log q_{\rho,\mathrm{hi}}-\log q_{\rho,\mathrm{lo}}}{2\Phi^{-1}(0.95)}.
\]
Sampling used \(4\) chains with \(2000\) warmup and \(2000\) post-warmup iterations per chain, \(\texttt{adapt\_delta}=0.95\), and \(\texttt{max\_treedepth}=12\).

\Cref{tab:motorcycle-covariance-summary} reports posterior summaries and Monte Carlo efficiency diagnostics for the mean and log standard deviation ensembles. \Cref{fig:motorcycle-covariance-trace} shows the corresponding chain traces.
\input{tables/motorcycle_covariance_summary.tex}

\begin{figure}[htbp]
\centering
\includegraphics[width=\textwidth]{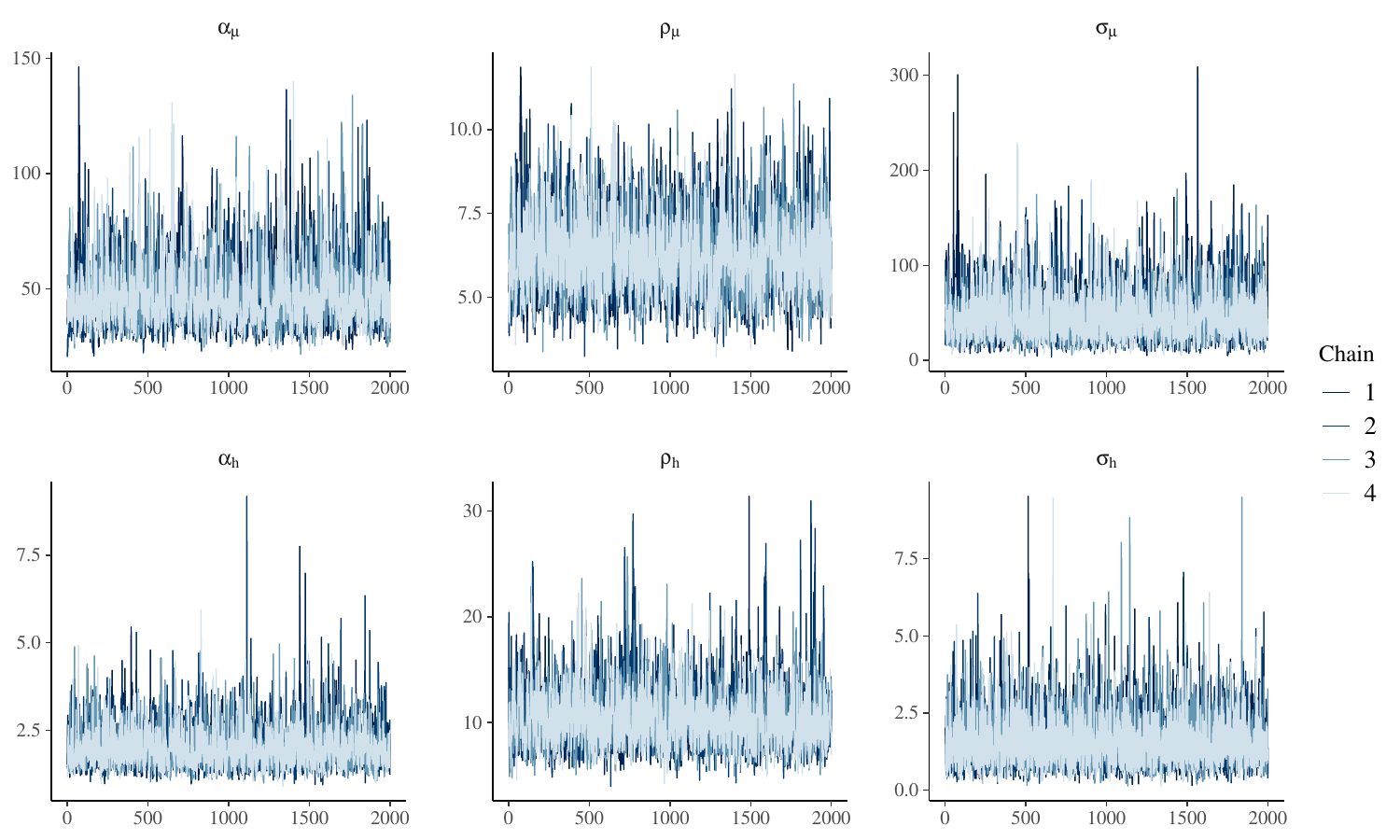}
\caption{Posterior trace plots for the motorcycle covariance hyperparameters. Panels show amplitude, length-scale, and residual-scale parameters for the mean ensemble \((\mu)\) and log standard deviation ensemble \((h)\).}
\label{fig:motorcycle-covariance-trace}
\end{figure}

\end{document}

%% file: tables/tartare_simulation_summary.tex
\begin{tabular}{
rrrrrrr|rrrrrr|rrrrr
}
\multicolumn{7}{c|}{} & \multicolumn{6}{c|}{IRMSE} & \multicolumn{5}{c}{Coverage (\%)} \\
$\rho$ & Kernel & n & Method & Failed (
\hline
0.65 & SE & 25 & Exact & -- & -- & -- & 1.00 & 1.14 & 0.21 & 0.04 & 0.02 & 0.02 & 95.26 & 95.05 & 95.23 & 94.48 & 94.05 \\
0.65 & SE & 25 & HSGP & 5.60 & 15 & 1.76 & 1.58 & 4.35 & 0.44 & 0.05 & 0.02 & 0.02 & 91.46 & 94.25 & 96.98 & 96.53 & 96.04 \\
0.65 & SE & 25 & T-f & 3.50 & 22 & 4.42 & 1.10 & 1.41 & 0.24 & 0.05 & 0.02 & 0.02 & 93.20 & 94.57 & 96.22 & 96.14 & 95.66 \\
0.65 & SE & 25 & T-D2 & 6.70 & 52 & 5.32 & 1.09 & 1.38 & 0.24 & 0.04 & 0.02 & 0.02 & 94.43 & 95.44 & 96.43 & 96.24 & 95.83 \\
0.65 & SE & 25 & T-all & 7.40 & 54 & 5.33 & 1.08 & 1.38 & 0.24 & 0.04 & 0.02 & 0.02 & 94.42 & 95.46 & 96.45 & 96.34 & 95.89 \\
0.65 & SE & 25 & Spline & -- & -- & -- & 1.81 & 7.44 & 0.49 & 0.05 & 0.02 & 0.02 & -- & -- & -- & -- & -- \\
\hline
0.65 & SE & 50 & Exact & -- & -- & -- & 1.00 & 1.00 & 0.17 & 0.03 & 0.02 & 0.01 & 94.88 & 94.85 & 94.94 & 93.62 & 93.48 \\
0.65 & SE & 50 & HSGP & 5.20 & 15 & 1.62 & 1.68 & 4.58 & 0.40 & 0.04 & 0.02 & 0.01 & 89.19 & 92.14 & 95.22 & 94.18 & 93.81 \\
0.65 & SE & 50 & T-f & 0.70 & 21 & 4.27 & 1.08 & 1.23 & 0.20 & 0.03 & 0.02 & 0.01 & 92.96 & 94.20 & 95.41 & 94.16 & 94.02 \\
0.65 & SE & 50 & T-D2 & 2.50 & 47 & 5.23 & 1.08 & 1.23 & 0.20 & 0.03 & 0.02 & 0.01 & 93.34 & 94.40 & 95.44 & 94.34 & 94.13 \\
0.65 & SE & 50 & T-all & 2.90 & 48 & 5.23 & 1.08 & 1.23 & 0.20 & 0.03 & 0.02 & 0.01 & 93.42 & 94.42 & 95.44 & 94.36 & 94.14 \\
0.65 & SE & 50 & Spline & -- & -- & -- & 1.93 & 8.98 & 0.41 & 0.04 & 0.02 & 0.01 & -- & -- & -- & -- & -- \\
\hline
0.65 & Mat7/2 & 25 & Exact & -- & -- & -- & 1.00 & 3.73 & 0.38 & 0.05 & 0.02 & 0.02 & 94.74 & 94.73 & 95.00 & 94.55 & 93.95 \\
0.65 & Mat7/2 & 25 & T-f & 2.00 & 36 & 4.84 & 1.08 & 4.09 & 0.42 & 0.06 & 0.03 & 0.02 & 85.56 & 91.79 & 95.87 & 96.41 & 96.21 \\
0.65 & Mat7/2 & 25 & T-D2 & 6.20 & 178 & 4.00 & 1.06 & 4.06 & 0.41 & 0.06 & 0.02 & 0.02 & 94.53 & 95.50 & 96.81 & 96.49 & 96.21 \\
0.65 & Mat7/2 & 25 & T-all & 7.00 & 256 & 5.64 & 1.06 & 4.06 & 0.41 & 0.06 & 0.02 & 0.02 & 93.91 & 95.07 & 96.70 & 96.46 & 96.30 \\
0.65 & Mat7/2 & 25 & Spline & -- & -- & -- & 1.40 & 10.21 & 0.63 & 0.06 & 0.02 & 0.02 & -- & -- & -- & -- & -- \\
\hline
0.65 & Mat7/2 & 50 & Exact & -- & -- & -- & 1.00 & 3.34 & 0.31 & 0.04 & 0.02 & 0.01 & 95.10 & 95.23 & 95.10 & 93.64 & 93.54 \\
0.65 & Mat7/2 & 50 & T-f & 2.10 & 39 & 4.44 & 1.04 & 3.64 & 0.34 & 0.04 & 0.02 & 0.01 & 87.33 & 92.58 & 95.24 & 94.36 & 94.31 \\
0.65 & Mat7/2 & 50 & T-D2 & 1.30 & 168 & 3.82 & 1.04 & 3.65 & 0.33 & 0.04 & 0.02 & 0.01 & 95.27 & 95.41 & 95.85 & 94.24 & 94.38 \\
0.65 & Mat7/2 & 50 & T-all & 2.50 & 236 & 5.38 & 1.04 & 3.64 & 0.33 & 0.04 & 0.02 & 0.01 & 94.97 & 95.19 & 95.79 & 94.40 & 94.52 \\
0.65 & Mat7/2 & 50 & Spline & -- & -- & -- & 1.46 & 11.37 & 0.52 & 0.04 & 0.02 & 0.01 & -- & -- & -- & -- & -- \\
\hline
0.35 & SE & 25 & Exact & -- & -- & -- & 1.00 & 3.77 & 0.41 & 0.05 & 0.02 & 0.02 & 95.08 & 95.08 & 95.00 & 94.50 & 94.06 \\
0.35 & SE & 25 & HSGP & 2.00 & 23 & 1.22 & 1.30 & 7.48 & 0.62 & 0.06 & 0.03 & 0.02 & 90.72 & 94.25 & 97.50 & 96.91 & 96.57 \\
0.35 & SE & 25 & T-f & 0.60 & 26 & 2.38 & 1.08 & 4.37 & 0.46 & 0.06 & 0.02 & 0.02 & 94.63 & 95.56 & 96.85 & 96.58 & 96.25 \\
0.35 & SE & 25 & T-D2 & 1.10 & 42 & 2.87 & 1.08 & 4.36 & 0.46 & 0.06 & 0.02 & 0.02 & 95.01 & 95.75 & 96.90 & 96.75 & 96.33 \\
0.35 & SE & 25 & T-all & 1.20 & 44 & 2.87 & 1.08 & 4.36 & 0.46 & 0.06 & 0.02 & 0.02 & 95.02 & 95.74 & 96.90 & 96.75 & 96.29 \\
0.35 & SE & 25 & Spline & -- & -- & -- & 1.49 & 12.71 & 0.74 & 0.06 & 0.02 & 0.02 & -- & -- & -- & -- & -- \\
\hline
0.35 & SE & 50 & Exact & -- & -- & -- & 1.00 & 3.24 & 0.33 & 0.04 & 0.02 & 0.01 & 94.66 & 94.84 & 95.01 & 93.70 & 93.60 \\
0.35 & SE & 50 & HSGP & 1.20 & 23 & 1.21 & 1.33 & 7.33 & 0.54 & 0.04 & 0.02 & 0.01 & 88.92 & 92.56 & 95.54 & 94.54 & 94.34 \\
0.35 & SE & 50 & T-f & 0.60 & 27 & 2.32 & 1.05 & 3.68 & 0.36 & 0.04 & 0.02 & 0.01 & 94.11 & 94.94 & 95.67 & 94.40 & 94.38 \\
0.35 & SE & 50 & T-D2 & 0.40 & 39 & 2.81 & 1.05 & 3.65 & 0.36 & 0.04 & 0.02 & 0.01 & 94.56 & 95.11 & 95.75 & 94.40 & 94.56 \\
0.35 & SE & 50 & T-all & 0.40 & 40 & 2.81 & 1.05 & 3.66 & 0.36 & 0.04 & 0.02 & 0.01 & 94.56 & 95.10 & 95.77 & 94.40 & 94.54 \\
0.35 & SE & 50 & Spline & -- & -- & -- & 1.54 & 13.19 & 0.59 & 0.05 & 0.02 & 0.01 & -- & -- & -- & -- & -- \\
\hline
0.35 & Mat7/2 & 25 & Exact & -- & -- & -- & 1.00 & 13.33 & 0.77 & 0.07 & 0.02 & 0.02 & 94.67 & 94.74 & 95.04 & 94.58 & 94.16 \\
0.35 & Mat7/2 & 25 & T-f & 2.40 & 40 & 2.64 & 1.06 & 14.27 & 0.84 & 0.07 & 0.03 & 0.02 & 87.44 & 93.33 & 97.14 & 97.36 & 97.25 \\
0.35 & Mat7/2 & 25 & T-D2 & 4.50 & 186 & 2.30 & 1.05 & 14.13 & 0.84 & 0.07 & 0.03 & 0.02 & 94.00 & 95.37 & 97.44 & 97.44 & 97.36 \\
0.35 & Mat7/2 & 25 & T-all & 4.20 & 272 & 3.22 & 1.06 & 14.13 & 0.84 & 0.07 & 0.03 & 0.02 & 93.66 & 95.21 & 97.42 & 97.44 & 97.40 \\
0.35 & Mat7/2 & 25 & Spline & -- & -- & -- & 1.24 & 24.19 & 1.11 & 0.08 & 0.02 & 0.02 & -- & -- & -- & -- & -- \\
\hline
0.35 & Mat7/2 & 50 & Exact & -- & -- & -- & 1.00 & 12.12 & 0.64 & 0.05 & 0.02 & 0.01 & 94.81 & 94.89 & 95.01 & 93.83 & 93.74 \\
0.35 & Mat7/2 & 50 & T-f & 0.50 & 45 & 2.49 & 1.02 & 12.69 & 0.67 & 0.05 & 0.02 & 0.01 & 89.43 & 93.75 & 95.64 & 94.69 & 94.74 \\
0.35 & Mat7/2 & 50 & T-D2 & 0.50 & 180 & 2.17 & 1.02 & 12.64 & 0.67 & 0.05 & 0.02 & 0.01 & 95.02 & 95.38 & 95.94 & 94.79 & 94.76 \\
0.35 & Mat7/2 & 50 & T-all & 0.30 & 253 & 3.04 & 1.02 & 12.63 & 0.67 & 0.05 & 0.02 & 0.01 & 94.79 & 95.27 & 95.85 & 94.75 & 94.67 \\
0.35 & Mat7/2 & 50 & Spline & -- & -- & -- & 1.22 & 22.12 & 0.85 & 0.05 & 0.02 & 0.01 & -- & -- & -- & -- & -- \\
\end{tabular}

%% file: main-proofs.tex

\begin{proof}[Proof of \cref{prop:gaussian-ensemble}]
Write
\begin{align*}
A_p(t) := (\mathcal A_{t_0}^{[p]}f_0)(t),
\qquad
B_p(t) := \mathbf{1}_{\{p>0\}}\sum_{j=1}^{p}\frac{(t-t_0)^{p-j}}{(p-j)!}\kappa_j,
\end{align*}
so that \(f_p(t)=A_p(t)+B_p(t)\). Standard mean-square differentiation theory under \cref{ass:ms-regularity} gives the derivative processes \(D_t^k f_0\), \(k=1,\ldots,r\), with
\begin{align*}
\E[D_t^k f_0(t)] = D_t^k\mu(t),
\qquad
\Cov(D_s^a f_0(s),D_t^b f_0(t))=(D_1^aD_2^b C_\theta)(s,t),
\end{align*}
for the relevant orders \(a,b,k\le r\) \citep[Theorem~2.5.1]{cramer1967stationary}. In particular, the components \(f_{-r},\ldots,f_{-1}\) are square-integrable, and \(f_{-r},\ldots,f_0\) satisfy the chain relations for \(p=-r+1,\ldots,0\).

Since \(\mu\) and \(C_\theta\) are continuous on compact sets,
\begin{align*}
M:=\sup_{u\in[t_0,t_1]}\E[f_0(u)^2]
=\sup_{u\in[t_0,t_1]}\left(\mu(u)^2+C_\theta(u,u)\right)<\infty.
\end{align*}
This gives \(A_0=f_0\in L^2\). For \(p>0\), Cauchy--Schwarz gives, for every \(t\in[t_0,t_1]\),
\begin{align*}
\E[A_p(t)^2]
&\le
\frac{1}{((p-1)!)^2}
\left(\int_{t_0}^t (t-u)^{2p-2}\,du\right)
\left(\int_{t_0}^t \E[f_0(u)^2]du\right)<\infty.
\end{align*}
The term \(B_p(t)\) is a finite linear combination of the Gaussian variables \(\kappa_1,\ldots,\kappa_p\), and is therefore square-integrable. Hence every \(f_p(t)\) is square-integrable.

It remains to verify the positive-index part of the derivative chain. Let \(G_0=f_0\) and \(G_p=\mathcal I_{t_0}^p f_0\) for \(p\ge1\). The continuity of \(\mu\) and \(C_\theta\) implies that \(f_0\) is mean-square continuous. Moreover, the repeated integral identity
\begin{align*}
G_p(t)=\int_{t_0}^t G_{p-1}(u)\,du
\end{align*}
holds in \(L^2\). If \(G_{p-1}\) is mean-square continuous, then, for \(t\in(t_0,t_1)\) and \(h\) such that \(t+h\in[t_0,t_1]\),
\begin{align*}
\frac{G_p(t+h)-G_p(t)}{h}-G_{p-1}(t)
=\frac{1}{h}\int_t^{t+h}\left(G_{p-1}(u)-G_{p-1}(t)\right)\,du,
\end{align*}
and Jensen's inequality yields
\begin{align*}
\E\!\left[\left|\frac{G_p(t+h)-G_p(t)}{h}-G_{p-1}(t)\right|^2\right]
&\le
\frac{1}{|h|}\int_{\min(t,t+h)}^{\max(t,t+h)}
\E\!\left[|G_{p-1}(u)-G_{p-1}(t)|^2\right]du
\to0.
\end{align*}
Thus \(G_p\) is mean-square differentiable with derivative \(G_{p-1}\), and hence is mean-square continuous. Induction gives this for all \(p=1,\ldots,r\). Since \(B_p\) has ordinary derivative \(B_{p-1}\) for \(p\ge1\), with \(B_0\equiv0\), the full processes satisfy the mean-square derivative relation \(D_t f_p=f_{p-1}\) for \(p=1,\ldots,r\). Finally, \(G_p(t_0)=0\) and $f_p(t_0)=B_p(t_0)=\kappa_p$ for $p=1,\ldots,r$ almost surely, so the boundary identities hold.

We next prove joint Gaussianity. Fix \((p_1,\tau_1),\ldots,(p_n,\tau_n)\) with \(p_i\in\{-r,\ldots,r\}\) and \(\tau_i\in[t_0,t_1]\), and set \(\mathbf A=(A_{p_1}(\tau_1),\ldots,A_{p_n}(\tau_n))^\top\). Each coordinate of \(\mathbf A\) is an \(L^2\)-limit of finite linear functionals of finite evaluation vectors of \(f_0\): finite differences for negative indices, the process value itself for index zero, and Riemann sums for positive indices. Finite evaluation vectors of \(f_0\) are Gaussian, finite linear transformations preserve Gaussianity, and the \(L^2\)-limit is Gaussian by the Cram\'er--Wold argument. The same approximations show that \(\mathbf A\) may be chosen measurable with respect to the completed \(\sigma\)-field generated by \(f_0\). The corresponding vector \(\mathbf B=(B_{p_1}(\tau_1),\ldots,B_{p_n}(\tau_n))^\top\) is a linear transformation of \(\boldsymbol\kappa\), hence Gaussian, and it is independent of \(\mathbf A\) because \(\boldsymbol\kappa\) is independent of \(f_0\). Therefore \(\mathbf A+\mathbf B\) is Gaussian. Since the finite collection was arbitrary, \(\mathbf f\) is a Gaussian process.

The mean formula follows by applying expectation to the same \(L^2\) limits for negative indices and by Fubini's theorem for positive indices: $\E[A_p(t)] = (\mathcal A_{t_0}^{[p]}\mu)(t)$ for $p=-r,\ldots,r$. Adding \(\E[B_p(t)]\) gives
\begin{align*}
\E[f_p(t)]
&=
(\mathcal A_{t_0}^{[p]}\mu)(t)
+
\mathbf{1}_{\{p>0\}}
\sum_{j=1}^{p}\frac{(t-t_0)^{p-j}}{(p-j)!}(\boldsymbol\mu_\kappa)_j.
\end{align*}
If \(\mu\equiv0\) and \(\boldsymbol\mu_\kappa=\mathbf0\), this mean is zero for every component and time point; together with square-integrability, the derivative chain, and the boundary identities proved above, this gives a centered \(r\)th-order differential ensemble in the sense of \cref{def:ensemble}.

It remains to compute the covariance. Independence of \(f_0\) and \(\boldsymbol\kappa\) gives
\begin{align*}
\Cov(f_p(s),f_q(t))=\Cov(A_p(s),A_q(t))+\Cov(B_p(s),B_q(t)).
\end{align*}
The polynomial part is
\begin{align*}
\Cov(B_p(s),B_q(t))
&=
\mathbf{1}_{\{p>0\}}\mathbf{1}_{\{q>0\}}
\sum_{j=1}^{p}\sum_{m=1}^{q}
\frac{(s-t_0)^{p-j}}{(p-j)!}
\frac{(t-t_0)^{q-m}}{(q-m)!}
(\boldsymbol\Sigma_\kappa)_{jm}.
\end{align*}
For the anchor part, the relevant kernels are continuous on compact sets, so the covariance--Fubini interchanges below are justified by the \(L^2\) bounds above. The standard derivative covariance formula gives the claim when \(p,q\le0\). When \(p>0\) and \(q\le0\), continuity of \(D_2^{-q}C_\theta\) and the \(L^2\)-integral definition give
\begin{align*}
\Cov(A_p(s),A_q(t))
&=\frac{1}{(p-1)!}\int_{t_0}^s (s-u)^{p-1}
\Cov(f_0(u),D_t^{-q}f_0(t))\,du\\
&=\frac{1}{(p-1)!}\int_{t_0}^s (s-u)^{p-1}(D_2^{-q}C_\theta)(u,t)\,du\\
&=(\mathcal I_{1,t_0}^pD_2^{-q}C_\theta)(s,t).
\end{align*}
The case \(p\le0\) and \(q>0\) similarly gives
\begin{align*}
\Cov(A_p(s),A_q(t))
&=\frac{1}{(q-1)!}\int_{t_0}^t (t-v)^{q-1}(D_1^{-p}C_\theta)(s,v)\,dv\\
&=(D_1^{-p}\mathcal I_{2,t_0}^qC_\theta)(s,t).
\end{align*}
Finally, when \(p,q>0\), continuity of \(C_\theta\) and Fubini's theorem give
\begin{align*}
\Cov(A_p(s),A_q(t))
&=\frac{1}{(p-1)!(q-1)!}
\int_{t_0}^s\int_{t_0}^t
(s-u)^{p-1}(t-v)^{q-1}C_\theta(u,v)\,dv\,du\\
&=(\mathcal I_{1,t_0}^p\mathcal I_{2,t_0}^qC_\theta)(s,t).
\end{align*}
All four cases are exactly
\begin{align*}
\Cov(A_p(s),A_q(t))=(\mathcal A_{1,t_0}^{[p]}\mathcal A_{2,t_0}^{[q]}C_\theta)(s,t).
\end{align*}
Combining this identity with the polynomial covariance gives the stated covariance function.
\end{proof}

\begin{proof}[Proof of \cref{prop:samplepaths}]
Let \(\Omega_C\) be a probability-one event on which \(f_0^\ast(\cdot,\omega)\in C^r([t_0,t_1])\). Define pathwise candidate versions, with arbitrary values assigned off \(\Omega_C\) for negative indices, by
\begin{align*}
\tilde f_p(t)=
\begin{cases}
(\mathcal A_{t_0}^{[p]}f_0^\ast)(t), & p\le 0,\\
(\mathcal A_{t_0}^{[p]}f_0^\ast)(t)+\displaystyle\sum_{j=1}^{p}\frac{(t-t_0)^{p-j}}{(p-j)!}\kappa_j, & p>0.
\end{cases}
\end{align*}
First identify the negative-index versions. Fix \(k=1,\ldots,r\) and \(t\in[t_0,t_1]\). Under \cref{ass:ms-regularity}, \(f_{-k}(t)\) is the \(L^2\) limit of \(k\)th difference quotients of \(f_0\), using one-sided quotients at endpoints. Replacing \(f_0\) by its version \(f_0^\ast\) does not change these quotients almost surely for each fixed step. On \(\Omega_C\), the same quotients converge to \(D_t^k f_0^\ast(t)\). Uniqueness of limits in probability gives \(D_t^k f_0^\ast(t)=f_{-k}(t)\) almost surely. Hence \(D_t^k f_0^\ast\) is a version of \(f_{-k}\). For \(p=0\), \(\tilde f_0=f_0^\ast\) is a version of \(f_0\). For \(p=-k<0\), the preceding argument shows that \(D_t^k f_0^\ast=(\mathcal A_{t_0}^{[-k]}f_0^\ast)\) is a version of \(f_{-k}\).

It remains to identify the positive-index candidates with the \(L^2\)-defined processes in \cref{prop:gaussian-ensemble}. Fix \(p>0\) and \(t\in[t_0,t_1]\), and set
\begin{align*}
K_{p,t}(u):=\frac{(t-u)^{p-1}}{(p-1)!}\mathbf 1_{[t_0,t]}(u).
\end{align*}
Let
\begin{align*}
J_t:=\int_{t_0}^{t_1}K_{p,t}(u)f_0(u)\,du,
\qquad
J_t^\ast:=\int_{t_0}^{t_1}K_{p,t}(u)f_0^\ast(u)\,du.
\end{align*}
The joint-measurability and integrability assumptions make these samplewise integrals well defined. Since \(K_{p,t}\) is bounded and \(\sup_{u\in[t_0,t_1]}\E[f_0(u)^2]<\infty\) under \cref{ass:ms-regularity}, \(J_t\in L^2(\Omega)\). For every \(Z\in L^2(\Omega)\), Fubini's theorem and Cauchy--Schwarz yield
\begin{align*}
\E[ZJ_t]
&=\int_{t_0}^{t_1}K_{p,t}(u)\E[Zf_0(u)]\,du.
\end{align*}
Thus the samplewise integral \(J_t\) represents the weak \(L^2\)-integral \(\bigl(\mathcal I_{t_0}^{p}f_0\bigr)(t)\) used in \cref{prop:gaussian-ensemble}.

Because \(f_0^\ast\) is a version of \(f_0\), Tonelli's theorem gives
\begin{align*}
\E|J_t^\ast-J_t|
&\le \E\int_{t_0}^{t_1}|K_{p,t}(u)|\,|f_0^\ast(u)-f_0(u)|\,du\\
&=\int_{t_0}^{t_1}|K_{p,t}(u)|\,\E|f_0^\ast(u)-f_0(u)|\,du
=0.
\end{align*}
Hence \(J_t^\ast=J_t\) almost surely. Adding the same polynomial in \(\kappa_1,\ldots,\kappa_p\) shows that, for this fixed \(t\), \(\tilde f_p(t)\) equals the \(f_p(t)\) constructed in \cref{prop:gaussian-ensemble} almost surely. Therefore each \(\tilde f_p\) is a version of \(f_p\).

Replace \(f_p\) by \(\tilde f_p\) for all \(p=-r,\ldots,r\) and drop the tildes. On the full-probability event \(\Omega_\ast:=\Omega_C\), the displayed pathwise representation in the proposition holds for every \(p\) and every \(t\in[t_0,t_1]\) by construction. It remains only to check the pathwise derivative chain on this event. If \(p\le0\) and \(p\ge-r+1\), ordinary differentiation of the \(C^r\) path \(f_0^\ast(\cdot,\omega)\) gives
\[
\frac{d}{dt}(\mathcal A_{t_0}^{[p]}f_0^\ast)(t,\omega)=(\mathcal A_{t_0}^{[p-1]}f_0^\ast)(t,\omega).
\]
If \(p>0\), the fundamental theorem of calculus gives
\[
\frac{d}{dt}(\mathcal A_{t_0}^{[p]}f_0^\ast)(t,\omega)=(\mathcal A_{t_0}^{[p-1]}f_0^\ast)(t,\omega),
\]
where \(p=1\) gives \(f_0^\ast(t,\omega)\). Differentiating the polynomial boundary term gives
\[
\frac{d}{dt}\sum_{j=1}^{p}\frac{(t-t_0)^{p-j}}{(p-j)!}\kappa_j(\omega)
=\mathbf 1_{\{p>1\}}\sum_{j=1}^{p-1}\frac{(t-t_0)^{p-1-j}}{(p-1-j)!}\kappa_j(\omega),
\]
which is the boundary term in \(f_{p-1}\). At \(t=t_0\), the integral part vanishes and the same polynomial term gives \(f_p(t_0,\omega)=\kappa_p(\omega)\) for every \(p=1,\ldots,r\). Combining the operator and polynomial identities proves \(\frac{d}{dt}f_p(t,\omega)=f_{p-1}(t,\omega)\) for every \(p=-r+1,\ldots,r\) and \(t\in(t_0,t_1)\).
\end{proof}

\begin{proof}[Proof of \cref{prop:integrals}]
The derivative identity follows by differentiating the sine function \(a\) times:
\[
\frac{d^a}{dt^a}\sin\{\omega_{k,L}(t+L)\}
=
\omega_{k,L}^{a}\sin\{\omega_{k,L}(t+L)+a\pi/2\}.
\]
Multiplying by \(L^{-1/2}\) gives the displayed formula for signed orders \(m=-a\le0\), including \(m=0\).

For positive signed orders, direct integration gives
\begin{align*}
(\mathcal I_{t_0}^{1}\phi_{k,L})(t)
&=
L^{-1/2}\int_{t_0}^t \sin\{\omega_{k,L}(s+L)\}\,\mathrm ds\\
&=
\frac{L^{-1/2}}{\omega_{k,L}}
\left[
\cos\{\omega_{k,L}(t_0+L)\}-\cos\{\omega_{k,L}(t+L)\}
\right],
\end{align*}
since \(L>0\) and \(k\in\mathbb N\) imply \(\omega_{k,L}\neq0\).

It remains to prove the recursion for \(q\ge2\). Write \(f=\phi_{k,L}\) and \(\omega=\omega_{k,L}\). Then \(f''=-\omega^2 f\), and Cauchy's formula for repeated integration gives
\[
(\mathcal I_{t_0}^q f'')(t)
=
\frac{1}{(q-1)!}\int_{t_0}^t (t-s)^{q-1}f''(s)\,\mathrm ds.
\]
Two integrations by parts yield
\begin{align*}
(\mathcal I_{t_0}^q f'')(t)
&=
-\frac{(t-t_0)^{q-1}}{(q-1)!}f'(t_0)
+
\frac{1}{(q-2)!}\int_{t_0}^t (t-s)^{q-2}f'(s)\,\mathrm ds\\
&=
(\mathcal I_{t_0}^{q-2}f)(t)
-
\frac{(t-t_0)^{q-2}}{(q-2)!}f(t_0)
-
\frac{(t-t_0)^{q-1}}{(q-1)!}f'(t_0),
\end{align*}
where the second equality also covers \(q=2\), with \(\mathcal I_{t_0}^0\) interpreted as the identity. Since \(f''=-\omega^2 f\),
\[
(\mathcal I_{t_0}^q f)(t)
=
-\omega^{-2}(\mathcal I_{t_0}^{q-2}f)(t)
+
\frac{(t-t_0)^{q-2}}{(q-2)!\,\omega^2}f(t_0)
+
\frac{(t-t_0)^{q-1}}{(q-1)!\,\omega^2}f'(t_0).
\]
Finally,
\[
f(t_0)=L^{-1/2}\sin\{\omega_{k,L}(t_0+L)\},
\qquad
f'(t_0)=L^{-1/2}\omega_{k,L}\cos\{\omega_{k,L}(t_0+L)\}.
\]
Substituting these two endpoint values and returning to \(f=\phi_{k,L}\), \(\omega=\omega_{k,L}\), and \(\psi_k^{[q]}=\mathcal A_{t_0}^{[q]}\phi_{k,L}\) gives the displayed recursion for every \(q\ge2\).
\end{proof}

\begin{proof}[Proof of \cref{prop:spectral-approximation-bounds}]
We prove a common estimate for the anchor case and for the differentiated case. Fix nonnegative integers $a,b$ and put $h:=\frac{\pi}{2L}$ and $\omega_{k,L}=kh$ with $a=b=0$ and $g=S_\theta$ for the anchor-kernel bound, and with $g(\omega)=\omega^{a+b}S_\theta(\omega)$ and $\omega>0$ for the derivative-block bound. In the latter case, the required boundedness, differentiability, and integrability properties of $g$ are exactly those in \cref{ass:subsection-weighted-spectral}; in the anchor case they are the explicit assumptions on $S_\theta$. Since $S_\theta$ is the spectral density of a real stationary covariance function, it is even and nonnegative, and hence so is the weight $g$ on $(0,\infty)$. All constants below may depend on \(g\), \((a,b)\), and the fixed envelope \(\Lambda\), but not on \(K\), \(L\), \(s\), or \(t\).

The assumptions imply that $g$ has a continuous one-sided extension at zero and is absolutely continuous on compact subintervals of $[0,\infty)$. Moreover,
\[
    \|g\|_\infty<\infty,\qquad
    \int_0^\infty g(\omega)\,\mathrm d\omega<\infty,
    \qquad
    \int_0^\infty |g'(\omega)|\,\mathrm d\omega<\infty .
\]
Under the Fourier inversion representation of the derivative block, dominated convergence gives
\[
D_1^aD_2^b C_\theta(s,t)
=
\frac{1}{2\pi}\int_{\mathbb R}
(\mathrm i\omega)^a(-\mathrm i\omega)^b e^{\mathrm i\omega(s-t)}S_\theta(\omega)\,\mathrm d\omega.
\]
Using the evenness of $S_\theta$, this becomes
\[
F_{a,b}(s,t)
:=
\frac{1}{\pi}\int_0^\infty
    g(\omega)
    \cos\!\left(\omega(s-t)+\frac{(a-b)\pi}{2}\right)
\,\mathrm d\omega .
\]
For $a=b=0$ this is simply the Fourier representation of $C_\theta(s,t)$. Thus \(F_{a,b}\) is the exact derivative covariance block represented in half-line cosine form.

Since the series defining $\widetilde C_{\theta,L,K}$ is finite, termwise differentiation is immediate. By \cref{prop:integrals},
\[
D_t^a\phi_{k,L}(t)
=
L^{-1/2}\omega_{k,L}^a
\sin\!\left(\omega_{k,L}(t+L)+\frac{a\pi}{2}\right),
\]
and the product-to-sum identity gives
\[
D_1^aD_2^b\widetilde C_{\theta,L,K}(s,t)
=
A_{L,K}^{[a,b]}(s,t)-B_{L,K}^{[a,b]}(s,t),
\]
where
\[
A_{L,K}^{[a,b]}(s,t)
:=
\frac{h}{\pi}\sum_{k=1}^K
    g(kh)
    \cos\!\left(kh(s-t)+\frac{(a-b)\pi}{2}\right)
\]
and, since $2Lh=\pi$,
\[
B_{L,K}^{[a,b]}(s,t)
:=
\frac{h}{\pi}\sum_{k=1}^K
    (-1)^k g(kh)
    \cos\!\left(kh(s+t)+\frac{(a+b)\pi}{2}\right).
\]

It remains to bound the Riemann sum error in $A_{L,K}^{[a,b]}$ and the reflected term $B_{L,K}^{[a,b]}$. The fixed compact-domain envelope is used twice: the main Riemann term is controlled through \(|s-t|\), while the reflected alternating term is controlled through \(|s+t|\). Let $\tau=s-t$ and define
\[
    r_\tau(\omega)
    :=g(\omega)\cos\!\left(\omega\tau+\frac{(a-b)\pi}{2}\right).
\]
Because $(s,t)\in[t_0,t_1]^2\subset[-\Lambda,\Lambda]^2$, we have $|\tau|\le 2\Lambda$, and hence
\[
\int_0^\infty |r_\tau'(\omega)|\,\mathrm d\omega
\le
\int_0^\infty |g'(\omega)|\,\mathrm d\omega
+2\Lambda\int_0^\infty g(\omega)\,\mathrm d\omega
=:M_{g,\Lambda}<\infty .
\]
For each $K$,
\begin{align*}
\left|F_{a,b}(s,t)-A_{L,K}^{[a,b]}(s,t)\right|
&\le
\frac{1}{\pi}
\left|\int_0^{Kh} r_\tau(\omega)\,\mathrm d\omega
      -h\sum_{k=1}^K r_\tau(kh)\right|
+
\frac{1}{\pi}\int_{Kh}^\infty g(\omega)\,\mathrm d\omega  \\
&\le
\frac{hM_{g,\Lambda}}{\pi}
+
\frac{1}{\pi}\int_{Kh}^\infty g(\omega)\,\mathrm d\omega .
\end{align*}
The second inequality is the elementary right endpoint Riemann sum bound obtained by applying absolute continuity on each interval $[(k-1)h,kh]$.

For the reflected term, set $x=s+t$ and
\[
    q_x(\omega)
    :=g(\omega)\cos\!\left(\omega x+\frac{(a+b)\pi}{2}\right).
\]
Again $|x|\le 2\Lambda$, so $\int_0^\infty |q_x'|\le M_{g,\Lambda}$ and $\|q_x\|_\infty\le\|g\|_\infty$. Pair consecutive terms in the alternating sum, with \(N=\lfloor K/2\rfloor\). If \(K\) is even, all terms are paired. If \(K\) is odd, one endpoint remains and is bounded by \(\|g\|_\infty\). Hence
\[
\left|\sum_{k=1}^K(-1)^k q_x(kh)\right|
\le
\sum_{j=1}^N |q_x(2jh)-q_x((2j-1)h)|+\|g\|_\infty
\le
M_{g,\Lambda}+\|g\|_\infty .
\]
Therefore
\[
\sup_{(s,t)\in[t_0,t_1]^2}
\left|B_{L,K}^{[a,b]}(s,t)\right|
\le
\frac{h}{\pi}(M_{g,\Lambda}+\|g\|_\infty)
\le
\frac{C_{g,\Lambda}}{L},
\]
where $C_{g,\Lambda}<\infty$ is independent of $K$, $L$, $s$, and $t$.

Combining the two estimates and using $Kh=\pi K/(2L)$ gives
\[
\sup_{(s,t)\in[t_0,t_1]^2}
\left|F_{a,b}(s,t)-D_1^aD_2^b\widetilde C_{\theta,L,K}(s,t)\right|
\le
\frac{E_{a,b}}{L}
+
\frac{1}{\pi}\int_{\pi K/(2L)}^\infty g(\omega)\,\mathrm d\omega,
\]
for a finite constant $E_{a,b}$ that may depend on \(g\), \((a,b)\), and \(\Lambda\), but not on $K$, $L$, $s$, and $t$. The reflected-term bound is absorbed into the \(E_{a,b}/L\) term, while the remaining half-line spectral tail has coefficient \(1/\pi\). Taking $a=b=0$ gives the anchor-kernel bound, with $E=E_{0,0}$ and $g=S_\theta$. For a fixed pair $(a,b)$ satisfying \cref{ass:subsection-weighted-spectral}, the same display gives the derivative-block bound.

Finally, if $K,L\to\infty$ and $K/L\to\infty$, then $L^{-1}\to0$ and the lower limit $\pi K/(2L)$ in the tail integral tends to infinity. Since the relevant weight $g$ is integrable, the tail integral tends to zero. This proves both uniform convergence statements.
\end{proof}

\begin{proof}[Proof of \cref{prop:finite-grid-posterior-convergence}]
The \(Y,Y\) block is positive definite because it includes \(\sigma^2 I_n\) with \(\sigma^2>0\). In particular, \(\det(\Sigma_{YY})\neq0\). Since \(\widetilde\Sigma_{YY}^{(\ell)}\to\Sigma_{YY}\), continuity of the determinant implies that \(\det(\widetilde\Sigma_{YY}^{(\ell)})\neq0\) for all sufficiently large \(\ell\), and continuity of matrix inversion gives \((\widetilde\Sigma_{YY}^{(\ell)})^{-1}\to\Sigma_{YY}^{-1}\). Applying the finite-dimensional Gaussian conditioning formulas and using continuity of addition and matrix multiplication gives the stated limits.
\end{proof}

%% file: tables/tartare_calibration_constants.tex
\begingroup
\footnotesize
\setlength{\tabcolsep}{3pt}
\begin{longtable}{p{0.12\textwidth}cp{0.36\textwidth}rr}
\caption{Completed TARTARE calibration constants for the active default entries.}\label{tab:tartare-calibration-constants}\\
Kernel & \(r\) & \(\mathcal M\) & \(m\) & \(c_{\mathcal M}\)\\
\midrule
\endfirsthead
\caption[]{Completed TARTARE calibration constants for the active default entries (continued).}\\
Kernel & \(r\) & \(\mathcal M\) & \(m\) & \(c_{\mathcal M}\)\\
\midrule
\endhead
\midrule
\multicolumn{5}{r}{\emph{Continued on next page}}\\
\endfoot
\endlastfoot
SE & 0 & \(\{0\}\) & 2.050 & 6.90\\
SE & 1 & \(\{-1\}\) & 2.680 & 8.00\\
SE & 1 & \(\{-1,0,1\}\) & 2.680 & 8.30\\
SE & 2 & \(\{-2\}\) & 3.091 & 8.30\\
SE & 2 & \(\{-2,-1,0,1,2\}\) & 3.091 & 8.30\\
SE & 3 & \(\{-3\}\) & 3.420 & 8.00\\
SE & 3 & \(\{-3,-2,-1,0,1,2,3\}\) & 3.420 & 8.30\\
SE & 4 & \(\{-4\}\) & 3.704 & 6.90\\
SE & 4 & \(\{-4,-3,-2,-1,0,1,2,3,4\}\) & 3.704 & 8.30\\
\midrule
Mat3/2 & 0 & \(\{0\}\) & 4.648 & 8.60\\
\midrule
Mat5/2 & 0 & \(\{0\}\) & 3.209 & 8.00\\
Mat5/2 & 1 & \(\{-1\}\) & 9.658 & 8.30\\
Mat5/2 & 1 & \(\{-1,0,1\}\) & 9.658 & 9.70\\
\midrule
Mat7/2 & 0 & \(\{0\}\) & 2.785 & 7.70\\
Mat7/2 & 1 & \(\{-1\}\) & 5.664 & 8.30\\
Mat7/2 & 1 & \(\{-1,0,1\}\) & 5.664 & 9.30\\
Mat7/2 & 2 & \(\{-2\}\) & 14.444 & 6.70\\
Mat7/2 & 2 & \(\{-2,-1,0,1,2\}\) & 14.444 & 9.30\\
\midrule
Mat9/2 & 0 & \(\{0\}\) & 2.586 & 7.40\\
Mat9/2 & 1 & \(\{-1\}\) & 4.543 & 8.30\\
Mat9/2 & 1 & \(\{-1,0,1\}\) & 4.543 & 9.00\\
Mat9/2 & 2 & \(\{-2\}\) & 7.906 & 7.20\\
Mat9/2 & 2 & \(\{-2,-1,0,1,2\}\) & 7.906 & 9.00\\
Mat9/2 & 3 & \(\{-3\}\) & 19.183 & 4.30\\
Mat9/2 & 3 & \(\{-3,-2,-1,0,1,2,3\}\) & 19.183 & 9.00\\
\midrule
Mat11/2 & 0 & \(\{0\}\) & 2.472 & 7.40\\
Mat11/2 & 1 & \(\{-1\}\) & 4.029 & 8.30\\
Mat11/2 & 1 & \(\{-1,0,1\}\) & 4.029 & 9.00\\
Mat11/2 & 2 & \(\{-2\}\) & 6.093 & 7.40\\
Mat11/2 & 2 & \(\{-2,-1,0,1,2\}\) & 6.093 & 9.00\\
Mat11/2 & 3 & \(\{-3\}\) & 10.098 & 4.90\\
Mat11/2 & 3 & \(\{-3,-2,-1,0,1,2,3\}\) & 10.098 & 9.00\\
Mat11/2 & 4 & \(\{-4\}\) & 23.905 & 3.50\\
Mat11/2 & 4 & \(\{-4,-3,-2,-1,0,1,2,3,4\}\) & 23.905 & 9.00\\
\end{longtable}
\par\smallskip
\noindent\footnotesize\emph{Note.} The monitoring set \(\mathcal M\subseteq\{-r,\ldots,r\}\) lists ensemble levels \(f_p\): \(f_0\) is the anchor process, \(f_{-j}\) is the \(j\)th derivative level, and \(f_j\) is the \(j\)-fold anchored integral level for \(j>0\). Rows with positive monitored indices use the left-boundary normalized integration anchor. All rows use \(\varepsilon_{\mathrm{tail}}=0.01\) and \(u\in[0.05,1]\). The shared acceptance tolerances are \(\tau_{\mathrm{joint}}=0.01\), \(\tau_{\mathrm{pair}}=0.02\), \(\tau_{\mathrm{var}}=0.02\), and, for rows with positive monitored indices, \(\tau_{t_0}=0.02\). Displayed constants are rounded.
\par
\endgroup

%% file: tables/supplement_tartare_d2_vs_full_diagnostics.tex
\begin{tabular}{llrrrrrrrrrrrr}
 & & & \multicolumn{2}{c}{Completed} & \multicolumn{3}{c}{$E_{\mathrm{joint}}$} & \multicolumn{3}{c}{$E_{\mathrm{pair,max}}$} & \multicolumn{3}{c}{$E_{\mathrm{var,max}}$} \\
\cmidrule(lr){4-5}\cmidrule(lr){6-8}\cmidrule(lr){9-11}\cmidrule(lr){12-14}
$\rho$ & Kernel & $n$ & D2 & Full & D2 & Full & Change & D2 & Full & Change & D2 & Full & Change \\
\midrule
0.65 & SE & 25 & 967 & 966 & 0.160 & 0.160 & 0.2\% & 0.515 & 0.514 & 0.1\% & 0.361 & 0.361 & 0.2\% \\
0.65 & SE & 50 & 989 & 988 & 0.155 & 0.155 & 0.1\% & 0.540 & 0.539 & 0.2\% & 0.281 & 0.282 & 0.0\% \\
0.65 & Mat7/2 & 25 & 963 & 956 & 0.177 & 0.176 & 0.6\% & 0.916 & 0.911 & 0.5\% & 0.372 & 0.374 & -0.3\% \\
0.65 & Mat7/2 & 50 & 989 & 992 & 0.172 & 0.173 & -0.4\% & 1.098 & 1.089 & 0.8\% & 0.326 & 0.325 & 0.3\% \\
0.35 & SE & 25 & 996 & 996 & 0.169 & 0.170 & -0.3\% & 0.705 & 0.707 & -0.3\% & 0.285 & 0.285 & -0.1\% \\
0.35 & SE & 50 & 1000 & 1000 & 0.164 & 0.165 & -0.2\% & 0.824 & 0.827 & -0.4\% & 0.258 & 0.260 & -0.9\% \\
0.35 & Mat7/2 & 25 & 989 & 987 & 0.188 & 0.190 & -1.0\% & 1.177 & 1.174 & 0.2\% & 0.335 & 0.330 & 1.3\% \\
0.35 & Mat7/2 & 50 & 996 & 1000 & 0.181 & 0.181 & 0.3\% & 1.599 & 1.596 & 0.2\% & 0.292 & 0.295 & -0.9\% \\
\end{tabular}

%% file: tables/boundary_constant_kappa_summary.tex
\begin{tabular}{llrrrr}
Condition & Constant & Mean & SD & Coverage & SD/prior SD \\
\midrule
Known constants & \(\kappa_1\) & 0.600 & 0.000 & 1.000 & -- \\
Known constants & \(\kappa_2\) & -0.400 & 0.000 & 1.000 & -- \\
Fixed wrong constants & \(\kappa_1\) & 0.000 & 0.000 & 0.000 & -- \\
Fixed wrong constants & \(\kappa_2\) & 0.000 & 0.000 & 0.000 & -- \\
Independent prior & \(\kappa_1\) & 0.000 & 0.463 & 0.500 & 1.000 \\
Independent prior & \(\kappa_2\) & 0.000 & 0.463 & 0.750 & 1.000 \\
Direct boundary observation & \(\kappa_1\) & 0.501 & 0.086 & 0.746 & 0.302 \\
Direct boundary observation & \(\kappa_2\) & -0.335 & 0.086 & 0.805 & 0.302 \\
\end{tabular}

%% file: tables/boundary_constant_variance_decomposition.tex
\begin{tabular}{llrrrr}
Condition & Level & Canonical variance \((10^3)\) & Constant variance \((10^3)\) & Total variance \((10^3)\) & Constant share \\
\midrule
Known constants & \(I_1\) & 0.589 & 0.000 & 0.589 & 0.000 \\
Known constants & \(I_2\) & 0.397 & 0.000 & 0.397 & 0.000 \\
Fixed wrong constants & \(I_1\) & 0.589 & 0.000 & 0.589 & 0.000 \\
Fixed wrong constants & \(I_2\) & 0.397 & 0.000 & 0.397 & 0.000 \\
Independent prior & \(I_1\) & 0.589 & 330.625 & 331.214 & 0.983 \\
Independent prior & \(I_2\) & 0.397 & 771.480 & 771.877 & 0.995 \\
Direct boundary observation & \(I_1\) & 0.589 & 12.049 & 12.638 & 0.781 \\
Direct boundary observation & \(I_2\) & 0.397 & 28.115 & 28.512 & 0.902 \\
\end{tabular}

%% file: tables/boundary_constant_integral_error_summary.tex
\begin{tabular}{llrrrr}
Condition & Level & IRMSE & Mean absolute error & Coverage & Interval width \\
\midrule
Known constants & \(I_1\) & 0.0209 & 0.0178 & 0.937 & 0.087 \\
Known constants & \(I_2\) & 0.0161 & 0.0125 & 0.934 & 0.061 \\
Fixed wrong constants & \(I_1\) & 0.6003 & 0.6001 & 0.000 & 0.087 \\
Fixed wrong constants & \(I_2\) & 0.4003 & 0.3336 & 0.024 & 0.061 \\
Independent prior & \(I_1\) & 0.6003 & 0.6001 & 0.500 & 1.818 \\
Independent prior & \(I_2\) & 0.4003 & 0.3336 & 0.855 & 2.683 \\
Direct boundary observation & \(I_1\) & 0.1352 & 0.1337 & 0.761 & 0.357 \\
Direct boundary observation & \(I_2\) & 0.1352 & 0.1193 & 0.896 & 0.505 \\
\end{tabular}

%% file: tables/boundary_constant_location_summary.tex
\begin{tabular}{lllrr}
Condition & Level & Location & RMSE & Coverage \\
\midrule
Known constants & \(I_1\) & Anchor & 0.000 & 0.000 \\
Known constants & \(I_1\) & Midpoint & 0.024 & 0.944 \\
Known constants & \(I_1\) & Far endpoint & 0.035 & 0.948 \\
Known constants & \(I_2\) & Anchor & 0.000 & 0.000 \\
Known constants & \(I_2\) & Midpoint & 0.014 & 0.948 \\
Known constants & \(I_2\) & Far endpoint & 0.040 & 0.944 \\
Fixed wrong constants & \(I_1\) & Anchor & 0.600 & 0.000 \\
Fixed wrong constants & \(I_1\) & Midpoint & 0.601 & 0.000 \\
Fixed wrong constants & \(I_1\) & Far endpoint & 0.601 & 0.000 \\
Fixed wrong constants & \(I_2\) & Anchor & 0.400 & 0.000 \\
Fixed wrong constants & \(I_2\) & Midpoint & 0.201 & 0.000 \\
Fixed wrong constants & \(I_2\) & Far endpoint & 0.801 & 0.000 \\
Independent prior & \(I_1\) & Anchor & 0.600 & 0.500 \\
Independent prior & \(I_1\) & Midpoint & 0.601 & 0.501 \\
Independent prior & \(I_1\) & Far endpoint & 0.601 & 0.501 \\
Independent prior & \(I_2\) & Anchor & 0.400 & 0.750 \\
Independent prior & \(I_2\) & Midpoint & 0.201 & 1.000 \\
Independent prior & \(I_2\) & Far endpoint & 0.801 & 0.750 \\
Direct boundary observation & \(I_1\) & Anchor & 0.145 & 0.746 \\
Direct boundary observation & \(I_1\) & Midpoint & 0.150 & 0.762 \\
Direct boundary observation & \(I_1\) & Far endpoint & 0.154 & 0.771 \\
Direct boundary observation & \(I_2\) & Anchor & 0.116 & 0.805 \\
Direct boundary observation & \(I_2\) & Midpoint & 0.121 & 0.950 \\
Direct boundary observation & \(I_2\) & Far endpoint & 0.255 & 0.826 \\
\end{tabular}

%% file: tables/motorcycle_covariance_summary.tex
\begin{table}

\caption{\label{tab:motorcycle-covariance-summary}Posterior summary for motorcycle covariance parameters from the Stan fit.}
\centering
\begin{tabular}[t]{lrrrrrrrr}
\toprule
Parameter & Mean & Median & SD & 5\% & 95\% & R-hat & ESS (bulk) & ESS (tail)\\
\midrule
$\alpha_{\mu}$ & 46.976 & 44.168 & 13.750 & 30.716 & 72.976 & 1.001 & 1659.614 & 3336.432\\
$\rho_{\mu}$ & 6.300 & 6.188 & 1.141 & 4.613 & 8.293 & 1.001 & 2028.439 & 3317.832\\
$\sigma_{\mu}$ & 46.257 & 40.735 & 26.124 & 15.791 & 95.301 & 1.001 & 4077.915 & 4188.476\\
$\alpha_h$ & 1.998 & 1.893 & 0.569 & 1.313 & 3.022 & 1.001 & 3172.925 & 4510.824\\
$\rho_h$ & 10.520 & 10.145 & 2.733 & 6.895 & 15.273 & 1.004 & 1694.838 & 2025.953\\
\addlinespace
$\sigma_h$ & 1.599 & 1.413 & 0.857 & 0.581 & 3.181 & 1.000 & 5196.910 & 5044.803\\
\bottomrule
\end{tabular}
\end{table}